\newtheorem{theorem}{Theorem}[section]
\newtheorem{lemma}[theorem]{Lemma}
\newtheorem{proposition}[theorem]{Proposition}
\newtheorem{corollary}[theorem]{Corollary}
\newtheorem{remark}[theorem]{Remark}
\newtheorem{assumption}[theorem]{Assumption}
\theoremstyle{definition}
\newtheorem{example}[theorem]{Example}
\newtheorem{definition}[theorem]{Definition}
\numberwithin{equation}{section}
\newcommand{\upperRomannumeral}[1]{\uppercase\expandafter{\romannumeral#1}}
\newcommand{\lowerRomannumeral}[1]{\lowercase\expandafter{\romannumeral#1}}
\newcommand{\R}{{\mathbb R}}
\title{Simulating integrated Volterra square-root processes and Volterra Heston models via Inverse Gaussian}
\author{Eduardo Abi Jaber\thanks{eduardo.abi-jaber@polytechnique.edu. I am grateful for the financial support from the Chaires FiME-FDD, Financial Risks, Deep Finance \& Statistics and Machine Learning and systematic methods in finance at Ecole Polytechnique.\\
We would like to thank  Shaun Li and Dimitri Sotnikov for their valuable feedback.} }
\author{Elie Attal} 
\affil{Ecole Polytechnique, CMAP}
\begin{document}

\maketitle

\begin{abstract}
We introduce a novel simulation scheme, iVi (integrated Volterra implicit), for integrated Volterra square-root processes and Volterra Heston models based on the Inverse Gaussian distribution. The scheme is designed to handle $L^1$ kernels with singularities by relying solely on integrated kernel quantities, and it preserves the non-decreasing property of the integrated process. We establish weak convergence of the iVi scheme by reformulating it as a stochastic Volterra equation with a measure kernel and proving a stability result for this class of equations. Numerical results demonstrate that convergence is achieved with very few time steps. Remarkably, for the rough fractional kernel, unlike existing schemes, convergence seems to improve as the Hurst index $H$ decreases and approaches $-1/2$.
\end{abstract}

\section*{Introduction}

The Volterra square-root process $V$, defined by  
\begin{align}\label{eq:V}  
V_t = g_0(t) + b \int_0^t K(t-s) V_s \,ds + c\int_0^t K(t-s) \sqrt{V_s} \,dW_s,  
\end{align}  
where $g_0$ is a suitable deterministic curve, {$b \leq 0$\,, $c \geq 0\,$}, and $K$ is a {nonnegative} locally square-integrable kernel, is a ubiquitous process that appears in diverse contexts. In particular, it arises as a scaling limit of branching processes in population genetics \citep*{dawson1994super, mytnik2015uniqueness} and of self-exciting Hawkes processes in mathematical finance \citep*{jaisson2016rough, el2019characteristic}. 

However, simulating $V$ is intricate due to several factors: the process is in general non-Markovian, not necessarily a semimartingale and involves a square-root nonlinearity. An additional challenge arises when $K$ is singular at the origin, as is the case for the fractional kernel 
\begin{align}\label{eq:fractionalkernel}
 K_H(t) = t^{H-1/2}, \quad 
\end{align}
with $H \in (0,1/2)$. This singularity requires special care in numerical implementations to ensure accuracy and stability.  

We design a simple and efficient simulation scheme for the integrated Volterra  square-root process, leveraging the affine structure, which overcomes all of these difficulties while maintaining numerical robustness with very little time steps.     

A key financial motivation for studying the Volterra square-root process is its role in modeling the variance process of a stock price in the class of so-called Volterra Heston models \cite*[Section 7]{abijaber2019affine}, where the stock price $S$ follows  
\begin{align*}  
dS_t = S_t \sqrt{V_t} \, dB_t\,,  
\end{align*}  
with $B$ being a Brownian motion correlated with $W$.   The kernel $K$ introduces flexibility and encompasses various popular models: the case $K \equiv 1$ yields the celebrated  \cite*{heston1993closed} model, the fractional kernel $K_H$ as in \eqref{eq:fractionalkernel} with a Hurst index $H \in (0,1/2)$ leads to the rough Heston model of \cite*{el2019characteristic}, and  
   the weighted sum of exponentials $K(t) = \sum_{i=1}^N c_i e^{-x_i t}$, $c_i,x_i\geq 0$, 
    $N \in \mathbb N$, gives the  lifted Heston model of \cite*{abi2019lifting}.  An important feature of this class of models is its affine structure which ensures that the characteristic function of the log price is known up to a deterministic    Riccati Volterra equation, enabling pricing and hedging via Fourier inversion techniques.

Our core idea is to simulate the non-decreasing integrated process   
$U_{0,t} = \int_0^t V_s ds$, 
instead of the process $V$ itself, in the spirit of the {integrated Variance implicit} (iVi) scheme of \cite*{abijaber2024simulation}  recently introduced  for the case  $b=0$ and $K(t) = e^{\lambda t}$. Integrating the dynamics \eqref{eq:V} and applying the stochastic Fubini theorem yields the dynamics  
\begin{align}\label{eq:U0}  
U_{0,t} = \int_0^t g_0(s) \, ds + \int_0^t K(t-s)\left(b U_{0,s}  +  c  Z_{0,s} \right) ds,  
\end{align}  
where $Z_{0,\cdot} := \int_0^{\cdot}\, \sqrt{V_s}\,dW_s$ is a continuous martingale with quadratic variation $U$, so that by the Dambis, Dubins-Schwarz theorem we can write it as a time changed Brownian motion $Z = \widetilde W_U$.   

One major advantage of this formulation is that the equation remains valid for kernels $K$ that are only locally integrable, but not necessarily locally square-integrable, see \cite{abi2021weak}. For instance, it holds for the fractional kernel  $K_H$ in \eqref{eq:fractionalkernel},  
even for negative Hurst indices $H \in (-1/2, 1/2]$, see \cite*{abi2021weak, jusselin2020no}.   When $H \in (-1/2,0]$, the kernel $K_H$ is no longer in $L^2$ but remains in $L^1$, meaning that $U$ is still well-defined. However, in this regime, $U$ is no longer absolutely continuous with respect to the Lebesgue measure, and the instantaneous variance process $V$ itself no longer makes sense.  

In the sequel, we develop a simple numerical scheme to simulate $U$ in \eqref{eq:U0}  for any kernel $K \in L^1([0,T])$, without attempting to reconstruct $V$. A key challenge is to construct a scheme that preserves the non-decreasing nature of $U$.  

\textbf{Main contributions.}  Our approach builds upon the increments of $U$ and the corresponding time-changed Brownian motion $\widetilde W_U$, defined as  
\begin{align}\label{eq:UZ}  
U_{s,t} :=  U_{0,t} - U_{0,s}, \quad Z_{s,t} := \widetilde W_{U_{0,t}}-\widetilde W_{U_{0,s}} = \int_s^t d \widetilde W_{U_{0,r}}, \quad s \leq t. 
\end{align}  
These quantities capture the essential information required for many financial applications and are critical in the simulation of other processes, such as the Volterra Heston model.

We already have all the ingredients needed to present the discretization scheme we propose, which emphasizes the simplicity of the approach.  The following algorithm recursively constructs the increments $ (\widehat{U}_{i, i+1})_{i=0, \ldots, n-1} $, and $ (\widehat{Z}_{i, i+1})_{i=0, \ldots, n-1} $. We also coin it the \textbf{iVi} scheme for \textbf{i}ntegrated \textbf{$\bold V$} (\textbf{V}ariance or \textbf{V}olterra) \textbf{i}mplicit scheme as it collapses with the original iVi scheme of \cite{abijaber2024simulation} when $b=0$ and $K(t) = e^{\lambda t}$.

\begin{algorithm}[H]
\caption{ \textbf{- The iVi scheme:} Simulation of \( \widehat U, \widehat Z \)}\label{alg:simulation}
\begin{algorithmic}[1]
\State \textbf{Input:}  parameters {$b\,, c\,,$} \( g_0, K \), uniform partition of $[0,T]$, \( t_i = i T/n  \), with $i=0,\ldots, n$.
\State \textbf{Output:} \(  \widehat U_{i, i+1}, \widehat Z_{i, i+1} \) for \( i = 0, \ldots, n-1 \).
\State Compute
\begin{align}\label{eq:kij}
k_{\ell} = \int_0^{\frac T n} K\left(  \frac{\ell T}{n} + s\right) \, ds, \quad {\ell=0,\ldots, n-1}.
\end{align}
\For{$i = 0$ to $n-1$}

\State Compute the quantity:
\begin{align}\label{eq:alphai}
    \alpha_{i}  = \int_{t_i}^{t_{i+1}} g_0(s) ds + \sum_{j=0}^{i-1}   k_{i-j}\left(b  \widehat U_{j,j+1} +  c \widehat Z_{j,j+1}\right),  
\end{align}
with  $\alpha_0=\int_0^{t_1} g_0(s) ds,$  for $i=0$.
    \State Simulate the increment:
\begin{align}\label{eq:hatUsample}
   \widehat U_{i, i+1} \sim IG\left(\frac{\alpha_i}{1- b k_0}, \left(\frac{\alpha_i}{ck_{0}}\right)^2 \right).
     \end{align}
    \State Set:
    \begin{align}\label{eq:Zii}
    \widehat Z_{i, i+1} = \frac{1}{ck_{0}} \left((1-bk_{0})\widehat U_{i, i+1} - \alpha_i\right).  
    \end{align}
\EndFor
\end{algorithmic}
\end{algorithm}
 Here, $IG$ refers to an Inverse Gaussian distribution, whose definition and a simple sampling algorithm are provided in Appendix~\ref{A:IG}.

Aside from the inherent simplicity of Algorithm~\ref{alg:simulation}, which is both efficient and free from precomputations and fine-tuning of hyperparameters, we show that the  iVi scheme enjoys the following key features:
\begin{itemize}
    \item The scheme is built from straightforward right-endpoint Euler-type discretization rule applied to the dynamics of $U$ and leverages the affine structure on the level of the dynamics of $(U,Z)$. 
    \item The scheme is well-defined, i.e.~$\alpha_i \geq 0$,  and  ensures the  non-decreasing property of $\widehat U$, i.e., $\widehat{U}_{i,i+1} \geq 0$ for all $i = 0, \ldots, n$, as established in Theorem~\ref{T:nonnegative}. To  prove this, we build on an elegant idea introduced by \cite{alfonsi2025nonnegativity} that  reformulates the problem in continuous time and leverages kernels that preserve nonnegativity. Intuitively, such kernels guarantee that if a discrete convolution remains nonnegative at specific discretization points, it stays nonnegative for all times. Unlike \cite{alfonsi2025nonnegativity}, where this property is imposed on the kernel $K$, we require it for  integrated kernels as in \eqref{eq:barGbarK}, which is also more accurate numerically, especially for singular kernels. Moreover, the simulated process $\widehat Z$ remains a martingale for finite $n\,$.
    \item The piecewise constant processes $\left( \widehat U\,, \widehat Z\right)$ generated by the scheme are tight in the Skorokhod $J_1\left(\mathbb{R}^2\right)$ topology on $[0\,,T]\,$, and any accumulation point satisfies the stochastic Volterra equation \eqref{eq:U0}. Under weak uniqueness of solutions, this guarantees the convergence of the scheme as the number of discretization steps $n$ tends to $\infty$, see Theorem \ref{theorem:weak_convergence} and Corollary \ref{corollary:weak_convergence_monotone}. The proof relies on properties of the Inverse Gaussian distribution along with a reformulation of the scheme in terms of a stochastic Volterra equation with a measure-valued kernel, see \eqref{eq:convolution_equation_measure_kernel}.~This approach is crucial, as it allows us to rigorously handle the non-Markovian  structure of the problem and obtain convergence by studying the stability of affine Volterra equations. In particular, our approach leads to a new weak existence result for solutions of \eqref{eq:U0}    under slightly different assumptions than \citet[Theorem 2.13]{abi2021weak}.
    \item The scheme captures essential distributional properties, which are clearly observed numerically when comparing its performance for options pricing with other schemes (see Figures~\ref{fig:U} and  \ref{fig:call}):
    \begin{itemize}
        \item[(i)] The Inverse Gaussian distribution emerges naturally from the right-endpoint discretization of the conditional characteristic function of $U$, as shown in Remark~\ref{R:charcomparison}.
        \item[(ii)] The scheme accurately reproduces the Inverse Gaussian limiting distribution of $U_{0,T}$ as $H\to -1/2$, as recently established in \cite*{jaber2025hyper}.
         \end{itemize}
    \item In terms of performance, the iVi scheme displays high precision with very few time-steps, for the integrated Volterra square-root process and the Volterra Heston model,  even under challenging regimes with low and negative Hurst indices when the fractional kernel $K_H$ in \eqref{eq:fractionalkernel} is considered, as illustrated on Figure~\ref{fig:surface}.~More remarkably, we observe that as $H$ decreases to  $ -1/2$ our iVi scheme seems to converge faster and with less time steps. 
\end{itemize}

\begin{figure}[h!]
    \centering
\includegraphics[width=1.\textwidth]{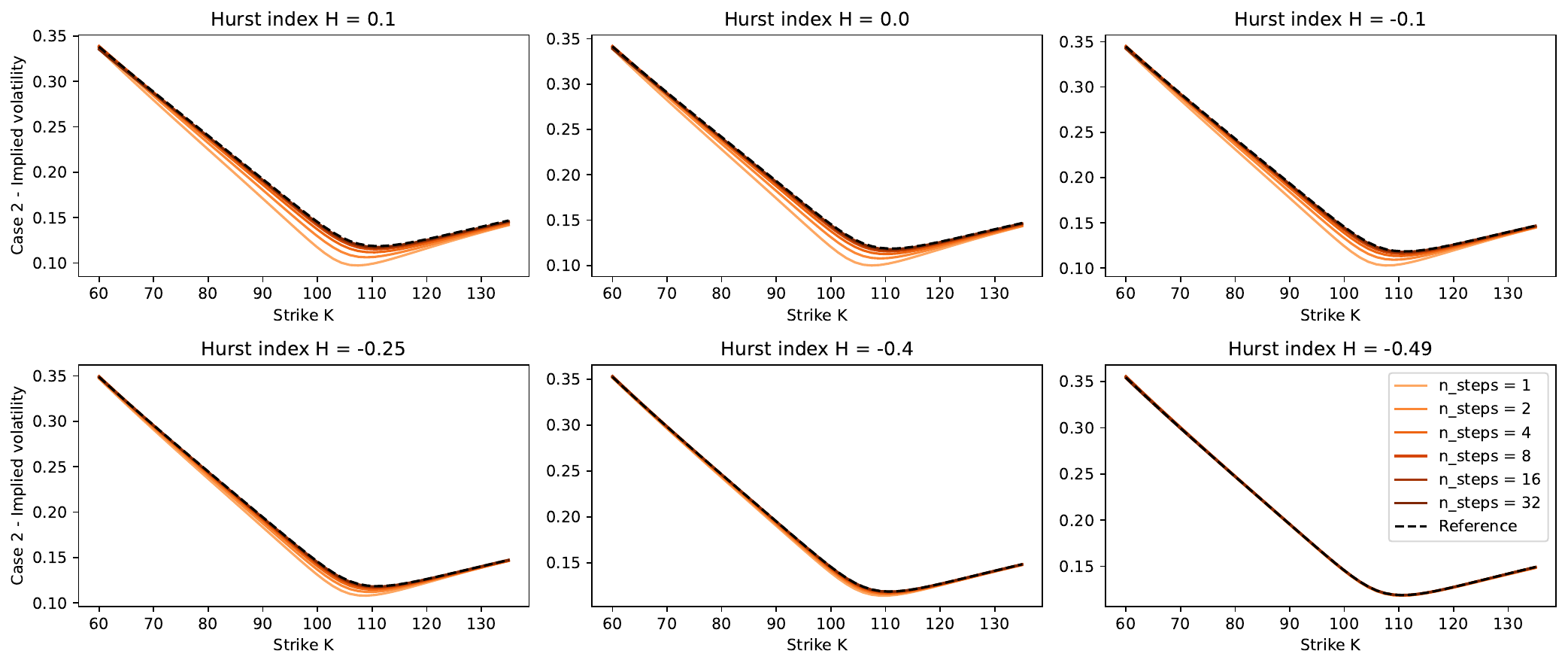} \\
    \caption{Implied volatility slice for call options of the hyper-rough Heston model \eqref{eq:HestonS}, with maturity $T=1$. Parameters as in Case 2 of Table~\ref{tab:parameter_cases}, with varying Hurst index between $0.1$ and -$0.49$. iVi scheme in shades of orange with the number of corresponding time steps with 1 million sample paths.}
    \label{fig:surface} % Optional: for referencing the figure
\end{figure}

 \textbf{Relation to the Literature.}  Several works in the literature address the simulation of Volterra square-root processes and Volterra Heston models, with a particular focus on the rough case. 
 
 A common approach relies on approximating a completely monotone kernel, such as the fractional kernel $K_H$ in \eqref{eq:fractionalkernel}, by a weighted sum of exponentials. One of the earliest contributions is the multi-factor Markovian approximation of \cite{abi2019multifactor},  which combines this approximation with a  clipped Euler-type scheme, see \cite[Appendix A.4]{abi2019lifting}. While it ensures positivity through  clipping and might be applied with $L^1$ kernels, no theoretical convergence result is provided. Subsequent works such as those of  \cite{alfonsi2024approximation,alfonsi2025nonnegativity,bayer2024efficient}  refined the schemes based on the same multi-factor approximation. In particular, \cite{bayer2024efficient} design a low-dimensional Markovian approximations with an adaptation of the moment-matching scheme of the conventional square-root process of \cite{lileika2020weak}.  \cite*{abijaber2024state} recently established the well-posedness of the scheme by characterizing the state space of the Markovian factors. While this method is well-optimized for the fractional kernel, works for integrable kernels,  it lacks a theoretical convergence guarantee and deteriorates as $H$ decreases.  Additionally, \cite{alfonsi2025nonnegativity} proposed a splitting scheme with theoretical convergence guarantees and positivity preservation, but it is restricted to square-integrable kernels.~Regardless of their precision, all these methods introduce an approximation error due to the replacement of the original kernel with a sum of exponentials.

An alternative approach directly tackles the stochastic Volterra equation. \cite*{richard2023discrete} studied explicit Euler-type discrete-time schemes for the rough Heston model, applied both to the Volterra process $V$ and the integrated process $U$. Their integrated scheme  works for kernels in $L^1$ but preserves positivity only via  clipping. Interestingly, their Equation (9) introduces an explicit scheme at the level of $U$, but  clipping is necessary to ensure monotonicity. For comparison with our implicit iVi scheme, we implement  such explicit scheme in \eqref{eq:explicitscheme}.

Finally, \cite{gatheral2022efficient} adapted the widely used Quadratic Exponential scheme of \cite{andersen2007efficient} to the rough Heston model and more general affine rough forward rate models. However, this method requires square integrability of the kernel, lacks a convergence result, and its accuracy deteriorates as $H$ decreases.

Compared to the existing literature, our iVi scheme is the first to combine the following key features. It effectively handles $L^1$ kernels with singularities by relying solely on integrated quantities of the kernel \eqref{eq:kij}, ensures the non-decreasing property of the integrated process $\widehat{U}$ without requiring additional clipping, provides a theoretical guarantee of convergence, and remains remarkably simple. Its implementation is nearly effortless, requiring neither moment matching nor costly pre-computations. Furthermore, numerical results indicate that convergence is achieved with very few time steps, and, unlike all other schemes, it improves as $H$ approaches $-1/2$, a phenomenon that can be theoretically justified by the limiting Inverse Gaussian regime as $H \to -1/2$.

A key advantage of having a stable scheme with minimal time steps is that, due to the non-Markovian nature of the process, computational complexity grows quadratically with the number of discretization points. Reducing the number of time steps is therefore essential for ensuring efficiency in practical implementations.

{Finally, we mention that the Inverse Gaussian distribution is intimately linked to Heston-type models, having emerged as a limiting law in various asymptotic regimes. In the classical Heston model, \cite{forde2011large} established that the long-time distribution of the integrated variance converges to an Inverse Gaussian distribution.~Building on this, \cite{mechkov2015fast} introduced fast mean-reversion and volatility-of-volatility scalings to attain such a regime, a framework further developed by \cite{McCrickerd2021}. More recently,  using such scalings, \cite{abijaber2024reconciling} investigated an indirect connection between the rough Heston model and the Inverse Gaussian distribution as the Hurst parameter $H\to -1/2$.
 This limiting behavior has since been rigorously established by \cite*{jaber2025hyper}.}

\textbf{Outline.} In Section~\ref{S:scheme} we provide the mathematical derivation of the scheme as well as its distributional properties.~Weak convergence of the scheme is presented in Section~\ref{S:convergence}.~Numerical illustrations for the scheme for the integrated process $U$ and the Heston model are provided in Section \ref{section:numerical}. The proof of the well-definedness of the scheme is postponed to Section~\ref{S:ProofMainWellDefined} and that of the convergence to Section~\ref{S:ProofConvergence}.~Useful properties of the Inverse Gaussian distribution and the  Heston model are collected in the appendices.

\section{The iVi scheme}\label{S:scheme}
 In this section, we detail the mathematical derivation of the iVi scheme, prove that it  preserves nonnegativity for the discretized process $\widehat V$ and relate it to distributional properties through the characteristic function.

We consider a non-decreasing Volterra process $U$ of the form
\begin{align}
U_{0,t} = \int_0^t g_0(s) \, ds + \int_0^t K(t-s)\left(b U_{0,s}  +  c Z_{0,s} \right) ds  
\end{align}
where  $K$ is a nonnegative locally integrable kernel ($L^1([0,T])$, not necessarily in $L^2$), $b\leq 0$ and $c \geq 0\,$,  $Z$ is a continuous martingale with quadratic variation $U$ and $g_0$ is a {nonnegative locally integrable input curve}.

\subsection{Deriving the iVi scheme}
The first step is to write the dynamics of the increments  $U_{s,t} = U_{0,t} - U_{0,s}$  between $s$ and $t$, in terms of a $\mathcal F_s$-measurable part summarizing the non-Markovianity of the system and an increment between $s$ and $t$. For this, we define, for  $s\leq t$, 
\begin{align}\label{eq:Gs}
    G_s(t) &=  \int_s^t g_0(u) \, du + \int_0^s \int_s^t K(u-r) \,du \, (b \, dU_{0,r} + c \, dZ_{0,r})
\end{align}
and note that $G_s(t)$ is $\mathcal F_s$-measurable for each $t \geq s$.  We also denote by  $dU_{s,u}$, $dZ_{s,u}$  the differentials with respect to the second variable $u$, for $u\geq s\,$. 

\begin{proposition}  The dynamics of the increments of $U$ defined in \eqref{eq:UZ} are given by 
\begin{align}\label{eq:Ust}
    U_{s,t}=    G_s(t) 
+ \int_s^t K(t-r) \left( b U_{s,r} + c Z_{{s,r}}\right)  \,dr, \quad s\leq t,
\end{align}
where $G$ is defined by \eqref{eq:Gs}. 
\end{proposition}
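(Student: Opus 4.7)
The plan is to unfold the defining integral equation~\eqref{eq:U0} for $U_{0,t}-U_{0,s}$, isolating the $\mathcal F_s$-measurable past on $[0,s]$ (which should regroup into $G_s(t)$) from the dynamic contribution on $[s,t]$. I first subtract the equation for $U_{0,s}$ from that for $U_{0,t}$, split $\int_0^t K(t-r)\,(bU_{0,r}+cZ_{0,r})\,dr$ at $s$, and then substitute $U_{0,r}=U_{0,s}+U_{s,r}$ and $Z_{0,r}=Z_{0,s}+Z_{s,r}$ on the piece over $[s,t]$. After this rearrangement, $U_{s,t}$ decomposes into four blocks: the deterministic $\int_s^t g_0(u)\,du$; a ``correction'' $\int_0^s[K(t-r)-K(s-r)](bU_{0,r}+cZ_{0,r})\,dr$; an $\mathcal F_s$-measurable drift $(bU_{0,s}+cZ_{0,s})\int_s^t K(t-r)\,dr$; and the desired dynamic term $\int_s^t K(t-r)(bU_{s,r}+cZ_{s,r})\,dr$ that already matches~\eqref{eq:Ust}.

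To close the proof, I would recognize the correction and the drift as the stochastic integral $\int_0^s\bigl(\int_s^t K(u-r)\,du\bigr)(b\,dU_{0,r}+c\,dZ_{0,r})$ that appears in~\eqref{eq:Gs}. The natural tool is stochastic integration by parts applied to the deterministic function $f(r):=\int_s^t K(u-r)\,du=\int_{s-r}^{t-r}K(v)\,dv$ on $[0,s]$ against the continuous semimartingale $X_r:=bU_{0,r}+cZ_{0,r}$, with $X_0=0$. A short computation gives $f(s)=\int_s^t K(t-r)\,dr$ and $f'(r)=K(s-r)-K(t-r)$, so that the standard identity $\int_0^s f\,dX = f(s)X_s-\int_0^s f'(r)X_r\,dr$ reproduces precisely the sum of the two $\mathcal F_s$-measurable blocks identified above, thereby assembling $G_s(t)$.

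The main obstacle I anticipate is justifying this integration by parts in the singular kernel regime (e.g.\ $K_H$ with $H\in(-1/2,1/2]$). Here $f$ is bounded on $[0,s]$ by $\|K\|_{L^1([0,T])}$, but $f'=K(s-\cdot)-K(t-\cdot)$ only lies in $L^1([0,s])$ and may blow up near $r=s$. Since $X$ is continuous with quadratic variation $c^2 U_{0,\cdot}$ bounded on $[0,s]$, both the stochastic integral $\int_0^s f\,dX$ and the Lebesgue integral $\int_0^s f'(r)X_r\,dr$ remain well-defined, and the identity extends from smooth $K$ by an $L^1$-approximation together with stochastic Fubini arguments in the spirit of~\cite{abi2021weak}.
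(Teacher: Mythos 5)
Your proposal is correct, and its first half (subtract the equations for $U_{0,t}$ and $U_{0,s}$, split at $s$, substitute $\widetilde Z_{0,r}=\widetilde Z_{0,s}+\widetilde Z_{s,r}$, and reduce everything to the identity
\begin{equation}
\int_0^s \Bigl(\int_s^t K(u-r)\,du\Bigr)\, d\widetilde Z_{0,r} \;=\; \int_0^s \bigl(K(t-r)-K(s-r)\bigr)\,\widetilde Z_{0,r}\,dr \;+\; \int_s^t K(t-r)\,dr\,\widetilde Z_{0,s}
\end{equation}
with $\widetilde Z = bU_{0,\cdot}+cZ_{0,\cdot}$) is exactly the paper's decomposition. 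Where you diverge is in the proof of this identity: the paper establishes it by a stochastic Fubini interchange followed by a change of variables and a case analysis on the indicator $1_{\{0\le r\le s\}}1_{\{s-v\le r\le t-v\}}$, whereas you use integration by parts for the deterministic integrand $f(r)=\int_{s-r}^{t-r}K(v)\,dv$ against the continuous semimartingale $\widetilde Z_{0,\cdot}$. Both routes are valid. Your route is arguably the more economical one, and your anticipated obstacle is less serious than you suggest: writing $F(x)=\int_0^x K(v)\,dv$, one has $f=F(t-\cdot)-F(s-\cdot)$, a difference of two continuous, monotone, absolutely continuous functions on $[0,s]$, hence $f$ is a continuous deterministic process of finite variation with $df(r)=f'(r)\,dr$ and $f'\in L^1([0,s])$. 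The semimartingale product rule $f(s)\widetilde Z_{0,s}=\int_0^s f\,d\widetilde Z_{0,r}+\int_0^s \widetilde Z_{0,r}\,df(r)$ (no covariation term) therefore applies directly for any $K\in L^1_{\rm loc}$, with no smoothing or $L^1$-approximation needed; the singularity of $f'$ near $r=s$ is harmless because $\widetilde Z_{0,\cdot}$ is a.s.\ bounded on $[0,s]$ and $f'$ is integrable. The trade-off is that the paper's Fubini argument is entirely self-contained at the level of iterated integrals, while yours leans on the product rule for finite-variation integrands, which is standard but worth citing explicitly.
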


\begin{proof} We define $\widetilde Z$ by 
\begin{align}\label{eq:tildeZ}
    \widetilde Z_{s,r} = b U_{s,r} + c Z_{s,r}, \quad s \leq r.
\end{align}
Fix $s\leq t$. The dynamics of $U_{0,\cdot}$ in \eqref{eq:U0} yield 
\[
   \begin{aligned}
   U_{s,t} &= U_{0,t} - U_{0,s} \\
   &= \int_s^t g_0(u) \,du  + \int_0^s (K(t-r)-K(s-r)) \widetilde Z_{0,r} dr + \int_s^t K(t-r) \widetilde Z_{0,r} \, dr\\
   &= \int_s^t g_0(u) \,du  + \int_0^s (K(t-r)-K(s-r)) \widetilde Z_{0,r} dr + \int_s^t K(t-r)  \, dr \widetilde Z_{0,s} +  \int_s^t K(t-r) \widetilde Z_{s,r} \, dr\,,
   \end{aligned}
\]
where we used that $\widetilde Z_{0,r} = \widetilde Z_{0,s} + \widetilde Z_{s,r}$, for $r\geq s$, for the last equality. Hence,  in order to obtain  \eqref{eq:Ust}, it suffices to prove that 
   \begin{align}\label{eq:tempfubini} 
     \int_0^s \int_s^t K(u-r) \,du \, d\widetilde Z_{0,r} =   \int_0^s (K(t-r)-K(s-r)) \widetilde Z_{0,r} dr + \int_s^t K(t-r)  \, dr \widetilde Z_{0,s}. 
   \end{align}
   This is done by 
successive applications of stochastic Fubini's theorem and a change of variables as follows
\begin{align*}
     \int_0^s \int_s^t K(u-r) \,du \, d\widetilde Z_{0,r} &= \int_0^s \int_{s-r}^{t-r} K(v) \,dv \, d\widetilde Z_{0,r} \\
   &=  \int_0^t  K(v) \int_{0}^{t} 1_{\{ 0\leq r \leq s \}} 1_{\{ s-v \leq r\leq t-v\}} \, d\widetilde Z_{0,r}  \,dv  \\
   &= \int_0^t K(v) \left(\widetilde Z_{0, s\wedge (t-v)} - \widetilde Z_{0, 0 \vee (s-v)}\right) \,dv \\
   &= \int_0^{t-s} K(v) \,dv \widetilde Z_{0,s} + \int_{t-s}^t K(v) \widetilde Z_{0,t-v} \,dv \\
   &\quad - \int_0^s K(v) \widetilde Z_{0,s-v}\,dv - \int_s^t K(v) \,dv \widetilde Z_{0,0}\\
   &= \int_s^t K(t-r) \,dr \widetilde Z_{0,s}  + \int_0^s K(t-r) \widetilde Z_{0,r}\, dr - \int_0^s K(s-r) \widetilde Z_{0,r}dr,  
\end{align*}
where we used that $\widetilde Z_{0,0}=0$. This yields \eqref{eq:tempfubini} and ends the proof. 
\end{proof}

The main idea behind our scheme is to simply discretize the equation \eqref{eq:Ust} for the increments of the process $U$ between $t_{i}$ and $t_{i+1}$ in an implicit way. 
First we discretize $G_{t_{i}}(t_{i+1})$. The second term in $G_t(s)$  is discretized as follows, recalling the definition of $\widetilde Z$ in \eqref{eq:tildeZ} and  using that $d\widetilde Z_{0,r}=d\widetilde Z_{t_j,r}$ for $r\geq t_j$,  
\begin{align}
    \int_0^{t_i} \int_{t_i}^{t_{i+1}} K(u-r) \,du \, d\widetilde Z_{0,r} &=  \sum_{j=0}^{i-1}  \int_{t_j}^{t_{j+1}} \int_{t_i}^{t_{i+1}} K(u-r) \,du \, d\widetilde Z_{t_j,r} \\
    &\approx  \sum_{j=0}^{i-1}    \int_{t_i}^{t_{i+1}} K(u-t_j) \,du \int_{t_j}^{t_{j+1}} \, d\widetilde Z_{t_j,r} \\
    &= \sum_{j=0}^{i-1}    \int_{t_i}^{t_{i+1}} K(u-t_j) \,du  \widetilde Z_{t_j,{t_{j+1}}},
\end{align}
which leads, with a uniform partition $t_i=iT/n$ for $i=0,\ldots, n$,  to
\begin{align}\label{eq:Gapprox}
    G_{t_i}(t_{i+1}) \approx  \int_{t_{i}}^{t_{i+1}} g_0(u) \, du + \sum_{j=0}^{i-1}    \int_{t_i}^{t_{i+1}} K(u-t_j) \,du  \,\widetilde Z_{t_j,{t_{j+1}}} = \alpha_i,
\end{align}
with $\alpha_i$ given by \eqref{eq:alphai}. 
As for the second term on the right hand side of \eqref{eq:Ust},  using the right endpoint rule on  $\widetilde Z_{t_i,\cdot}$, i.e.~approximating $\widetilde Z_{t_i,s} $ by the value $\widetilde Z_{t_i,t_{i+1}}$ for $s\in [t_i,t_{i+1})$, yields
\begin{align}
\int_{t_i}^{t_{i+1}} K(t_{i+1}-r)  \widetilde Z_{t_i,r} \, dr \approx \int_{t_i}^{t_{i+1}} K(t_{i+1}-r)    \, dr \widetilde Z_{{t_i,t_{i+1}}} = k_0 \widetilde Z_{{t_i,t_{i+1}}},
\end{align}
with $k_0$ as in \eqref{eq:kij}.
Plugging the two approximations above in \eqref{eq:Ust} and recalling \eqref{eq:tildeZ}, yields 
\begin{align}\label{eq:Uti}
     U_{{t_i},{t_{i+1}}}  \approx  \alpha_i + k_0 bU_{{t_i,t_{i+1}}} +  k_0 c Z_{{t_i,t_{i+1}}},
\end{align}
where the $\left(\alpha_i\right)_{i \leq n-1}$ are defined in \eqref{eq:alphai}.
Now the key point is to observe that having contructed $ U_{t_j,t_{j+1}},Z_{t_{j},t_{j+1}}$ for $j=0,\ldots, i-1$, this forms an implicit equation on $ U_{{t_i},{t_{i+1}}}$ by rewriting $Z$ using a time-changed Brownian motion. Indeed, recall from the definition of $(Z_{t_i,s})_{s\geq t_i}$ in \eqref{eq:UZ} that it is a  continuous local martingale with quadratic variation   $U_{t_{i},\cdot}$. Hence, an application of the celebrated Dambis, Dubins-Schwarz Theorem, see \cite[Theorem 1.1.6]{revuz2013continuous}, yields the representation of $(Z_{t_i,s})_{s\geq t_i}$ in terms of a time changed Brownian motion $Z_{t_i, s} = \widetilde W_{U_{t_i,s}}$ for all $s\geq t_{i}$ where $\widetilde W$ is a standard Brownian motion. 

Hence, plugging this in \eqref{eq:Uti}, approximating $U_{t_i, t_{i+1}}$  boils down to finding  a nonnegative random variable $\widehat U_{i,i+1}$ solving  
\begin{align}\label{eq:IGequation}
   (1-b k_0) \widehat U_{i,i+1} = \alpha_i + k_{0} c\widetilde W_{\widehat U_{i,i+1}}.
\end{align}  

Said differently,  $\widehat U_{i,i+1}$ is a passage  time of the level $\alpha_i$ for the drifted Brownian motion $ ((1-b k_0)s -  k_{0}c \widetilde W_{s})_{s\geq 0} $. In particular,  the first passage time 
$$ X = \inf \left\{ s\geq 0: (1-b k_0) s -  c k_{0}\widetilde W_{s}= \alpha_i \right\} $$
satisfies \eqref{eq:IGequation}. 
It is well known that $X$ follows an  Inverse Gaussian distribution $IG(\mu_i,\lambda_i)$ with mean parameter $\mu_i= \frac{\alpha_i}{1-b k_0}$  and shape parameter $\lambda_i = \frac{\alpha_i^2}{k_{0}^2 c^2}$. Inverse Gaussian distributions are recalled in Appendix~\ref{A:IG}.  

Hence, we sample $\widehat U_{i,i+1}$ using  the Inverse Gaussian distribution as in \eqref{eq:hatUsample}. Beyond its tractability and efficient  sampling, see Algorithm~\ref{alg:IG_sampling}, the choice of the Inverse Gaussian distribution is further justified in Section~\ref{S:whyIG} where it is shown to be intimately linked with the distributional properties of the integrated square-root process $U$. Then, using \eqref{eq:IGequation} we can define the   $\widehat Z_{i,i+1}$ (which plays the role of $\widetilde W_{\widehat U_{i,i+1}}$) using equation \eqref{eq:Zii}. 

Putting everything together, we arrive at the iVi scheme of Algorithm~\ref{alg:simulation}.

\subsection{Well-definedness of the scheme}

To ensure the well-definedness of Algorithm~\ref{alg:simulation}, we need to verify that the mean parameter $\alpha_i$ in \eqref{eq:alphai} of the Inverse Gaussian distribution is nonnegative for each $i=0,\ldots,n$, with the convention that \( IG(0,0) = \delta_0 \) the dirac at $0$.

This is the objective of Theorem~\ref{T:nonnegative}. As a by-product, we will show that the discretized increments $\widehat{U}_{i,i+1}$ are nonnegative for all \( i = 0, \ldots, n-1 \).

The proof relies on a continuous time reformulation of the problem.~For this, we introduce the functions $\bar{g}_0$ and $\bar{K}$ defined by:
\begin{align}\label{eq:barGbarK}
    \bar{g}_0(t) = \int_0^{\frac{T}{n}} g_0(t+s) \, ds \quad \text{and} \quad \bar{K}(t) = \int_0^{\frac{T}{n}} K(t+s) \, ds,
\end{align}
and observe that $\bar{K}(t_i - t_j) = k_{i-j}$, so that $\alpha_i$ in \eqref{eq:alphai} can be rewritten as:
\[
\begin{aligned}\label{eq:alphai2}
    \alpha_i = \bar{g}_0(t_i) + \sum_{j=0}^{i-1} \frac{\bar{K}(t_i - t_j)}{\bar{K}(0)} \left( \widehat U_{j,j+1} - \alpha_j \right).
\end{aligned}
\]
One approach is to prove that $\alpha_i$ remains nonnegative recursively, but this would result in cumbersome recursive formulas involving the kernel \( K \) and the input function \( g_0 \). A more elegant solution simplifies these computations by leveraging an  idea introduced in \cite{alfonsi2025nonnegativity}. This approach reformulates the problem ``continuously in time"  and utilizes kernels that preserve nonnegativity, as described in \cite[Definition~2.1]{alfonsi2025nonnegativity}.

Intuitively, kernels that preserve nonnegativity ensure that if a  discrete convolution  by the kernel is nonnegative when evaluated at some  discretization points, then, the discrete convolution  remains nonnegative over all time.

\begin{definition}
    Let \( K \colon \mathbb{R}_+ \to \mathbb{R}_+ \) be a kernel such that \( K(0) > 0 \). We say that the kernel \( K \) preserves nonnegativity if, for any \( i \geq 1 \), any \( x_1, \ldots, x_i \in \mathbb{R} \), and \( 0 \leq t_1 < \cdots < t_i \) such that:
    $$
    \sum_{j=1}^\ell K(t_\ell - t_j) x_j \geq 0, \quad \ell = 1, \ldots, i,
    $$
    we have:
    $$
    \sum_{j : t_j \leq t} K(t - t_j) x_j \geq 0, \quad t \geq 0,
    $$
    with the convention \( \sum_{\emptyset} = 0 \).
\end{definition}

Unlike \cite{alfonsi2025nonnegativity}, where the nonnegative preserving property is required for the kernel \( K \), we impose this requirement on  the integrated kernel \( \bar{K} \), to be able to adapt the ideas to our context.

\begin{theorem}\label{T:nonnegative}
    Let $K, g_0 \in L^1_{\rm loc}(\R_+, \R)\,$, $b \leq 0$ and $c \geq 0\,$.  Assume that in \eqref{eq:barGbarK},  $\bar g_0:\R_+\to \R_+$ is non-decreasing and nonnegative  and that $\bar K:\R_+ \to \R_+$ is non-increasing and preserves nonnegativity.~Consider   \( ( \widehat U_{i, i+1})_{i=0, \ldots, n-1} \), and \( ( \widehat Z_{i, i+1})_{i=0, \ldots, n-1} \) satisfying the recursions of the {\normalfont iVi} scheme in  Algorithm~\ref{alg:simulation}. Then,  we have that 
    \begin{align}
         \alpha_i, \widehat U_{i,i+1} \geq 0, \quad i=0,\ldots, n-1,
    \end{align}
and the algorithm is well-defined.
\end{theorem}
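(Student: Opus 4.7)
The plan is to proceed by strong induction on $i$, jointly establishing $\alpha_i \geq 0$ and $\widehat U_{i,i+1}\geq 0$: the sampling step $\widehat U_{i,i+1}\sim IG(\alpha_i/(1-bk_0),(\alpha_i/(ck_0))^2)$ is well defined (under the convention $IG(0,0)=\delta_0$) as soon as $\alpha_i \geq 0$, and then $\widehat U_{i,i+1}\in\R_+$ automatically since the Inverse Gaussian distribution is supported on $\R_+$. The base case $i=0$ is immediate: $\alpha_0 = \int_0^{t_1} g_0(s)\,ds = \bar g_0(0)\geq 0$ by the hypothesis on $\bar g_0$.

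For the inductive step I would assume $\alpha_j \geq 0$ and $\widehat U_{j,j+1}\geq 0$ for $j=0,\dots,i-1$ and target $\alpha_i \geq 0$. Following the spirit of \cite{alfonsi2025nonnegativity}, the idea is to reformulate the problem continuously in time: using $\bar K(t_\ell-t_j) = k_{\ell-j}$, the recursion reads $\alpha_\ell = \bar g_0(t_\ell) + \frac{1}{\bar K(0)}\sum_{j=0}^{\ell-1}\bar K(t_\ell-t_j)(\widehat U_{j,j+1}-\alpha_j)$, and one naturally extends this to
\begin{align*}
\Phi(t) \;:=\; \bar g_0(t) + \frac{1}{\bar K(0)}\sum_{j=0}^{i-1}\bar K(t-t_j)\mathbf{1}_{\{t>t_j\}}\bigl(\widehat U_{j,j+1}-\alpha_j\bigr),\quad t\in[0,T],
\end{align*}
so that $\Phi(t_\ell)=\alpha_\ell$ for $\ell=0,\dots,i$. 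By the inductive hypothesis $\Phi(t_\ell)\geq 0$ for $\ell\leq i-1$, and the goal reduces to showing $\Phi(t_i)\geq 0$.

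To do so, the plan is to recast $\Phi$ as a discrete $\bar K$-convolution $\sum_j\bar K(t-t_j)\mathbf{1}_{\{t>t_j\}}c_j$ of a judiciously chosen sequence $(c_j)_{j=0,\dots,i-1}$, built from $\widehat U_{j,j+1}-\alpha_j$ together with a nonnegative correction that absorbs the inhomogeneous $\bar g_0$ contribution (for instance built from the nondecreasing increments $\bar g_0(t_{j+1})-\bar g_0(t_j)\geq 0$). One then verifies that the partial sums $\sum_{j=0}^{\ell-1}\bar K(t_\ell-t_j)c_j$ are nonnegative at every $t_\ell$ with $\ell\leq i-1$, so the preservation-of-nonnegativity property of $\bar K$ lifts this discrete nonnegativity to every $t\in[0,T]$, in particular $t=t_i$, giving $\Phi(t_i)=\alpha_i\geq 0$. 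The main obstacle is precisely the design of $(c_j)$ so that $\bar g_0$ is properly absorbed: this is where the two monotonicity assumptions are crucial, since the combination of $\bar g_0$ being nondecreasing and $\bar K$ being non-increasing is what makes the required partial-sum comparisons reduce to pointwise monotonicity inequalities. Once $\alpha_i\geq 0$ is secured, the IG sampling returns $\widehat U_{i,i+1}\geq 0$ and the induction closes.
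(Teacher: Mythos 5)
Your overall architecture is the right one and matches the paper's: strong induction on $i$ (which, as a bonus, sidesteps the paper's need to introduce auxiliary ``repaired'' variables $\bar U,\bar\alpha$ for well-definedness), a continuous-in-time extension $\Phi$ of the recursion for $\alpha$, and an appeal to the nonnegativity-preservation property of $\bar K$ to pass from the grid points to $t=t_i$. However, the step you yourself identify as ``the main obstacle'' --- absorbing the inhomogeneous term $\bar g_0$ --- is not executed, and the route you sketch for it does not work. A constant, or the increments $\bar g_0(t_{j+1})-\bar g_0(t_j)$, cannot be rewritten as $\bar K(t-t_j)c_j$ with $t$-independent coefficients $c_j$, so $\Phi$ cannot be recast as a pure discrete $\bar K$-convolution to which the definition of ``preserves nonnegativity'' applies directly. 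What is actually needed is the strengthened statement (Proposition~\ref{P:pp}, i.e.~\cite[Proposition~2.8]{alfonsi2025nonnegativity}) allowing an additive nonnegative \emph{constant} $x_0$ alongside the convolution, valid for non-increasing kernels that preserve nonnegativity. One applies it with $x_0=\bar g_0(t_i)$ frozen at the current time, and the monotonicity of $\bar g_0$ is used precisely to upgrade the known inequalities at earlier nodes (which involve $\bar g_0(t_\ell)\leq\bar g_0(t_i)$) into the hypothesis of that proposition.

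A second, related inaccuracy: the hypothesis of the preservation property at a node $t_\ell$ must include the diagonal term $\bar K(0)c_\ell$, i.e.~one needs $x_0+\sum_{j=0}^{\ell}\bar K(t_\ell-t_j)c_j\geq 0$, not the partial sum $\sum_{j=0}^{\ell-1}$ that you propose to check (which would just be $\alpha_\ell-\bar g_0(t_\ell)+x_0$). With $c_j=(\widehat U_{j,j+1}-\alpha_j)/\bar K(0)$ and $x_0=\bar g_0(t_i)$, the condition at $t_\ell$ reduces to $\bar g_0(t_i)-\bar g_0(t_\ell)+\widehat U_{\ell,\ell+1}\geq 0$, which holds by the monotonicity of $\bar g_0$ and the inductive hypothesis $\widehat U_{\ell,\ell+1}\geq 0$; the conclusion of Proposition~\ref{P:pp} evaluated at $t=t_i$ then gives exactly $\alpha_i\geq 0$. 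This is the content of the paper's Lemmas~\ref{L:A1} and~\ref{L:A2}. Without the constant-term extension and the correct form of the nodal inequalities, your induction does not close.
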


\begin{proof}
    See Section~\ref{S:ProofMainWellDefined}.
\end{proof}

The following proposition provides examples of input curves $g_0$ and kernels $K$ that satisfy the assumptions of Theorem~\ref{T:nonnegative}.    

\begin{proposition}
Let $K, g_0 \in L^1_{\rm loc}(\mathbb{R}_+, \mathbb{R})$ such that $g_0$ is non-decreasing and nonnegative, and $K$ is non-increasing with the following representation 
\begin{align}\label{eq:laplacetranform}
    K(t) = \int_{\R} e^{- x t} \mu(dx), \quad t >0, 
\end{align}
where $\mu$ is  a 
measure of locally bounded variation.~Then, the function $\bar g_0: \mathbb{R}_+ \to \mathbb{R}_+$ is non-decreasing and nonnegative, while $\bar K: \mathbb{R}_+ \to \mathbb{R}_+$ is non-increasing and preserves nonnegativity.  
\end{proposition}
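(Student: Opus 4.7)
I split the conclusion into three parts, in increasing order of subtlety: monotonicity and sign of $\bar g_0$, the same for $\bar K$, and finally preservation of nonnegativity for $\bar K$. The first part follows directly from the definition: since $g_0 \geq 0$ pointwise, $\bar g_0(t) = \int_0^{T/n} g_0(t+s)\,ds \geq 0$, and the monotonicity $g_0(t+s) \leq g_0(t'+s)$ for $t \leq t'$ transfers to $\bar g_0$ after integration in $s$. The same pointwise reasoning shows $\bar K$ is non-increasing; nonnegativity of $\bar K$ will drop out of the next step.

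The second step is to rewrite $\bar K$ itself as a Laplace transform. By Fubini's theorem, justified by the local bounded variation of $\mu$ and the boundedness of the integrand on the relevant domain, I obtain
\begin{align}
\bar K(t) \;=\; \int_\R e^{-xt}\, \bar\mu(dx), \qquad \bar\mu(dx) \;=\; \phi(x)\,\mu(dx), \qquad \phi(x) \;=\; \frac{1 - e^{-xT/n}}{x},
\end{align}
with $\phi(0) := T/n$ by continuous extension. The key elementary check is that $\phi(x) > 0$ for every $x \in \R$, since numerator and denominator share the same sign; hence $\bar\mu$ inherits nonnegativity from $\mu$, and $\bar K \geq 0$ follows.

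For the preservation of nonnegativity, I would first handle a single exponential $e^{-xt}$: the hypothesis $\sum_{j=1}^\ell e^{-x(t_\ell - t_j)}\,y_j \geq 0$ is, after multiplication by $e^{x t_\ell}$, equivalent to nonnegativity of the partial sums $S_\ell(x) = \sum_{j \leq \ell} e^{x t_j} y_j$; then for arbitrary $t \geq 0$ the discrete convolution equals $e^{-xt} S_{\ell(t)}(x) \geq 0$, where $\ell(t) = \max\{j : t_j \leq t\}$. To upgrade this to the integrated Laplace transform $\bar K$, I would invoke the structural result from \cite{alfonsi2025nonnegativity} asserting that Laplace transforms of nonnegative measures preserve nonnegativity, applied to $\bar\mu$.

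\textbf{Main obstacle.} The subtle point is precisely this last upgrade: the naive ``swap sum and integral'' argument fails, because the hypothesis only gives $\int_\R e^{-x t_\ell} S_\ell(x)\,\bar\mu(dx) \geq 0$ at the discretization points, not pointwise nonnegativity of $x \mapsto e^{-x t_\ell} S_\ell(x)$. One must instead exploit the continuous-time regularity of $\bar K$ on each inter-sample interval $[t_\ell, t_{\ell+1})$ to rule out negative excursions, using the hypothesis at $t_{\ell+1}$ to control the behaviour just before the next jump. This is the content of Alfonsi et al.'s lemma, and the cleanest route is to cite it directly rather than re-derive it.
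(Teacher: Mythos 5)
Your proposal follows essentially the same route as the paper: handle $\bar g_0$ and the monotonicity of $\bar K$ pointwise, then use Fubini to write $\bar K(t)=\int_{\R}e^{-xt}\,\bar\mu(dx)$ with $\bar\mu(dx)=\frac{1-e^{-xT/n}}{x}\,\mu(dx)$, and conclude by citing the nonnegativity-preservation theorem of Alfonsi et al.\ rather than re-deriving it; you also correctly identify why the naive interchange of sum and integral cannot replace that citation. (Incidentally, your density $\frac{1-e^{-xT/n}}{x}$ has the correct sign in the exponent; the paper's displayed $\frac{1-e^{xT/n}}{x}$ is a typo.)

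One point does not match the stated generality. The hypothesis is only that $\mu$ is a measure of \emph{locally bounded variation}, i.e.\ a signed measure, so your claims that ``$\bar\mu$ inherits nonnegativity from $\mu$,'' that $\bar K\geq 0$ follows from this, and that the cited structural result concerns ``Laplace transforms of nonnegative measures'' all silently strengthen the hypothesis. As written, your argument covers only the completely monotone case ($\mu\geq 0$). The version of the result the paper actually invokes (Theorem~2.11 of Alfonsi et al.) applies to signed measures of locally bounded variation, with the compensating requirements that the kernel be nonnegative and non-increasing; nonnegativity of $\bar K$ should therefore be deduced from the standing assumption $K\geq 0$ (so $\bar K(t)=\int_0^{T/n}K(t+s)\,ds\geq 0$), not from positivity of the density $\phi$. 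With that adjustment your proof is complete and coincides with the paper's.
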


\begin{proof}
    The part for $\bar g_0$ is straightforward.~Note that since $K$  is non-increasing so is  $\bar K$.~It remains to  argue that $\bar K$ preserves nonnegativity. For this, it suffices to observe that, thanks to Fubini's theorem, $\bar K$ admits the representation 
    $ \bar K(t) = \int_{\R_+}e^{-xt} \nu(dx) $, for $t \geq 0$, with $\nu(dx) = ((1 - e^{xT/n})/x)\mu(dx)$. An application of \cite[Theorem 2.11]{alfonsi2025nonnegativity} yields that $\bar K$ preserves nonnegativity. 
\end{proof}

\begin{example}
\label{example:completely_monotone}
    Any completely monotone kernel $K$, i.e. a function that is infinitely differentiable on $(0, \infty)$ and satisfies  
    $(-1)^j K^{(j)}(t) \geq 0$ for all $t > 0$ and $j \geq 0$, satisfies \eqref{eq:laplacetranform} by Bernstein's theorem, with $\mu$ a nonnegative measure on $\mathbb{R}_+$.  

    Examples of such kernels include:  
    \begin{itemize}
        \item Weighted sum of exponentials: 
        $K(t) = \sum_{i=1}^N c_i e^{-x_i t}, \quad c_i \geq 0, \quad x_i \geq 0, \quad N \in \mathbb{N}.$
        \item Fractional kernel:  
        $K(t) = c t^{H - 1/2}, \quad c > 0, \quad H \in (-1/2, 1/2].$
        \item Shifted fractional kernel: 
        $K(t) = c (t+\epsilon)^{H - 1/2}, \quad c > 0, \quad \epsilon > 0, \quad H \leq 1/2.$
    \end{itemize}
    In addition, sums and products of completely monotone kernels are again completely monotone. 
    Hence, these kernels satisfy the assumptions of Theorem~\ref{T:nonnegative} and with $g_0$ non-decreasing, for instance with $g_0$ given by 
    $$g_0(t) = V_0 + a\int_0^t K(s)\,  ds, \quad t \geq 0,   $$ with $a, V_0 \geq 0$.\qed 
\end{example}

Figure~\ref{fig:samplepaths} presents sample paths of $\left(\widehat{U}, \widehat{Z}\right)$ generated using the iVi scheme in Algorithm~\ref{alg:simulation}, along with the discrete derivative  
$$\widehat{V}_i = \frac{\widehat U_{i, i+1}}{t_{i+1} - t_i}$$
for the fractional kernel $K(t) = t^{H-1/2} / \Gamma(H+1/2)$ with $H \in \{0.1,-0.1, -0.4\}$.

We can clearly observe the impact of the Hurst index $H$ on the sample paths of the process. If $H > 0$, the process $U$ is absolutely continuous with respect to the Lebesgue measure, and $\widehat{V}$ serves as an approximation of the density process $V$, which satisfies \eqref{eq:V}. In contrast, when $H < 0$, $U$ is no longer absolutely continuous with respect to the Lebesgue measure, and this transition is evident in the sample paths of both $\widehat{U}$ and $\widehat{V}$. As $H \to -0.5$, jump behavior emerges in $\widehat{U}$ and $\widehat{Z}$, illustrating their convergence toward a jump Inverse Gaussian process, as recently established in \cite*{jaber2025hyper}. 

Notably, in all cases, all sample paths of $\widehat{U}$ remain non-decreasing, as proved in Theorem~\ref{T:nonnegative}, and consequently $\widehat{V}$ stays nonnegative.
 
\begin{figure}[h!]
    \centering
    \includegraphics[width=.9\textwidth]{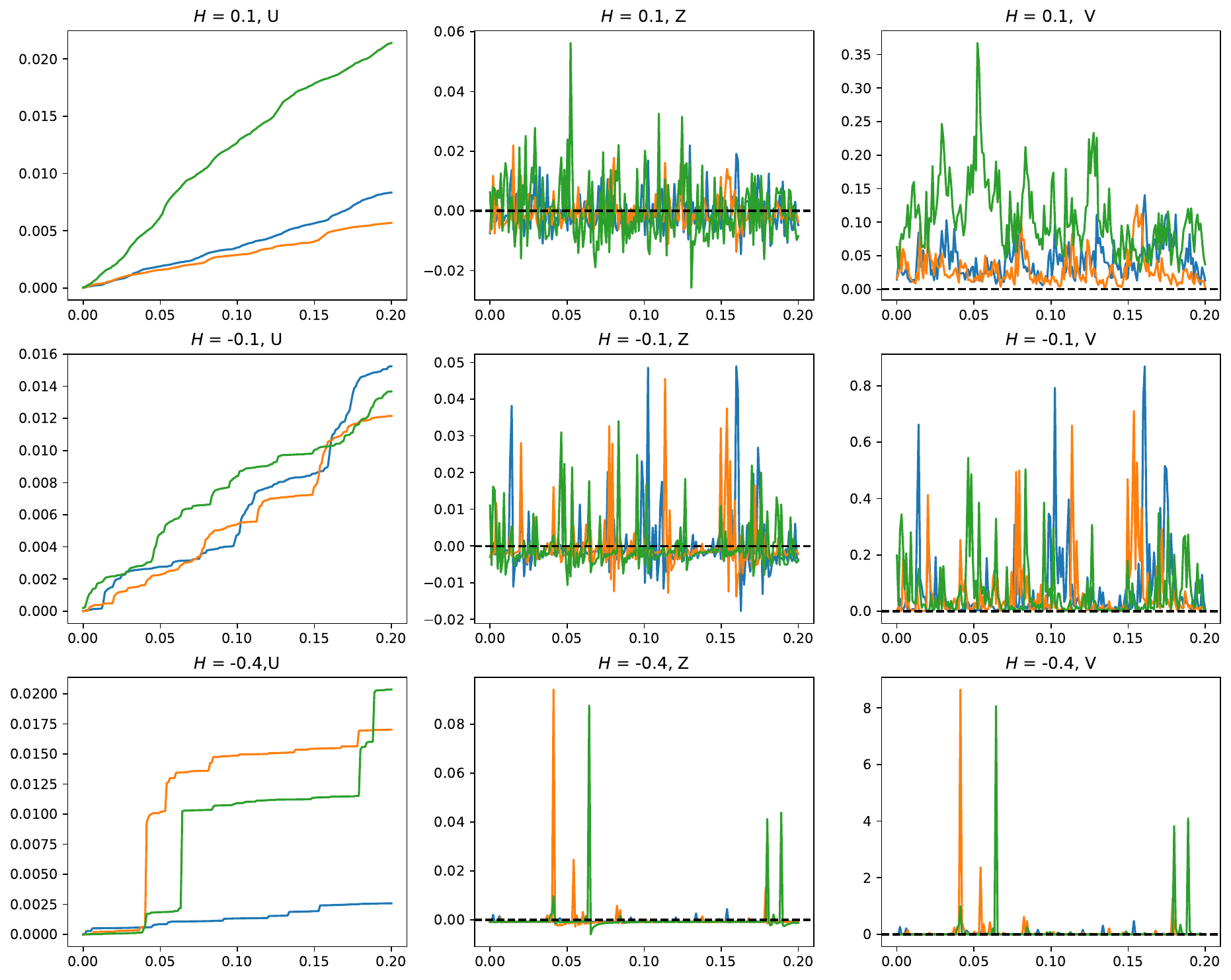} % Path to your PDF file
    \caption{The parameters are $a = 0.1$, $b = -0.3$, $c = 0.2$, $V_0 = 0.04$ 
with a varying Hurst index $H \in \{0.1,-0.1, -0.4\}$ for each row. $T = 0.2$ and $200$  time steps. }
    \label{fig:samplepaths} % Optional: for referencing the figure
\end{figure}

\subsection{Distributional properties of the iVi scheme}\label{S:whyIG}

The Inverse Gaussian distribution is not the only one satisfying \eqref{eq:IGequation}, which naturally raises the question: 
\emph{Why choose the Inverse Gaussian distribution?} In this section, we provide a justification based on distributional properties of the Volterra process  $U$.

To begin, the next proposition highlights striking similarities between the conditional characteristic functions of $\widehat{U}$ and $U$: both exhibit an exponentially affine dependence on $\alpha$ and $G$. More importantly, the Inverse Gaussian distribution emerges naturally as an implicit Euler-type discretization of the Riccati Volterra equations governing the  conditional characteristic function of the increments of  $U$.   In what follows, we use the principal branch for the complex square-root. 

\begin{proposition}\label{P:charcomparison}
   Fix $w\in \mathbb C$ such that $\Re(w)\leq 0$. Fix $i=0,\ldots, n-1$ and $v\geq 0$. Then, 
    \begin{align}
        \mathbb E\left[ \exp\left( w\widehat U_{i,i+1}\right) \mid \alpha_i\right] &= \exp \left( \widehat \psi_{i,i+1} \frac{\alpha_i}{k_0} \right), \label{eq:charhatU}\\
        \mathbb E\left[ \exp\left( wU_{t_i,t_{i+1}}\right) \mid \mathcal F_{t_i} \right] &= \exp\left(  \int_{t_i}^{t_i+1} \left(\frac { c^2\psi^2(t_{i+1}-s)} 2 + b\psi(t_{i+1}-s) + w\right)dG_{t_i}(s) \right), \label{eq:charU}
         \end{align}
         where 
    \begin{align}\label{eq:widehatpsi}
   \widehat \psi_{i,i+1} = \frac{  (1-bk_0) -  \sqrt{(1-bk_0)^2 - 2 w c^2 k_0^2}}{c^2 k_0},
\end{align}
with $k_0$ given by \eqref{eq:kij}   and      $\psi$ is such that $\Re (\psi) \leq 0$ and
solves the Riccati Volterra equation 
         \begin{align}
             \psi(t) &=    \int_0^t K(t-s) \left(w + b \psi(s) +  \frac{c^2}2 \psi^2(s)\right) \, ds, \quad t\geq 0. \label{eq:Ricvariation}
         \end{align}
In particular, $\widehat \psi_{i,i+1}$ is a root of the quadratic polynomial
\begin{align}\label{eq:rootpsi}
    \widehat \psi = k_0 w + k_0 b \widehat \psi + \frac{k_0c^2}{2} \widehat \psi^2.
\end{align}
\end{proposition}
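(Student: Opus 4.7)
The proposition has three distinct claims: equation \eqref{eq:charhatU}, equation \eqref{eq:charU}, and the fact that $\widehat\psi_{i,i+1}$ solves the quadratic \eqref{eq:rootpsi}. Each requires a different ingredient, so I would treat them in that order.

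First, for \eqref{eq:charhatU}, the plan is to use the closed-form Laplace transform of the Inverse Gaussian distribution (which is presumably collected in Appendix~\ref{A:IG}). Given $\alpha_i$, by \eqref{eq:hatUsample} the conditional law of $\widehat U_{i,i+1}$ is $IG(\mu_i,\lambda_i)$ with $\mu_i=\alpha_i/(1-bk_0)$ and $\lambda_i=\alpha_i^2/(c k_0)^2$. The standard formula gives, for $\Re(w)\leq 0$,
\begin{align*}
\mathbb E\!\left[e^{w\widehat U_{i,i+1}}\mid\alpha_i\right]
=\exp\!\left(\tfrac{\lambda_i}{\mu_i}\bigl(1-\sqrt{1-2\mu_i^2 w/\lambda_i}\bigr)\right),
\end{align*}
which has to be extended to complex $w$ with $\Re(w)\leq 0$ by analytic continuation and the principal branch. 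A direct substitution gives $\lambda_i/\mu_i=\alpha_i(1-bk_0)/(c^2k_0^2)$ and $2\mu_i^2 w/\lambda_i=2wc^2k_0^2/(1-bk_0)^2$, and after pulling $(1-bk_0)$ out of the square root the exponent collapses to $\widehat\psi_{i,i+1}\,\alpha_i/k_0$ with $\widehat\psi_{i,i+1}$ as in \eqref{eq:widehatpsi}. This is just careful algebra, with the only subtlety being the choice of branch of the square root so that $\Re(\widehat\psi_{i,i+1})\leq 0$ (giving the minus sign in the numerator of \eqref{eq:widehatpsi}); this is why the hypothesis $\Re(w)\leq 0$ is needed.

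Second, to verify that $\widehat\psi_{i,i+1}$ satisfies \eqref{eq:rootpsi}, I would simply rewrite that identity as the quadratic $\tfrac{k_0c^2}{2}\widehat\psi^{\,2}+(bk_0-1)\widehat\psi+k_0 w=0$ and apply the quadratic formula; the two roots are exactly $\bigl((1-bk_0)\pm\sqrt{(1-bk_0)^2-2wc^2k_0^2}\bigr)/(c^2k_0)$, and \eqref{eq:widehatpsi} picks out the root with the minus sign. This is a one-line check.

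Third, for \eqref{eq:charU}, I would invoke the affine formula for the integrated Volterra square-root process. The statement is exactly the conditional version of the affine Laplace transform formula of \citet{abijaber2019affine} (see also \cite{abi2021weak}) applied to $U_{t_i,\cdot}$ conditionally on $\mathcal F_{t_i}$: the dynamics \eqref{eq:Ust} for the increment process $U_{t_i,\cdot}$ is itself a Volterra square-root equation with input curve $G_{t_i}$, so the conditional Laplace transform is exponentially affine in $dG_{t_i}$, with exponents driven by the Riccati Volterra equation \eqref{eq:Ricvariation}. The main obstacle here is to cite/verify the precise form of the Riccati equation and the representation integrated against $dG_{t_i}$ rather than $g_0\,ds$; once the equality $U_{t_i,t_{i+1}}=G_{t_i}(t_{i+1})+\int_{t_i}^{t_{i+1}} K(t_{i+1}-r)(bU_{t_i,r}+cZ_{t_i,r})dr$ is in hand, the proof reduces to a conditional application of the established affine formula. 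Existence and uniqueness of $\psi$ with $\Re(\psi)\leq 0$ under $\Re(w)\leq 0$ is the standard Riccati-Volterra result in the same references.

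The main obstacle is the third step: not the computation itself, but keeping track of the conditional formulation and the change from Lebesgue integration against $g_0$ to Stieltjes integration against $dG_{t_i}$. Once this is set up carefully, the first two steps are routine computations.
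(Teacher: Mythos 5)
Your proposal is correct and follows essentially the same route as the paper: \eqref{eq:charhatU} via the Inverse Gaussian Laplace transform \eqref{eq:IGchar} with the stated $(\mu_i,\lambda_i)$, the root property of $\widehat\psi_{i,i+1}$ by direct verification, and \eqref{eq:charU} by specializing the conditional affine characteristic function \eqref{eq:hestonchar} of Appendix~\ref{S:Heston} to $v=0$ on $[t_i,t_{i+1}]$. Your algebra for the first step and your identification of the increment dynamics \eqref{eq:Ust} as the reason the conditional formula applies with $dG_{t_i}$ are both consistent with what the paper does.
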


\begin{proof}
We first prove \eqref{eq:charhatU} using the characteristic function of the Inverse Gaussian distribution recalled in \eqref{eq:IGchar}.  Conditional on $\alpha_i$, it follows from \eqref{eq:hatUsample} that $\widehat U_{i,i+1}$  follows an Inverse Gaussian distribution with mean parameter $\mu_i = \frac{\alpha_i}{1-bk_0}$  and shape parameter $\lambda_i = \frac{\alpha_i^2}{c^2k_0^2}$,    which plugged in \eqref{eq:IGchar}  yields \eqref{eq:charhatU}. In addition, it is straightforward to check that $\widehat \psi_{i,i+1}$  solves \eqref{eq:rootpsi}. As for \eqref{eq:charU} and \eqref{eq:Ricvariation}, it follows from the conditional characteristic function \eqref{eq:hestonchar} applied with $v=0$ between $t_i$ and $t_{i+1}$. 
\end{proof}

With the help of Proposition~\ref{P:charcomparison}, the choice of the Inverse Gaussian distribution can now be justified by a simple discretization of the Riccati equation \eqref{eq:Ricvariation} as follows. 
\begin{remark}\label{R:charcomparison}
    The first step is to  discretize \eqref{eq:Ricvariation} using the right endpoint rule by  writing $\Delta_i =t_{i+1} -t_i $:
$$ \psi(\Delta_i) \approx  w \int_0^{\Delta_i}K(s) \, ds +  \int_0^{\Delta_i} K(\Delta_i -s) \, ds \left( b\psi(\Delta_i) +  \frac{c^2}{2}\psi^2(\Delta_i)\right) = w k_0+ bk_0  \psi^2(\Delta_i) + \frac{k_0c^2}{2} \psi^2(\Delta_i).   $$
This yields the quadratic equation \eqref{eq:rootpsi} as approximation for $\psi(\Delta_i)$. The root with a non-positive real part is given precisely by $\widehat \psi_{i,i+1}$ in \eqref{eq:widehatpsi}. 

It follows that the logarithm of the right-hand side of \eqref{eq:charU} can be approximated by 
\begin{align}
    \int_{t_i}^{t_i+1} \left(\frac {c^2 \psi^2(t_{i+1}-s)} 2 + b \psi(t_{i+1}-s) + w\right)dG_{t_i}(s) &\approx  \left(\frac {c^2 \psi^2(\Delta_i)} 2 + b \psi(\Delta_i) + w\right)  \int_{t_i}^{t_i+1}  dG_{t_i}(s) \\
    &\approx    \frac{\widehat \psi_{i,i+1}}{k_0} G_{t_i}(t_{i+1}) \approx  \frac{\widehat \psi_{i,i+1}}{k_0} \alpha_i,
\end{align}
where used the fact that $\widehat \psi_{i,i+1}$ solves  \eqref{eq:rootpsi}, $G_{t_i}(t_i)=0$ and the approximation $G_{t_i}(t_{i+1}) \approx \alpha_i$ as in \eqref{eq:Gapprox}. 
 In other words, the discretization between $t_i$ and $t_{i+1}$ of the Riccati Volterra equation   \eqref{eq:Ricvariation}  that governs the conditional distribution of the integrated process $U_{t_i,t_{i+1}}$ in \eqref{eq:charU} naturally leads to an Inverse Gaussian distribution of the form \eqref{eq:charhatU}:
 $$  \mathbb E\left[ \exp\left( wU_{t_i,t_{i+1}}\right) \mid \mathcal F_{t_i}\right]  \approx \exp \left( \widehat \psi_{i,i+1} \frac{\alpha_i}{k_0} \right) =    \mathbb E\left[ \exp\left( w\widehat U_{t_i,t_{i+1}}\right) \mid \alpha_i \right].$$
 \end{remark}

\subsection{A remark on eliminating the drift}\label{S:Resolvent}

In many applications, we assume $b = 0$ in \eqref{eq:U0}. Without loss of generality, this can always be achieved by modifying the kernel $K$ and the input curve $G_0$ accordingly. Moreover, the discretization of the linear drift term $bU$ introduces a bias, affecting numerical accuracy.

A more accurate scheme can be obtained by explicitly solving for the linear drift term $bU$ in the dynamics of $U$ in \eqref{eq:U0}. This can be done using the {resolvent of the second kind}, which provides a variation of constants formula for linear Volterra equations. The resolvent kernel $R^b$ associated with $bK$ is the unique locally integrable function satisfying  
\begin{equation}
    R^b(t) = bK(t) + \int_0^t bK(t-s) R^b(s) ds, \quad t \geq 0.
\end{equation}
By \cite[Theorems 2.3.1 and 2.3.5]{gripenberg1990volterra} the resolvent exists and  the process $U$ in \eqref{eq:U0} satisfies  
\begin{equation}\label{eq:Uvariation}
    U_{0,t} = \int_0^t\tilde g_0(s)ds  + \frac{c}{b} \int_0^t R^b(t-s) Z_{0,s} ds,
\end{equation}
with $\tilde g_0(t) = g_0(t) + \int_0^tR^b(t-s)g_0(s)ds$ and  the convention that $\frac{R^b}{b} = K$ if $b = 0$. Then, one can implement the iVi scheme of Algorithm \ref{alg:simulation} to \eqref{eq:Uvariation}, i.e.~setting $b=0$ in \eqref{eq:alphai}-\eqref{eq:hatUsample}-\eqref{eq:Zii} and  replacing $g_0(t)$ by $\tilde g_0(t)$ and $K$ by $R^b/b$.  The price to pay is that now \eqref{eq:kij} need to be computed or approximated numerically with the kernel $R^b/b$ instead of $K$. 

\begin{example}
For the constant kernel 
$K\equiv 1$, we have that $R^b(t)=be^{b t}$ and one recovers the iVi scheme developed for the standard square-root process by \cite{abijaber2024simulation}.~For the fractional kernel $K(t)=\gamma t^{H-1/2}$ with $H \in (-1/2,1/2]$, $\gamma>0$, we have that $R^b(t) = b\,\gamma\, \Gamma(H + 1/2)\,  t^{H-1/2}\,E_{H+1/2, H+1/2}\left(b\,\gamma\, \Gamma(H + 1/2)\,t^{H+1/2}\right)$, where $E_{\alpha, \beta}(z)  = \sum_{n \geq 0} \frac{z^n}{\Gamma(\alpha n + \beta)}$ denotes the Mittag–Leffler function. \qed
\end{example}

\section{Convergence of the iVi scheme}\label{S:convergence}
We consider $K \,, g_0 \in L^1_{\rm loc}\left(\mathbb{R}_+\,, \mathbb{R}_+\right)\,$.~Throughout this section, since our aim is to derive the convergence of the scheme as $n \to + \infty\,$, we make the dependence in $n$ explicit and denote the time grid of Algorithm \ref{alg:simulation} by $t^n_i := i \, \frac{T}{n}$ for $i = 0\,,...\,,n$ and $n \geq 1\,$. Moreover, we denote by $\left(\bar K^n, \bar g_0^n\right)$ the objects defined in \eqref{eq:barGbarK}, and by $\left(\widehat U^n_{i,i+1}\,, \widehat Z^n_{i,i+1}\,, k^n_i\,, \alpha^n_i\right)_{i \leq n-1}$ the quantities obtained from Algorithm \ref{alg:simulation}, for $n \geq 1\,$. We assume that all random quantities are defined on the same complete probability space $\left(\Omega, \mathcal F, \mathbb P\right)\,$.

In order for the scheme to be well-defined for any $n \geq 1\,$, we consider the following assumption.
\begin{assumption}
\label{assumption:preserve_nonnegativity}
    For any $n \geq 1\,$, $\bar K^n$ is non-increasing and preserves nonnegativity, and $\bar g_0^n$ is non-decreasing and nonnegative. 
\end{assumption}
According to Theorem \ref{T:nonnegative}, under Assumption \ref{assumption:preserve_nonnegativity}, Algorithm \ref{alg:simulation} is well defined for any $n \geq 1\,$, and we can consider the random variables obtained from it. They allow us to construct the piecewise constant processes
\begin{equation}
    \label{eq:def_process_U}
    U^n_t := \sum_{i = 0}^{\lfloor nt/T \rfloor - 1}\, \widehat U^n_{i,i+1}\,, \quad t \leq T\,, \quad n \geq 1\,,
\end{equation}
\begin{equation}
    \label{eq:def_process_Z}
    Z^n_t := \sum_{i = 0}^{\lfloor nt/T \rfloor -1}\, \widehat Z^n_{i,i+1}\,, \quad t \leq T\,, \quad n \geq 1\,.
\end{equation}
These are càdlàg processes, adapted to the right-continuous filtration
\begin{equation}
\label{eq:def_continuous_filtration}
    \mathcal F^n_t := \sigma\left(\widehat{ \mathcal{F}}^n_t\, \bigcup \mathcal F_0\right)\,, \quad t \leq T\,, \quad n \geq 1\,,
\end{equation}
where
$$
    \widehat{\mathcal{F}}^n_t := \sigma\left(\widehat U^n_{j,j+1} \,,\, 0 \leq j \leq \lfloor nt/T\rfloor - 1 \right)\,, \quad t \leq T\,, \quad n \geq 1\,,
$$
and $\mathcal F_0$ denotes the set of $\mathbb P$-null elements of $\mathcal F\,$, so that $\left(\mathcal F^n_t\right)_{t \leq T}$ satisfies the usual conditions.
In particular, $U^n$ is non-decreasing for any $n \geq 1\,$. In Lemma \ref{lemma:martingality}, we prove that $Z^n$ is a square-integrable $\left( \mathcal F^n_t\right)_{t \leq T}$-martingale, which is an important property that we expect the scheme to preserve. Although the quadratic variation of $Z^n$ is not $U^n$ for finite $n\,$, the same lemma yields the martingality of $\left(Z^n\right)^2 - U^n$ along with the fact that $\mathbb E \left[\left[Z^n\right]_t\right] = \mathbb E \left[U^n_t\right]$ for $t \leq T\,$. Finally, the right or left endpoint approximations we use to derive the scheme \eqref{eq:Uti} preserve the Volterra-type structure, provided the kernel is replaced by a measure-valued one. Introducing the nonnegative discrete measure $K^n := \sum_{i = 0}^{n-1}\, k^n_i\, \delta_{t^n_i}$ on $\left([0\,,T]\,, \mathcal B_{[0\,,T]}\right)\,$, where the $\left(k^n_i\right)_{i \leq n-1}$ are defined in \eqref{eq:kij}, we show in Lemma \ref{lemma:convolution_equation} that 
\begin{equation}
\label{eq:convolution_equation_measure_kernel}
U^n_t = \int_0^{\lfloor \frac{nt}{T} \rfloor \frac{T}{n}}\, g_0(s)\,ds + \int_{[0,t]}\, \left(b\,U^n_{t-s} + c\,Z^n_{t-s}\right)\, K^n(ds)\,, \quad t \leq T\,, \quad n \geq 1\,.
\end{equation}
This reformulation of the scheme plays a key role in the proof of the following result. We denote by $D\left([0\,,T]\right)$ the space of real-valued càdlàg functions on $[0\,,T]$ and we endow it with the Skorokhod $J_1$ topology.
\begin{theorem}
    \label{theorem:weak_convergence}
    Let $K\,, g_0 \in L^1_{\rm loc}\left(\mathbb{R}_+\,, \mathbb{R}_+\right)$ such that Assumption \ref{assumption:preserve_nonnegativity} is satisfied. Then, the sequence $\left(U^n\,, Z^n\right)_{n \geq 1}$ defined in \eqref{eq:def_process_U}-\eqref{eq:def_process_Z} is $J_1\left(\mathbb{R}^2\right)$-tight on $D([0\,,T])^2\,$.~Moreover, any accumulation point $\left(U\,,Z\right)$ satisfies the following:
    \begin{enumerate}[label=(\roman*)]
        \item $U$ is a continuous, non-decreasing process, starting from $0\,$.
        \item $Z$ is a continuous square-integrable martingale with respect to the filtration generated by $\left(U\,,Z\right)$, starting from $0\,$, such that 
        $$
        \langle Z \rangle_t = U_t\,, \quad t \leq T\,.
        $$
        \item The following Volterra equation holds:
        $$
        U_t = \int_0^t\, g_0(s)\,ds + \int_0^t \, K(t-s) \left(b\,U_s+ c\,Z_s\right)\,ds \,, \quad t \leq T\,, \quad \text{a.s.}
        $$
    \end{enumerate}
\end{theorem}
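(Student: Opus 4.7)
The plan is to prove tightness of $(U^n, Z^n)_{n\ge 1}$, extract a convergent subsequence via Prokhorov, and identify the limit by passing to the limit in the measure-kernel Volterra equation \eqref{eq:convolution_equation_measure_kernel}. Throughout, I would exploit Lemma~\ref{lemma:martingality}, which provides the martingality of $Z^n$ and $(Z^n)^2 - U^n$, and the identity $\mathbb E[(Z^n_t)^2] = \mathbb E[U^n_t]$. First I would establish uniform moment bounds: taking expectations in \eqref{eq:convolution_equation_measure_kernel} and using $\mathbb E[Z^n_t] = 0$ yields the linear measure-kernel Volterra inequality
$$
\mathbb E[U^n_t] \le \int_0^t g_0(s)\,ds + |b| \int_{[0,t]} \mathbb E[U^n_{t-s}]\, K^n(ds),
$$
whose total kernel mass $K^n([0,T]) = \int_0^T K(u)\,du$ is uniform in $n$. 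A generalized Gronwall lemma for measure kernels then gives $\sup_n \mathbb E[U^n_T] < \infty$, and $\mathbb E[(Z^n_T)^2] = \mathbb E[U^n_T]$ combined with Doob's inequality yields a uniform $L^2$ bound on $\sup_{t\le T}|Z^n_t|$.

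For tightness, I would exploit that $U^n$ is non-decreasing with uniform $L^1$ bound and show that its largest jump $\max_{i\le n-1}\widehat U^n_{i,i+1}$ vanishes in $L^1$, using the Inverse Gaussian mean formula $\mathbb E[\widehat U^n_{i,i+1}\mid \alpha^n_i] = \alpha^n_i/(1 - b k^n_0)$ and an a priori bound on $\max_i \alpha^n_i$ derived from the moment bounds. Tightness of $Z^n$ then follows from Aldous's criterion for càdlàg martingales together with the uniform bound on the brackets; joint tightness in $J_1(\mathbb R^2)$ is obtained because both limits will be continuous. Passing to a subsequence and invoking Skorokhod's representation, I may assume $(U^n, Z^n) \to (U, Z)$ almost surely in $J_1(\mathbb R^2)$ on a common probability space.

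Properties (i) and (ii) then follow by standard arguments. Continuity of $U$ and $Z$ is a consequence of the vanishing-jump estimates above; non-decreasingness of $U$ and $U_0 = Z_0 = 0$ transfer to the limit through pointwise convergence at continuity points. The martingale property of $Z$ and of $Z^2 - U$, with respect to the filtration generated by $(U, Z)$, is obtained from uniform integrability (from the $L^2$ bound) together with the martingale identities for $Z^n$ and $(Z^n)^2 - U^n$ tested against continuous bounded functionals. The identity $\langle Z \rangle = U$ is then a consequence of Doob--Meyer uniqueness.

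The main obstacle is passing \eqref{eq:convolution_equation_measure_kernel} to the limit to obtain (iii). The plan is to prove a stability statement: whenever $f_n \to f$ a.s.\ in $J_1$ on $[0,T]$ with $\sup_n \mathbb E[\sup_{t\le T}|f_n(t)|^2] < \infty$,
$$
\int_{[0,t]} f_n(t-s)\, K^n(ds) \;\longrightarrow\; \int_0^t K(t-s) f(s)\,ds
$$
in $L^1(\Omega \times [0,T])$. Since $k^n_i = \int_{t^n_i}^{t^n_{i+1}} K(u)\,du$, the measure $K^n$ weakly approximates $K(s)\,ds$; the difficulty is the possible singularity of $K$ at the origin. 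I would handle it by a $\delta$-truncation: on $[0, t - \delta]$, dominated convergence applies via local boundedness of $K$ away from zero and a.s.\ $J_1$-convergence of $f_n$ at continuity points of $f$; on $[t - \delta, t]$, Cauchy--Schwarz combined with the uniform $L^2$ bound on $f_n$ and $\int_0^\delta K(u)\,du \to 0$ controls the contribution uniformly, which is sent to zero by letting $n\to\infty$ first and then $\delta \to 0$. Applying this stability to $f_n = b U^n + c Z^n$, and noting that the deterministic initial term of \eqref{eq:convolution_equation_measure_kernel} converges to $\int_0^t g_0(s)\,ds$, yields (iii) and closes the argument.
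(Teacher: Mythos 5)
Your overall architecture (uniform moment bounds, tightness, Skorokhod representation, a stability lemma for the measure-kernel convolution \eqref{eq:convolution_equation_measure_kernel}) matches the paper's. However, there is a genuine gap in the tightness step. The theorem asserts that every accumulation point $U$ is \emph{continuous}, so you need C-tightness, i.e.\ control of the full modulus of continuity $w(U^n,\delta)=\sup_{|t-s|\le\delta}(U^n_t-U^n_s)$, not merely of the largest jump $\max_i \widehat U^n_{i,i+1}$. A non-decreasing sequence with uniformly bounded terminal value and vanishing maximal jump can still accumulate a macroscopic increase over an interval of length $o(1)$ (think of $\sqrt n$ consecutive jumps of size $1/\sqrt n$): it is $J_1$-tight but converges to a discontinuous limit, which your argument would not exclude. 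The same control is needed for $Z^n$: Aldous's criterion reduces, via $\mathbb E[(Z^n_{\tau+\delta}-Z^n_\tau)^2]=\mathbb E[U^n_{\tau+\delta}-U^n_\tau]$, to exactly the same uniform bound on increments of $U^n$ over short (random) intervals, so "the uniform bound on the brackets" at the terminal time is not enough. The paper supplies the missing ingredient through the pathwise estimate \eqref{eq:bound_variation_U},
$$
U^n_{t^n_i}-U^n_{t^n_j}\;\le\; G_0(t^n_i)-G_0(t^n_j)+2\,\bigl\|\widetilde Z^n\bigr\|_{\infty}\int_0^{t^n_i-t^n_j}K(s)\,ds,
$$
obtained from the measure-kernel equation by comparing the weights $(k^n_\ell)_\ell$ at times $t^n_i$ and $t^n_j$ and using that $\bar K^n$ is non-increasing, so that $(k^n_\ell)_\ell$ is monotone. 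This is precisely where the monotonicity part of Assumption \ref{assumption:preserve_nonnegativity} enters the convergence proof; your proposal never invokes it in the tightness argument, which is the symptom of the omission.

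Two secondary points. First, in your stability step you rely on ``local boundedness of $K$ away from zero'', which is not implied by $K\in L^1_{\rm loc}$; the paper's stability lemma avoids this by using only $K\in L^1([0,T])$ together with the uniform continuity of the limit $f$, bounding the discrepancy by $w(f,T/n)\int_0^T K+\|f\|_\infty\int_t^{t^n_{\lfloor nt/T\rfloor+1}}K$ and concluding by dominated convergence. Second, no Gronwall lemma is needed for the first moment bound: since $b\le 0$ and $U^n\ge 0$, one simply drops the drift term in \eqref{eq:convolution_equation_measure_kernel} and takes expectations to get $\mathbb E[U^n_t]\le G_0(T)$ directly. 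These two points are repairable refinements; the missing modulus-of-continuity bound \eqref{eq:bound_variation_U} is the essential gap.
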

\begin{proof}
    The proof is given in Section \ref{S:ProofConvergence}.
\end{proof}
This proves the weak existence of a couple of càdlàg processes $\left(U\,,Z\right)$ satisfying (\textit{i}), (\textit{ii}) and (\textit{iii}), provided that Assumption \ref{assumption:preserve_nonnegativity} is satisfied, thereby extending the framework considered in \cite{abi2021weak}. However, in order to deduce weak convergence from Theorem \ref{theorem:weak_convergence}, we need weak uniqueness of the limit.~This is the case for the classical kernels, i.e. fractional or exponential, that are completely monotone. This is specified in the following corollary.
\begin{corollary}
    \label{corollary:weak_convergence_monotone}
    If the law of a couple of processes $\left(U\,,Z\right)$ satisfying (\textit{i}), (\textit{ii}) and (\textit{iii}) of Theorem \ref{theorem:weak_convergence} is unique, then we have the weak convergence 
    $$
    \left(U^n\,, Z^n \right) \overset{n \to \infty}{\implies} \left(U\,,Z\right)
    $$
    in $D([0\,,T])^2$ endowed with the topology $J_1\left(\mathbb{R}^2\right)\,$.

    In particular, this is the case if $K$ is completely monotone (see Example \ref{example:completely_monotone}).
\end{corollary}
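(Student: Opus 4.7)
The plan is to combine tightness from Theorem~\ref{theorem:weak_convergence} with the affine structure of the limit, splitting the argument into two steps: first, deriving weak convergence from tightness plus uniqueness of the limit; second, verifying the uniqueness hypothesis when $K$ is completely monotone.

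For the first step, Theorem~\ref{theorem:weak_convergence} already provides $J_1(\mathbb{R}^2)$-tightness of $(U^n, Z^n)_{n \geq 1}$ on $D([0,T])^2$ and asserts that every accumulation point satisfies conditions (i)--(iii). By Prokhorov's theorem, any subsequence admits a further weakly convergent subsequence; by the uniqueness hypothesis, all such limits share the same law. The classical two-subsequence principle (a sequence converges to $L$ iff every subsequence has a further subsequence converging to $L$) then upgrades this to weak convergence of the whole sequence $(U^n, Z^n) \Rightarrow (U,Z)$ in $D([0,T])^2$ endowed with $J_1(\mathbb{R}^2)$.

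For the second step, assume $K$ is completely monotone and let $(U,Z)$ be any couple satisfying (i)--(iii). Since condition (iii) is precisely \eqref{eq:U0} and condition (ii) provides the martingale time-change structure required in the setup of Proposition~\ref{P:charcomparison}, that proposition applies and delivers, for each $0 \leq s < t$ and each $w \in \mathbb{C}$ with $\Re(w) \leq 0$, the exponential-affine representation of the conditional Laplace transform $\mathbb{E}[\exp(w U_{s,t}) \mid \mathcal{F}_s]$ in terms of the solution $\psi$ of the Riccati--Volterra equation \eqref{eq:Ricvariation}. Under complete monotonicity, this equation admits a unique $\mathbb{C}$-valued solution with $\Re(\psi) \leq 0$; this is a classical fact in the affine-Volterra literature, see \cite{abijaber2019affine, abi2021weak}. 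Uniqueness of $\psi$ yields uniqueness of all conditional Laplace transforms of the increments of $U$, hence of the finite-dimensional distributions of $U$, hence of its law. Finally, condition (ii) identifies $Z$, via Dambis--Dubins--Schwarz, as a time-change $\widetilde{W}_{U}$ of a standard Brownian motion, so the conditional law of $Z$ given $U$ is fully determined, and the joint law of $(U,Z)$ is unique.

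The main obstacle is the uniqueness of the Riccati--Volterra equation \eqref{eq:Ricvariation} under only $L^1$ regularity of $K$ with a possible singularity at the origin, where standard contraction arguments break down. The cleanest route is to invoke the resolvent-of-the-second-kind technique already used in Section~\ref{S:Resolvent} and in \cite{abi2021weak}, which for completely monotone $K$ yields the required uniqueness via a Gronwall-type estimate on the integrated kernel. A minor additional point is to check the measurability requirements needed to apply Proposition~\ref{P:charcomparison} to every accumulation point, which follows readily from the continuity of $U$ and $Z$ asserted in (i)--(ii).
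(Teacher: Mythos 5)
Your first paragraph is exactly the paper's argument for the main assertion: tightness plus uniqueness of the law of the accumulation points gives convergence of the whole sequence by the standard subsequence principle. That part is correct and needs no further comment.

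The issue is with your treatment of the "in particular" clause. The paper disposes of it by citing the weak uniqueness result of \cite{abi2021weak} (Theorems~2.2 and 2.5, Example~2.4) for completely monotone kernels; you instead attempt to rederive that uniqueness, and the derivation has two genuine gaps. First, uniqueness of $\psi$ in \eqref{eq:Ricvariation} together with the formula \eqref{eq:charU} does not directly yield uniqueness of the finite-dimensional distributions of $U$. The conditional Laplace transform of $U_{s,t}$ given $\mathcal F_s$ is expressed through $G_s$ in \eqref{eq:Gs}, which is a functional of the entire past of the \emph{pair} $(U,Z)$ (it involves $dZ_{0,r}$), not of $U$ alone. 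An iterated-conditioning argument over a time grid therefore requires knowing the joint conditional law of the increments $(U_{s,t},Z_{s,t})$ given $\mathcal F_s$ — Proposition~\ref{P:charcomparison} only provides the marginal in $U_{s,t}$ — or, as in the cited reference, one must work with the full time-dependent Fourier--Laplace functional $\mathbb E[\exp(\int_0^T f\,dU + \cdots)]$ and its deterministic Riccati--Volterra characterization. Taking $s=0$ only pins down one-dimensional marginals. Second, the closing step "the conditional law of $Z$ given $U$ is fully determined by Dambis--Dubins--Schwarz" is false as stated: the DDS Brownian motion $\widetilde W$ is in general not independent of $U$ (indeed $Z$ drives $U$ through (iii)), so the representation $Z=\widetilde W_U$ does not recover the joint law of $(U,Z)$ from the law of $U$. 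If you want to identify $Z$ from $U$ you should instead use (iii) itself: for $c>0$ and $K$ completely monotone the convolution operator $f\mapsto \int_0^\cdot K(\cdot-s)f(s)\,ds$ is injective, so $Z$ is a measurable functional of $U$ by deconvolution — but that is a different argument from the one you wrote. The cleanest fix is to do what the paper does and invoke the weak uniqueness theorem for affine Volterra equations with $L^1$ completely monotone kernels directly, rather than reconstructing it from the single-increment Laplace transform.
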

\begin{proof}
    The first point is direct from tightness and uniqueness of the limiting law.~When $K$ is completely monotone, the weak uniqueness is given by \citet[Theorem~2.2, Example~2.4, Theorem~2.5]{abi2021weak}.
\end{proof}

\section{Numerical illustration}
\label{section:numerical}

For our numerical illustrations, we will consider the shifted fractional kernel 
\begin{align}
K_{H,\epsilon}(t)={\beta (\epsilon + t)^{H-1/2}}, 
\end{align}
for some  $\beta \geq 0$, $\epsilon \geq 0$ and $H\leq 1/2$. Note that if $\epsilon >0$, then the shifted fractional kernel is locally integrable for any $H\leq 1/2$, whereas for $\epsilon=0$, the kernel is singular at $0$ and locally integrable whenever $H\in (-1/2,1/2]$. 

In addition, we will consider an input curve of the form
\begin{align}
   g_0(t) = V_0 + a\int_0^t K_{H,\epsilon}(r) dr = V_0 + a \beta \frac{(\epsilon + t)^{H + 0.5} - \epsilon^{H + 0.5}}{(H + 0.5 )},
\end{align}
for some $V_0,a\geq 0$.

It follows that the first part in \eqref{eq:alphai}  is explicitly given by 
$$ \int_{t_i}^{t_{i+1}} g_0(s) ds = V_0  (t_{i+1} -t_i) +  a \beta \frac{(\epsilon + t_{i+1})^{H + 1.5} -(\epsilon +t_i)^{H + 1.5}}{(H + 1.5 )(H + 0.5 )} - a \beta  (t_{i+1}-t_i) \frac{\epsilon^{H+0.5}}{(H+0.5)}    $$
and that $k_\ell$ in \eqref{eq:kij} is equal to 
$$ k_\ell = \beta \frac{(\epsilon+(\ell +1)\frac{T}{n}))^{H + 0.5} -(\epsilon+\ell \frac{T}{n})^{H + 0.5}  }{(H + 0.5 )}, \quad \quad \ell=0,\ldots, n. $$
We take $\beta =1/ \Gamma(H+0.5)$, where $\Gamma$ is the Gamma function.

We collect the chosen parameter values in Table~\ref{tab:parameter_cases}. Case 1 corresponds to a use case from \cite*{abi2019multifactor, richard2023discrete} and features a nonzero $b$. The remaining cases set $b = 0$, which is common in practice. Case 3 explores an extreme negative value of $H$, while Case 4 corresponds to the shifted fractional kernel. The latter better aligns with market option prices in practice, particularly capturing the concave at-the-money skew  (in log-log plot) observed for equity options \citep{abi2024volatility, guyon2022does}.~The parameter $\rho$ is not used in the Section~\ref{S:U} for the simulation of the process $U$ but will become relevant in the Section~\ref{S:HestonNumerics}, where the scheme is applied to the Volterra Heston model.

\begin{table}[h!]
\centering
\[
\begin{array}{|c|c|c|c|c|c|c|c|}
\hline
\textbf{Case} & a & b & c & \rho & V_0 & H & \epsilon \\ 
\hline
\text{\textbf{Case 1}} & 0.02 & -0.3 & 0.3 & -0.7 & 0.02 & 0.1 & 0 \\ 
\hline
\text{\textbf{Case 2}} & 0.04 & 0 & 0.7 & -0.7 & 0.02 & 0.1 & 0 \\ 
\hline
\text{\textbf{Case 3}} & 0.04 & 0 & 0.7 & -0.7 & 0.02 & -0.3 & 0 \\ 
\hline
\text{\textbf{Case 4}} & 0.04 & 0 & 0.9 & -0.7 & 0.06 & 0 & 1/52 \\ 
\hline
\end{array}
\]
\caption{Parameter values for the four cases.}
\label{tab:parameter_cases}
\end{table}

As a comparison, we also implement a scheme derived from an explicit discretization of \eqref{eq:Ust} between $t_i$ and $t_{i+1}$ using \eqref{eq:Gapprox} with  $\alpha_i$ given by \eqref{eq:alphai} with $\left(\bar U,\bar Z\right)$ instead of $\left(\widehat U, \widehat Z\right)$ defined as follows:  
\begin{align}\label{eq:explicitscheme}
    \bar U_{i,i+1} = (\alpha_i)^+, \quad  \bar Z_{i,i+1} = \sqrt{\bar U_{i,i+1}} \, \xi_i, \quad i = 0, \ldots, n-1,
\end{align}  
where $(x)^+ := \max(x,0)$ for $x \in \mathbb{R}$, and $(\xi_i)_{i=0,\ldots,n-1}$ are independent standard Gaussian variables. Since $\bar U_{i,i+1}$ may become negative, we take its positive part in the definition of $\bar Z_{i,i+1}$.   A similar scheme is studied in \cite*[Equation (9)]{richard2023discrete}, where the authors approximate  
$\int_{t_i}^{t_{i+1}} K(t - t_j) dt \approx (t_{i+1} - t_i) K(t_i - t_j).$  
Our formulation here, which computes the integrals in \eqref{eq:kij} exactly, is expected to be more precise, particularly for singular kernels like the fractional kernel. In any case, \cite*{richard2023discrete} show that such an explicit scheme on the integrated process $U$ outperforms  numerically an explicit Euler scheme built on the density process $V$ of $U$, whenever such a process exists.

\subsection{Numerical illustrations for the  process $U$}\label{S:U}

In this section, we examine the performance of the iVi scheme given by Algorithm~\ref{alg:simulation} for quantities involving the accumulated integrated process
$$U_{0,T} = \int_0^T V_s \, ds = \sum_{i=0}^{n-1} \int_{t_i}^{t_{i+1}} V_s \, ds = \sum_{i=0}^{n-1} U_{t_i,t_{i+1}},$$
which can be naturally approximated using our scheme by
\begin{tcolorbox}[colback=gray!20, colframe=gray!80, sharp corners]
$$\widehat{U}_{0,T} := \sum_{i=0}^{n-1} \widehat{U}_{i,{i+1}}.$$
\end{tcolorbox}
We focus on the Laplace transform $\mathbb{E}[e^{-U_1}]$. The explicit reference value  is computed using  the Volterra Riccati equation as recalled  in \eqref{eq:hestonchar}.

\begin{figure}[h!]
    \centering
\includegraphics[width=1.\textwidth]{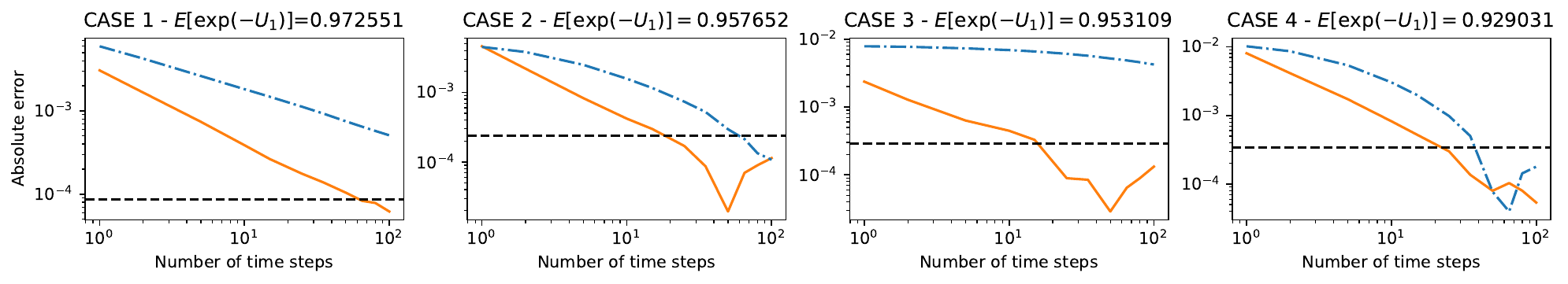}
    \caption{Errors on  Laplace transform of $\hat U_T$ in terms of the number of time steps  for  the four cases with  $T = 1$ and 1 million sample paths. Plain line for the iVi scheme, blue dotted line for the benchmark.}
    \label{fig:U} 
\end{figure}

Figure~\ref{fig:U} shows the absolute error between the schemes and reference values, varying the number of time steps on a uniform grid from 1 to 100 in log-log scale. Simulations use 1 million sample paths. The horizontal black dotted line represents three standard deviations of the Monte Carlo estimator, below which comparisons can be considered non-significant. 

We observe that the implicit iVi scheme of Algorithm~\ref{alg:simulation} consistently outperforms its explicit counterpart given by \eqref{eq:explicitscheme} across all parameter sets, with less than 100 time steps, particularly in the extreme regime of Case 3 with $H = -0.3$. Additionally, in Case 1, when $b \neq 0$, convergence appears to be slower compared to cases where $b = 0$, which is consistent with Section~\ref{S:Resolvent}.

%To sum up, for quantities on the integrated process $U$, 
%the iVi scheme converges and competes favorably with the QE scheme,  particularly  in challenging regimes with high volatility-of-volatility and strong mean reversion.

\subsection{Numerical illustrations  for the Volterra Heston model}\label{S:HestonNumerics}
In this section, we test our iVi scheme on   the class of Volterra Heston models where the stock price $S$ is given by 
\begin{align}\label{eq:HestonS}
    dS_t = S_t  \left(\rho dW_{U_{0,t}} + \sqrt{1-\rho^2}dW_{U_{0,t}}^{\perp}\right), \quad S_0 >0,
\end{align}
where $U$ is given by \eqref{eq:U0}, $\rho \in [-1,1]$ and $W^{\perp}$ is a standard Brownian motion independent of $W$.  We refer to \cite[Section 7]{abi2021weak} for $L^1_{\rm loc}$ kernels and \cite*[Section 7]{abijaber2019affine} for $L^2_{\rm loc}$ kernels. In the context of the fractional kernel, for $H \in (-1/2,0]$ we recover hyper-rough models in the sense of  \cite{jusselin2020no} and for $H \in (0,1/2]$, the rough Heston model of \cite{el2019characteristic}.

In order to simulate $S$ it suffices to observe that 
$$ \log S_{t_{i+1}} = \log S_{t_i}  - \frac 1 2 U_{t_{i},t_{i+1}} + \rho Z_{t_i, t_{i+1}}  + \sqrt{1-\rho^2} \int_{t_i}^{t_{i+1}} dW_{U_{0,s}}^{\perp}, $$
and that conditional on $U_{t_i,t_{i+1}}$, $\int_{t_i}^{t_{i+1}} dW_{U_{0,s}}^{\perp} \sim \mathcal N(0, U_{t_i, t_{i+1}})$, for $i=0,\ldots, n-1$.

\begin{tcolorbox}[colback=gray!20, colframe=gray!80, sharp corners]
We can therefore simulate $(\log \widehat S_i)_{i=0,\ldots,n}$
using the outputs $\left(\widehat U, \widehat Z\right) $ of Algorithm \eqref{alg:simulation} using:
\begin{align}
\log \widehat S_0 &= \log S_0, \\
    \log \widehat S_{i+1} &= \log \widehat S_{i}  - \frac 1 2 \widehat U_{{i},{i+1}} + \rho \widehat Z_{i, {i+1}}  + \sqrt{1-\rho^2} \sqrt{ \widehat U_{{i},{i+1}}} N_i, \quad i=0,\ldots, n-1,  \label{eq:Sii}
\end{align}
where $(N_i)_{i=0,\ldots, n-1}$ are i.i.d.~standard Gaussian random variables. 
\end{tcolorbox}

Clearly, the update rule \eqref{eq:Sii} at the $i$-th step $i$ can be incorporated in Algorithm~\ref{alg:simulation} right after  \eqref{eq:Zii}.~Also, the $(N_i)_{i=0,\ldots, n-1}$ need to be taken independent  from the Gaussian and Uniform random variables used for the sampling of the Inverse Gaussian distribution in Algorithm~\ref{alg:IG_sampling}.

The following remark rewrites the model in a more conventional form when $K$ is locally square-integrable.
\begin{remark}\label{R:hyper}
	Assume that $K \in L^2_{\rm loc}$ (e.g. the fractional kernel from \eqref{eq:fractionalkernel} with $H > 0$). It follows from  \citet[Lemma~2.1]{abi2021weak} that  $U_t=\int_0^t V_s ds$ where $(S,V)$ is a Volterra Heston model in the terminology of \citet[Section 7]{abijaber2019affine}; (rough Heston model of \citet{el2019characteristic} for the fractional kernel) satisfying 
	\begin{align*} 
	d\log S_t &= -\frac 1 2 V_tdt + \sqrt{V_t}d\widetilde B_{t},  \quad S_0>0,\\
	V_t  &= g_0(t) + \int_0^t K(t-s)\left(b\,V_s\,ds +  c\,\sqrt{V_s}\, d\widetilde W_s\right),
	\end{align*}	
	for some Brownian motions $\widetilde B$ and $\widetilde W$ with correlation $\rho$ obtained from standard martingale representation theorems on a possible extension of the probability space, see for instance \citet[Proposition V.3.8]{revuz2013continuous}. For the fractional kernel with $H \in (0,1/2)$, the sample paths of $V$ are  H\"older continuous of any order strictly less than $H$ and the process $V$ is said to be `rough'.
\end{remark}

If $K$ is no longer in $L^2_{\rm loc}$ (e.g. the fractional kernel from \eqref{eq:fractionalkernel} with $H\leq0$), not only Fubini's interchange  breaks down, but it can also be shown that $U$ is nowhere differentiable almost surely, see \citet[Proposition 4.6]{jusselin2020no}. In this case, one cannot really make sense of the spot variance $V$ and is stuck with the `integrated variance' formulation \eqref{eq:U0}, justifying the appellation ‘hyper--rough' for such equations.

For our numerical experiment, we will consider call options on $S$ for  the four cases of Table~\ref{tab:parameter_cases} with maturity $T=1$. Reference values are computed using Fourier inversion techniques on the characteristic function of the log-price which is  known  in the Volterra Heston model, see \eqref{eq:hestonchar}.  Simulations use 1 million sample paths and we also benchmark against the explicit scheme  \eqref{eq:explicitscheme}.  

Figure~\ref{fig:call} displays the absolute error between the schemes and reference values, varying the number of times steps from 1 to 100 in log-log scale for  At-the-Money (ATM) options.  We can  again  observe   a significantly faster convergence of the implicit iVi scheme compared to the explicit scheme in all for cases, with less than 100 time steps, and in particular for the extreme case 3 with $H=-0.3$.

\begin{figure}[h!]
    \centering
\includegraphics[width=.8\textwidth]{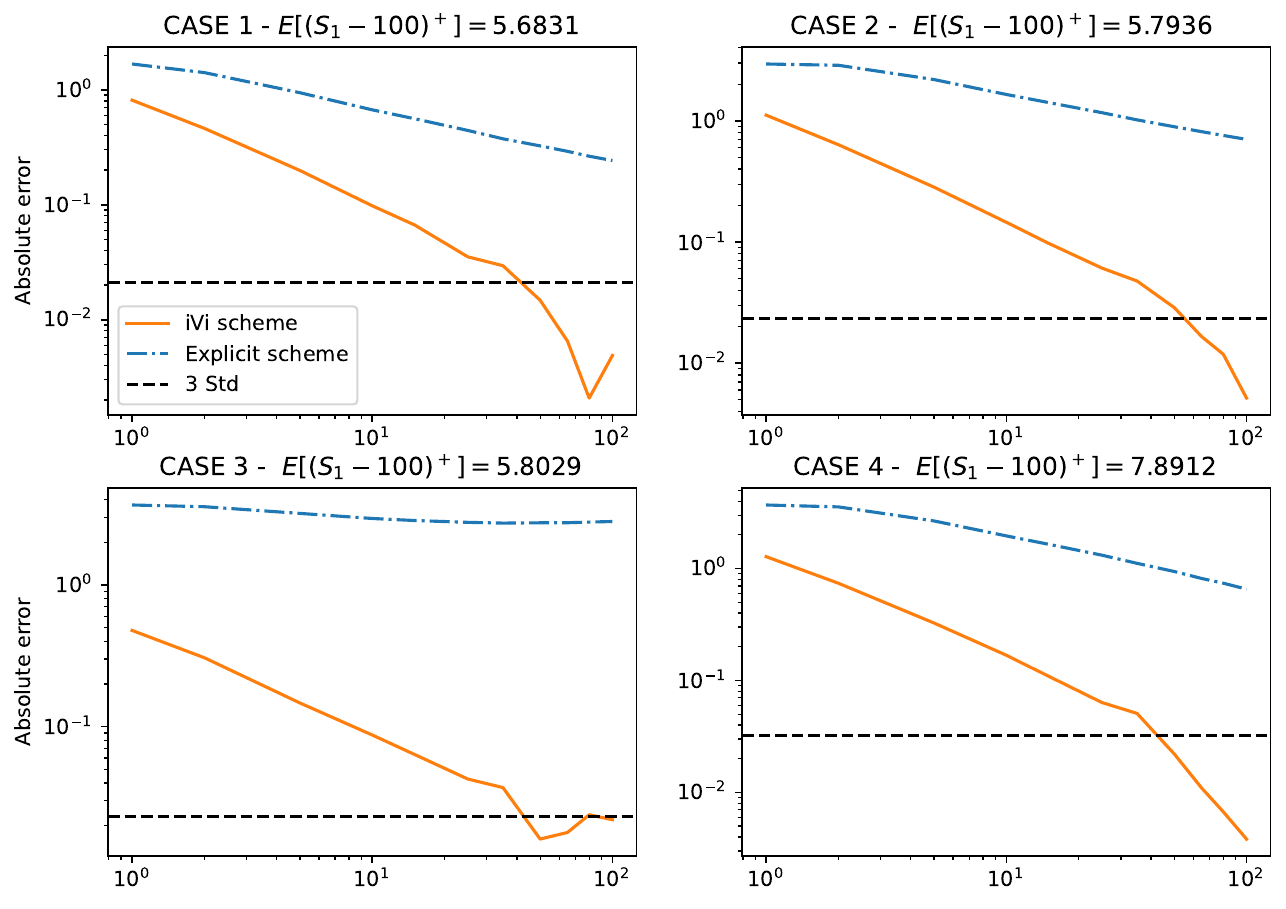} \\
    \caption{ATM Call options  on $S$:  error in prices in  terms of number of time steps for the four cases  with 1 million sample paths. Plain line for the iVi scheme, blue dotted line for the benchmark.}
    \label{fig:call} % Optional: for referencing the figure
\end{figure}

To illustrate the impact of the Hurst index $H$ on the behavior of the scheme, we set the parameters as in Case 2 in Table~\ref{tab:parameter_cases}, except for the parameter $H$, which we let vary between 0.1 and -0.49. For each $H$, we plot in Figure~\ref{fig:surface} the full slice of implied volatilities with maturity $T = 1$ for time steps varying between 1 and 32.  Remarkably, with 32 time steps, the slices of the iVi scheme and the reference values become nearly indistinguishable for all values of $H$. Even more strikingly, the convergence of the iVi scheme appears to be faster as $H$ decreases, which is typically the opposite of what is observed in existing schemes. As $H \to -0.5$, even with a single time step, the scheme closely approximates the implied volatility slice, in line with the limiting behavior derived in \cite*{jaber2025hyper}.

\section{Proof of Theorem~\ref{T:nonnegative}}\label{S:ProofMainWellDefined}

In this section, we prove Theorem~\ref{T:nonnegative}. 

As already mentioned, the idea is to make the problem  continuous in time to bypass cumbersome discrete recursions. For this, we define new random variables $\left(\bar{U}_{i,i+1}\right)_{i \leq n-1}\,$, $\left(\bar{Z}_{i,i+1}\right)$ and $\left(\bar{\alpha}_i\right)_{i \leq n-1}$ that are constructed recursively using Algorithm \ref{alg:simulation}, except for the following point: 
\begin{center}
    For any $i \leq n-1\,$, $\bar{U}_{i,i+1}$ is constructed using \eqref{eq:hatUsample} if $\bar{\alpha}_i \geq 0\,$, and $\bar{U}_{i,i+1} = 0$ otherwise.
\end{center}
Thus, the sequence is well-defined.
Note that at step $i \leq n-1\,$, $\bar{Z}_{i,i+1}$ is still defined as in \eqref{eq:Zii}, i.e. 
$$
\bar{Z}_{i,i+1} := \frac{1}{ck_0}\, \left(\left(1 - bk_0\right)\, \bar{U}_{i,i+1} - \bar{\alpha}_i\right)\,,
$$
and the sequence $\left(\bar{\alpha}_i\right)_{i \leq n-1}$ is defined recursively with $\bar{\alpha}_0 = \alpha_0$ and the update \eqref{eq:alphai} with $\left(\widehat{U}_{i,i+1}\right)_{i \leq n-1}$ and $\left(\widehat{Z}_{i,i+1}\right)_{i \leq n-1}$ replaced by $\left(\bar{U}_{i,i+1}\right)_{i \leq n-1}$ and $\left(\bar{Z}_{i,i+1}\right)_{i \leq n-1}\,$.
\begin{remark}
    \label{remark:alpha_tilde_positive}
    If we manage to prove that for any $i \leq n-1\,$, $\bar{\alpha}_i \geq 0\,$, then it is clear that 
    $$
    \alpha_i = \bar{\alpha}_i \,, \quad i \leq n-1\,,
    $$
     and the $\left(\alpha_i\right)_{i \leq n-1}$ are all nonnegative, showing that Algorithm \ref{alg:simulation} is well-defined.
\end{remark}
In that purpose, we define the càdlàg process $A$ recursively as follows:
\begin{itemize}
    \item 
 For $i=0$, on $[0,t_1)$, we  define 
\begin{align}\label{eq:defA0}
    A(t) = \bar g_0(t_0), \quad t \in [0,t_1).
\end{align}
\item For $i\geq 1$, having constructed $A$ on $[0,t_i)$, we set $A$ on $[t_i, t_{i+1})$ by:
\begin{align}\label{eq:defAi}
    A(t) = \bar g_0(t_i) + \sum_{j=1}^i \frac{\bar K(t - t_j)}{\bar K(0)} \left( \bar U_{j-1, j} - A(t_{j} -) \right), \quad t \in [t_{i},t_{i+1}),
\end{align}
\iffalse where $\widehat U_{j-1, j}$ are constructed using \eqref{eq:hatUsample} if $\alpha_{j-1} \geq 0$, and $\widehat U_{j-1, j} = 0$ in case $\alpha_{j-1} < 0$, so that $A$ is well-defined.\fi and we finally set $A(t_n)=A(T)=A(T-)$.
\end{itemize}

In order to prove Theorem~\ref{T:nonnegative}, it suffices to prove that 
$$\bar \alpha_i\geq 0, \quad i=0, \ldots, n-1\,, $$ as mentioned in Remark \ref{remark:alpha_tilde_positive}.
We will achieve this by showing that  $A(t_{i+1}-) = \bar \alpha_i$, for all $i=0, \ldots, n-1$,  and 
$ A(t) \geq 0  $, for all $t\leq T$, successively in Lemmas~\ref{L:A1} and \ref{L:A2}.

\begin{lemma}\label{L:A1} 
    Let $K, g_0 \in L^1_{\rm loc}(\R_+, \R)$. Then, 
    \begin{align}
        A(t_i) &= \bar g_0(t_i) - \bar g_0(t_{i-1}) + \bar U_{i-1,i}, \label{eq:LA11} \\
        A(t_{i+1}-) &= \bar \alpha_i,  \quad \quad \quad \quad \quad \quad \quad \quad \quad \quad \quad  i= 0, \ldots, n-1. \label{eq:LA12}  
    \end{align}
\end{lemma}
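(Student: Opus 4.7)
The plan is to prove both identities simultaneously by induction on $i$. Equation \eqref{eq:LA11} is understood only for $i\geq 1$, while \eqref{eq:LA12} must be checked starting from $i=0$. For the base case $i=0$, on $[0,t_1)$ the process $A$ is constant equal to $\bar g_0(0)$ by \eqref{eq:defA0}, and by the definition of $\bar g_0$ in \eqref{eq:barGbarK} combined with $\alpha_0 = \int_0^{t_1} g_0(s)\,ds$, one obtains $A(t_1-) = \bar g_0(0) = \alpha_0 = \bar \alpha_0$.

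For the inductive step at $i\geq 1$, I first observe that $\bar K$ is continuous on $\mathbb{R}_+$, since $\bar K(t) = \int_t^{t+T/n} K(u)\,du$ with $K \in L^1_{\rm loc}$. Hence the left limit $A(t_i-)$ is computed by evaluating the formula \eqref{eq:defAi} defining $A$ on $[t_{i-1},t_i)$ at $t = t_i$ (with the convention that the empty sum vanishes when $i=1$). Comparing this with $A(t_i)$ obtained from \eqref{eq:defAi} on $[t_i,t_{i+1})$, the partial sums for $j = 1,\ldots,i-1$ carry identical weights $\bar K(t_i - t_j)/\bar K(0)$ and therefore cancel. Isolating the new $j = i$ term, whose weight is $\bar K(0)/\bar K(0) = 1$ and which therefore equals $\bar U_{i-1,i} - A(t_i-)$, yields
$$A(t_i) = \bar g_0(t_i) - \bar g_0(t_{i-1}) + \bar U_{i-1,i},$$
which is \eqref{eq:LA11}. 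For \eqref{eq:LA12}, I evaluate \eqref{eq:defAi} on $[t_i,t_{i+1})$ as $t \to t_{i+1}^-$ and invoke the induction hypothesis $A(t_j-) = \bar\alpha_{j-1}$ for $j = 1,\ldots,i$:
$$A(t_{i+1}-) = \bar g_0(t_i) + \sum_{j=1}^i \frac{\bar K(t_{i+1}-t_j)}{\bar K(0)}\bigl(\bar U_{j-1,j} - \bar \alpha_{j-1}\bigr).$$
Re-indexing with $k = j-1$ and using the uniform-grid identity $t_{i+1} - t_j = t_i - t_{j-1}$, the right-hand side collapses to exactly the recursive definition \eqref{eq:alphai2} of $\bar \alpha_i$, closing the induction.

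The main obstacle is purely bookkeeping: one must carefully align two distinct closed-form expressions for $A$ on adjacent intervals through the continuity of $\bar K$ at the common endpoint, and manage the shifted re-indexing step. No analytic estimate is required, and the fact that the diagonal weight $\bar K(0)/\bar K(0)$ equals $1$ is precisely what makes the new $j=i$ term in the jump $A(t_i)-A(t_i-)$ collapse cleanly, isolating the desired increment $\bar U_{i-1,i}$ together with the $\bar g_0$ shift.
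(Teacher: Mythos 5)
Your proof is correct and follows essentially the same route as the paper: compare the closed-form expressions for $A$ on the two adjacent intervals at $t_i$ (using continuity of $\bar K$) to isolate the $j=i$ term for \eqref{eq:LA11}, then evaluate at $t_{i+1}-$ and re-index using $A(t_{j}-)=\bar\alpha_{j-1}$ to recover the recursion defining $\bar\alpha_i$ for \eqref{eq:LA12}. The only difference is presentational (you package both identities into one induction, whereas the paper proves \eqref{eq:LA11} directly and only uses induction for \eqref{eq:LA12}), which changes nothing of substance.
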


\begin{proof}
    For $i=0$, we readily have from the definition of $A$ in \eqref{eq:defA0} that 
    \begin{align}
        A(t_0) = A(t_1 - ) = \bar g_0(t_0) = \int_0^{\frac T n} g_0(s) \, ds = \alpha_0, 
    \end{align}
    which yields \eqref{eq:LA11}-\eqref{eq:LA12} for $i=0$.  
    
   $\bullet$ We first prove \eqref{eq:LA11} for $i\geq 1$. From the definition of $A$ on $[t_i,t_{i+1})$ in \eqref{eq:defAi}, we obtain for $t=t_i$ that 
    \begin{align}
        A(t_i) =  \bar g_0(t_i) + \bar U_{i-1,i} - A(t_i -) +  \sum_{j=1}^{i-1} \frac{\bar K(t_i - t_j)}{\bar K(0)} \left( \bar U_{j-1, j} - A(t_{j} -) \right).
    \end{align}
   Now, using the definition of $A$ on $[t_{i-1}, t_i)$, and evaluating at $t=t_i-$, we get that 
      \begin{align}
        A(t_i-) =  \bar g_0(t_{i-1}) +  \sum_{j=1}^{i-1} \frac{\bar K(t_i - t_j)}{\bar K(0)} \left( \bar U_{j-1, j} - A(t_{j} -) \right),
    \end{align}
where we used the fact that $\bar K(t_i - t_j) = \bar K( (t_i -) - t_j)$, by continuity of the kernel $\bar K$, {which is straightforward using the dominated convergence theorem since $K \in L^1([0\,,T])\,$}. Combining the the two identities above leads to \eqref{eq:LA11}.

$\bullet$ In order to argue  \eqref{eq:LA12} for $i\geq 1$, we proceed by induction. We fix $i\geq 1$ and assume that $A(t_{j+1}-)=\bar \alpha_j$ for $j=0,\ldots, i$, we want to prove the equality for $(i+1)$. Using the definition of $A$, the induction assumption  and a change of variables, we obtain that 
\begin{align}
    A(t_{i+1}-) &= \bar g_0(t_i) +  \sum_{j=1}^{i} \frac{\bar K(t_{i+1} - t_j)}{\bar K(0)} \left( \bar U_{j-1, j} - A(t_{j} -) \right) \\
    &= \bar g_0(t_i) +  \sum_{j=0}^{i-1} \frac{\bar K(t_{i+1} - t_{j+1})}{\bar K(0)} \left( \bar U_{j, j+1} - A(t_{j+1} -) \right) \\
    &= \bar g_0(t_i) +  \sum_{j=0}^{i-1} \frac{k_{i-j}}{k_0} \left( \bar U_{j, j+1} - \bar \alpha_{j} \right)  \\
    &= \bar \alpha_i
\end{align}
using the definition of $\bar{\alpha}$ and $\bar U\,,$
along with $\bar K(t_{i+1} - t_{j+1}) = k_{i-j}$ where $k_{i-j}$ is given by \eqref{eq:kij}.
\end{proof}

Before stating Lemma~\ref{L:A2}, we recall a key property of non-increasing kernels that preserve nonnegativity when adding an initial term $x_0 \geq 0$. 

\begin{proposition}\label{P:pp}
Let \( K \colon \mathbb{R}_+ \to \mathbb{R}_+ \) be a non-increasing kernel with \( K(0) > 0 \) that preserves nonnegativity. Let \( i \geq 1 \), \( 0 \leq t_1 < \cdots < t_i \), and \( x_0, \ldots, x_i \in \mathbb{R} \) be such that \( x_0 \geq 0 \) and
\[
 x_0 + \sum_{j=1}^\ell  K(t_\ell - t_{j})  x_{j} \geq 0, \quad  \ell= 1, \ldots, i.
\]
Then, we have
\[
x_0 + \sum_{j : t_j \leq t}K(t - t_j)  x_j  \geq 0, \quad  t \geq 0.
\]
\end{proposition}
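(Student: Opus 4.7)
The plan is to reduce Proposition \ref{P:pp} to the nonnegativity-preserving property of $K$ by inserting a single auxiliary point at time $0$ carrying a suitably chosen positive weight that absorbs the initial constant $x_0$. The non-increasing monotonicity of $K$ will play a dual role, both in verifying the discrete nonnegativity hypothesis for the enlarged sequence and in converting the resulting global inequality back to the desired form.

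I will first dispose of the regime $t \in [0, t_1)$ trivially, as the sum reduces to $x_0 \geq 0$. For each $\ell \in \{1, \ldots, i\}$ and each $t \in [t_\ell, t_{\ell+1})$ (with the convention $t_{i+1} := +\infty$), I will introduce the calibrated weight $y_0 := x_0 / K(t_\ell) \geq 0$, provisionally assuming $K(t_\ell) > 0$, and apply the preservation property to the enlarged sequence $(0, t_1, \ldots, t_\ell)$ with associated weights $(y_0, x_1, \ldots, x_\ell)$. Writing $S_m := \sum_{j=1}^m K(t_m - t_j) x_j$ and using $K(t_m) \geq K(t_\ell)$ (from the non-increasing property together with $t_m \leq t_\ell$), I will obtain $K(t_m) y_0 \geq x_0$, and therefore
\[
K(t_m) y_0 + S_m \geq x_0 + S_m \geq 0, \qquad m = 1, \ldots, \ell,
\]
by the standing hypothesis. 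Together with the trivial bound $K(0) y_0 \geq 0$ at the added point $0$, this will verify the discrete nonnegativity hypothesis required to invoke the preservation property on the enlarged sequence.

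Applying the preservation property will then give, for all $t \geq 0$,
\[
K(t) y_0 + \sum_{j : t_j \leq t, \, j \leq \ell} K(t - t_j) x_j \geq 0.
\]
Specialising to $t \in [t_\ell, t_{\ell+1})$, all times $t_j$ with $j \leq \ell$ satisfy $t_j \leq t$, and the non-increasing property applied in the reverse direction will yield $K(t) \leq K(t_\ell)$, hence $K(t) y_0 \leq x_0$. Substituting into the previous display will produce the desired inequality $x_0 + \sum_{j=1}^\ell K(t - t_j) x_j \geq 0$.

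The only delicate point is the degenerate case $K(t_\ell) = 0$, where the calibration $y_0$ is ill-defined. I expect to handle this by an approximation argument, for instance by approximating $K$ with a strictly positive, non-increasing, nonnegativity-preserving kernel and passing to the limit, or by splitting into sub-regimes where $K$ remains strictly positive. Beyond this technicality, the proof amounts to a short reduction in which the non-increasing monotonicity of $K$ is invoked twice in opposite directions, which is precisely the structural feature that makes the statement work.
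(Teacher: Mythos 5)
The paper itself does not prove this proposition: it simply refers to \citet[Proposition~2.8]{alfonsi2025nonnegativity}. Your argument is therefore a genuine self-contained attempt, and its core mechanism is correct: inserting an auxiliary point at time $0$ with weight $y_0=x_0/K(t_\ell)$ calibrated to the \emph{left endpoint} of the target interval $[t_\ell,t_{\ell+1})$, verifying the discrete hypotheses via $K(t_m)y_0+S_m=(K(t_m)y_0-x_0)+(x_0+S_m)\ge 0$ for $m\le\ell$ (monotonicity used one way), and then converting back via $K(t)y_0\le x_0$ for $t\ge t_\ell$ (monotonicity used the other way). Restricting the enlarged family to $(0,t_1,\dots,t_\ell)$ is not just convenient but necessary, since the discrete check would fail at $t_k$ for $k>\ell$; you do restrict, so this is fine. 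One small point to add: if $t_1=0$ the enlarged family has a repeated time, which the definition of nonnegativity preservation does not allow, so you must merge $y_0$ into $x_1$ rather than add a new point; the computation is unchanged.

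The genuine gap is the degenerate case $K(t_\ell)=0$, and neither of your proposed fixes is convincing as stated: there is no reason a general non-increasing nonnegativity-preserving kernel admits strictly positive approximants that still preserve nonnegativity (the class is not obviously stable under adding a constant or taking maxima), and even granting such approximants, the hypotheses $x_0+S_\ell\ge 0$ are not stable under perturbing $K$ unless $x_0$ is perturbed too. The good news is that the case can be \emph{eliminated} rather than approximated. Applying the preservation property to the two-point configuration $t_1=0$, $t_2=s$ with $x_1=1$, $x_2=-K(s)/K(0)$ (both discrete checks hold, the second with equality) yields the necessary condition $K(0)K(t)\ge K(s)K(t-s)$ for all $0\le s\le t$. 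If $K(t_0)=0$ for some $t_0>0$, taking $s=t_0/2$ forces $K(t_0/2)^2\le 0$, hence $K(t_0/2)=0$; iterating gives $K(t_0/2^n)=0$ for all $n$, and monotonicity then forces $K\equiv 0$ on $(0,\infty)$. In that residual situation the claim is immediate, since $\sum_{j:\,t_j\le t}K(t-t_j)x_j$ reduces to $K(0)x_\ell\,\mathbbm{1}_{\{t=t_\ell\}}$ and the hypothesis at level $\ell$ is exactly $x_0+K(0)x_\ell\ge 0$. With this dichotomy (either $K>0$ on $\mathbb{R}_+$, so your main argument applies on every interval, or $K$ vanishes on $(0,\infty)$, which is trivial), your proof closes completely.
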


\begin{proof}
    See \cite[Proposition 2.8]{alfonsi2025nonnegativity}.
\end{proof}

\begin{lemma}\label{L:A2} Let $K, g_0 \in L^1_{\rm loc}(\R_+, \R)$.~Assume that  $\bar g_0:[0,T]\to \R_+$ is non-decreasing and nonnegative  and that $\bar K:\R_+ \to \R_+$ is non-increasing and preserves nonnegativity.  Then, 
$$ A(t) \geq 0, \quad t \in [0,T].$$
\end{lemma}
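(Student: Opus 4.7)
The plan is to proceed by strong induction on $i$ and show that $A(t) \geq 0$ on $[0, t_{i+1})$ for $i = 0, \ldots, n-1$. The base case is immediate since on $[0, t_1)$, $A(t) = \bar{g}_0(t_0) = \bar{g}_0(0) \geq 0$ by assumption. For the inductive step, suppose that $A \geq 0$ on $[0, t_i)$; then in particular $A(t_\ell-) = \bar{\alpha}_{\ell-1} \geq 0$ for $\ell = 1, \ldots, i$ by Lemma~\ref{L:A1}, so all the $\bar{U}_{\ell-1,\ell}$ coincide with genuine $IG$ samples (or $0$ when $\bar{\alpha}_{\ell-1}=0$), and hence $\bar{U}_{\ell-1,\ell} \geq 0$ in any case.

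The main tool is Proposition~\ref{P:pp} applied to the kernel $\bar{K}/\bar{K}(0)$ (still non-increasing and preserving nonnegativity), the initial term $x_0 := \bar{g}_0(t_i) \geq 0$, the grid points $t_1 < \cdots < t_i$, and the increments $x_j := \bar{U}_{j-1,j} - \bar{\alpha}_{j-1}$. The conclusion of that proposition, evaluated at an arbitrary $t \in [t_i, t_{i+1})$, reads
\begin{equation}
\bar{g}_0(t_i) + \sum_{j=1}^i \frac{\bar{K}(t - t_j)}{\bar{K}(0)}\left(\bar{U}_{j-1,j} - \bar{\alpha}_{j-1}\right) \geq 0,
\end{equation}
which, by the defining formula \eqref{eq:defAi} and the identity $A(t_j-) = \bar{\alpha}_{j-1}$ from Lemma~\ref{L:A1}, is exactly $A(t) \geq 0$ on $[t_i, t_{i+1})$.

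It therefore remains only to verify the hypothesis of Proposition~\ref{P:pp}, namely that for each $\ell = 1, \ldots, i$,
\begin{equation}
\bar{g}_0(t_i) + \sum_{j=1}^\ell \frac{\bar{K}(t_\ell - t_j)}{\bar{K}(0)}\left(\bar{U}_{j-1,j} - \bar{\alpha}_{j-1}\right) \geq 0.
\end{equation}
The key observation is that this quantity equals $A(t_\ell) + \bigl(\bar{g}_0(t_i) - \bar{g}_0(t_\ell)\bigr)$: the first term is nonnegative because Lemma~\ref{L:A1} gives $A(t_\ell) = \bar{g}_0(t_\ell) - \bar{g}_0(t_{\ell-1}) + \bar{U}_{\ell-1,\ell}$, which is nonnegative by monotonicity of $\bar{g}_0$ and nonnegativity of $\bar{U}_{\ell-1,\ell}$; the second term is nonnegative since $\ell \leq i$ and $\bar{g}_0$ is non-decreasing.

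The only subtlety — and the reason the induction is needed rather than a single global application of Proposition~\ref{P:pp} — is that the ``initial term'' in the representation of $A$ jumps from $\bar{g}_0(t_{\ell-1})$ to $\bar{g}_0(t_\ell)$ at each grid point. The monotonicity of $\bar{g}_0$ absorbs these upward jumps into the preservation inequality at each level $\ell$, which is what makes the argument go through. Setting $t = T$ at the end handles the last value $A(t_n) = A(T-)$ by definition, concluding the proof.
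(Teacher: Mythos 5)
Your proof is correct and follows essentially the same route as the paper's: induction over the grid intervals, with Proposition~\ref{P:pp} applied to the kernel $\bar K$ (normalized or not is immaterial), initial term $x_0=\bar g_0(t_i)$, and increments $x_j$ proportional to $\bar U_{j-1,j}-A(t_j-)$, using the monotonicity of $\bar g_0$ to upgrade the inequality at level $\ell$ from constant $\bar g_0(t_\ell)$ to constant $\bar g_0(t_i)$. The only cosmetic difference is that you verify $A(t_\ell)\geq 0$ directly from the identity \eqref{eq:LA11} rather than from the induction hypothesis, which if anything slightly streamlines the argument.
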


\begin{proof} We proceed by induction on $i$ with the property that $A$ is nonnegative on  $[t_i,t_{i+1}]$. For $i=0$,  it follows from the definition of $A$ in \eqref{eq:defA0} that for  $t \in [0,t_1)$, $A(t) =  \bar g_0(t_0) \geq 0$, since $\bar g_0 \geq 0$. At $t=t_1,$ we have from \eqref{eq:LA11} that 
$$ A(t_1) = \bar g_0(t_1) - \bar g_0(t_0) + \bar U_{0,1} \geq 0, $$
since $\bar g_0$ is assumed to be non-decreasing and $\bar U_{0,1} \geq 0$ by construction.
For $i \geq 1$, we assume that $A$ is nonnegative on $[0,t_1], [t_1,t_2], \ldots [t_{i-1}, t_i]$ and we want to show that it is nonnegative on $[t_i,t_{i+1}]$. First, at $t=t_{i+1}$, an application of \eqref{eq:LA11}  yields similarly to above that $A(t_{i+1}) \geq 0$. It remains to argue that $A$ is nonnegative on $[t_i,t_{i+1})$ using Proposition~\ref{P:pp}. Set  $x_j= (\bar U_{j-1,j} - A(t_j-))/\bar K(0)$, for $j=1,\ldots, i$. Then, for each $\ell =1, \ldots, i$, we have thanks to the induction assumption  and the definition of $A$ in \eqref{eq:defAi} that
\begin{align}
    0 \leq A(t_\ell) = \bar g_0(t_l) + \sum_{j=1}^\ell \bar K(t_\ell - t_j) x_j.
\end{align}
Now since $\bar g_0$ is non-decreasing we obtain that 
$$ 0 \leq  \bar g_0(t_i) + \sum_{j=1}^\ell \bar K(t_\ell - t_j) x_j, \quad \ell=1, \ldots, i. $$
An application of Proposition~\ref{P:pp}  with the kernel $\bar K$ yields that 
$$ 0 \leq \bar g_0(t_i) + \sum_{j=1}^i \bar  K(t-t_j) x_j = A(t), \quad t \in [t_i,t_{i+1}),$$
and ends the proof. 
\end{proof}

The proof of Theorem~\ref{T:nonnegative} readily follows. 

\begin{proof}[Proof of Theorem~\ref{T:nonnegative}] Fix $i\geq 0$. It follows from \eqref{eq:LA12}, that $\bar \alpha_i  = A(t_{i+1}-)$, which is nonnegative thanks to  Lemma~\ref{L:A2}. This ends the proof, in virtue of Remark \ref{remark:alpha_tilde_positive}.
\end{proof}

\section{Proof of Theorem \ref{theorem:weak_convergence}}
\label{S:ProofConvergence}
We prove Theorem \ref{theorem:weak_convergence} in four steps. First, we derive estimates on the random variables $\left(\alpha_i^n\right)_{i \leq n-1, n \geq 1}\,$, $\left(\widehat{U}^n_{i,i+1}\right)_{i \leq n-1, n \geq 1}$ and $\left(\widehat{Z}^n_{i,i+1}\right)_{i \leq n-1, n \geq 1}\,$. These then allow us to control the processes $\left(U^n\right)_{n \geq 1}$ and $\left(Z^n\right)_{n \geq 1}\,$, from which we deduce the tightness of $\left(U^n\,, Z^n\right)_{n \geq 1}\,$. Finally, we characterize the limiting processes.

\subsection{Estimates on the sequences of random variables}
For $n \geq 1$ and $i \leq n-1\,$, the random variable $\alpha^n_i$ is $\mathcal{F}^n_{t^n_i}$-measurable, and $\widehat U^n_{i,i+1}$ as well as $\widehat{Z}^n_{i,i+1}$ are $\mathcal{F}^n_{t^n_{i+1}}$-measurable (see \eqref{eq:def_continuous_filtration} for a definition of the filtration). Moreover, conditional on $\alpha^n_i\,$,
\begin{equation}
\label{eq:conditional_law_U}
\widehat U^n_{i,i+1} \sim IG\left( \frac{\alpha_i^n}{1 - bk_0^n}\,, \left(\frac{\alpha_i^n}{ck_0^n}\right)^2\right)
\end{equation}
in the sense of Appendix \ref{A:IG}. Leveraging the properties of the Inverse Gaussian distribution, we obtain the following estimates.
\begin{lemma}
    \label{lemma:properties_sequence}
    Let $n \geq 1$ and $0 \leq i \leq n-1\,$, then the random variables $\widehat{U}^n_{i,i+1}\,$, $\widehat{Z}^n_{i,i+1}$ and $\alpha_i^n$ admit moments of order 4, and we have:
    \begin{enumerate}[label=(\roman*)]
        \item $\mathbb{E}\left[\widehat{Z}^n_{i,i+1}\right] = \mathbb{E}\left[ \widehat{Z}^n_{i,i+1}\, |\, \mathcal{F}^n_{t^n_i}\right] = 0$
        \item $\mathbb{E}\left[\left(\widehat{Z}^n_{i,i+1} \right)^2\,|\, \mathcal{F}^n_{t^n_i}\right] = \mathbb{E}\left[\widehat{U}^n_{i,i+1}\,|\,\mathcal{F}^n_{t^n_i} \right]$
        \item $\mathbb{E}\left[\left(\widehat{Z}^n_{i,i+1}\right)^4\right] \leq 5\, \mathbb{E}\left[\left(\widehat{Z}^n_{i,i+1}\right)^2\,\widehat{U}^n_{i,i+1}\right]$
    \end{enumerate}
\end{lemma}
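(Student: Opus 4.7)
The plan is to exploit the conditional Inverse Gaussian structure \eqref{eq:conditional_law_U}, combined with the observation that, by \eqref{eq:Zii}, $\widehat{Z}^n_{i,i+1}$ is a centered affine transformation of $\widehat{U}^n_{i,i+1}$ given $\alpha^n_i\,$:
$$
\widehat{Z}^n_{i,i+1} = \frac{1-bk_0^n}{ck_0^n}\bigl(\widehat{U}^n_{i,i+1} - \mu^n_i\bigr)\,, \quad \mu^n_i := \frac{\alpha^n_i}{1 - bk_0^n}\,, \quad \lambda^n_i := \Bigl(\frac{\alpha^n_i}{ck_0^n}\Bigr)^2\,.
$$
Since $\alpha^n_i$ is $\mathcal{F}^n_{t^n_i}$-measurable, all required identities reduce to conditional moment computations for the $IG(\mu^n_i\,,\lambda^n_i)$ law, which are available in closed form (cf.~Appendix~\ref{A:IG}). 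A crucial bookkeeping identity I would use throughout is that $(\mu^n_i)^2/\lambda^n_i = (ck_0^n/(1-bk_0^n))^2$ is \emph{deterministic}.

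I would first settle the existence of fourth moments by induction on $i$ at fixed $n\,$. The base case is immediate since $\alpha^n_0$ is deterministic and the Inverse Gaussian has polynomial moments of all orders. For the inductive step, \eqref{eq:alphai} writes $\alpha^n_{i}$ as a linear combination, with deterministic coefficients, of the previous $\widehat{U}^n_{j,j+1}\,, \widehat{Z}^n_{j,j+1}$ for $j < i\,$, which have finite fourth moments by induction, so $\alpha^n_i$ inherits the bound. Using the deterministic identity above, every conditional moment $\mathbb{E}[(\widehat{U}^n_{i,i+1})^p \,|\, \mathcal{F}^n_{t^n_i}]$ for $p \leq 4$ becomes a polynomial in $\alpha^n_i\,$, propagating the moment bound to $\widehat{U}^n_{i,i+1}$ and $\widehat{Z}^n_{i,i+1}\,$.

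For (i), since the conditional mean of $\widehat{U}^n_{i,i+1}$ equals $\mu^n_i\,$, the affine relation immediately gives $\mathbb{E}[\widehat{Z}^n_{i,i+1}\,|\,\mathcal{F}^n_{t^n_i}] = 0\,$, and the unconditional statement follows by the tower property. For (ii), plugging the conditional variance $\mathrm{Var}(\widehat{U}^n_{i,i+1}\,|\,\mathcal{F}^n_{t^n_i}) = (\mu^n_i)^3/\lambda^n_i$ into the affine relation and simplifying using the deterministic identity produces $\mathbb{E}[(\widehat{Z}^n_{i,i+1})^2\,|\,\mathcal{F}^n_{t^n_i}] = \mu^n_i = \mathbb{E}[\widehat{U}^n_{i,i+1}\,|\,\mathcal{F}^n_{t^n_i}]\,$.

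The delicate point is (iii), where the constant $5$ must be shown to be uniform in the random parameter $\alpha^n_i\,$. I would compute both sides conditionally on $\mathcal{F}^n_{t^n_i}$ using the Inverse Gaussian central moments $\mathbb{E}[(X-\mu)^3] = 3\mu^5/\lambda^2$ and $\mathbb{E}[(X-\mu)^4] = 3\mu^6/\lambda^2 + 15\mu^7/\lambda^3\,$, together with the decomposition $(X-\mu)^2 X = (X-\mu)^3 + \mu (X-\mu)^2\,$. Invoking once more the deterministic identity $(\mu^n_i)^2/\lambda^n_i = 1/\kappa^n$ with $\kappa^n := (1-bk_0^n)^2/(ck_0^n)^2\,$, the conditional ratio
$$
\mathbb{E}\bigl[(\widehat{Z}^n_{i,i+1})^4 \,|\, \mathcal{F}^n_{t^n_i}\bigr] \,/\, \mathbb{E}\bigl[(\widehat{Z}^n_{i,i+1})^2 \widehat{U}^n_{i,i+1} \,|\, \mathcal{F}^n_{t^n_i}\bigr]
$$
simplifies to the scalar function $f(x) = 3(1+5x)/(1+3x)$ of the single variable $x := \mu^n_i/\lambda^n_i \geq 0\,$. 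Since $f$ is increasing on $[0,\infty)$ with supremum $5\,$, the conditional ratio is almost surely bounded by $5\,$, and taking expectations closes the proof.
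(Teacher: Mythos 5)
Your proposal is correct and follows essentially the same route as the paper: the same induction on $i$ for the fourth moments (using that $(\mu^n_i)^2/\lambda^n_i$ is deterministic so the conditional IG moments are polynomials in $\alpha^n_i$), the same conditional mean/variance computations for (i) and (ii), and the same central-moment identities for (iii). Your monotone-ratio argument $f(x)=3(1+5x)/(1+3x)\nearrow 5$ is just a repackaging of the paper's direct bound $15\mu^7/\lambda^3+3\mu^6/\lambda^2\leq 5\,(\mu^2/\lambda)\,(3\mu^5/\lambda^2+\mu^4/\lambda)$, i.e.\ the inequality $3\leq 5$ on the subleading term.
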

\begin{proof}
    We fix $n \geq 1$ and first prove that our random variables admit a finite fourth moment by induction on $i \leq n-1\,$. We recall that we defined $t^n_j := j\,\frac{T}{n}$ for $j = 0\,,...\,,n$ in Section \ref{S:convergence}. Firstly, $\alpha_0^n = \int_0^{t^n_1}\,g_0(s)\,ds$ is deterministic by definition. Moreover,
    $$
    \widehat{U}^n_{0,1} \sim IG\left(\frac{\alpha^n_0}{1 - b\,k_0^n}\,, \left( \frac{\alpha_0^n}{ck_0^n}\right)^2\right)
    $$
    and
    $$
    \widehat{Z}^n_{0,1} = \frac{1}{ck_0^n}\, \left(\left(1 - bk_0^n \right)\, \widehat{U}^n_{0,1} - \alpha_0^n \right)
    $$
    so that $\widehat{U}_{0,1}^n$ and $\widehat{Z}^n_{0,1}$ have moments of all orders.

    Fix $i \leq n-1$ and suppose $\left(\alpha_j^n\right)_{j < i}$ as well as $\left(\widehat{U}^n_{j,j+1}\right)_{j < i}$ and $\left(\widehat{Z}^n_{j,j+1}\right)_{j< i}$ all admit moments of order 4. Since 
    $$
    \alpha_{i}^n = \int_{t_{i}^n}^{t_{i+1}^n}\,g_0(s)\,ds + \sum_{j = 0}^{i-1} \, k^n_{i-j}\,\left(b\,\widehat{U}^n_{j,j+1} + c\,\widehat{Z}^n_{j,j+1} \right)
    $$
    is a finite sum of $L^4$ random variables, it has a finite moment of order $4\,$. We denote by $\mu_{\alpha^n_i} := \mathbb{P} \circ \left(\alpha^n_i\right)^{-1}$ the law of $\alpha^n_i\,$, and for any $a \geq 0\,$, we denote by $f_a$ the density of the Inverse Gaussian distribution with parameters $\left( \frac{a}{1 - bk_0^n}\,, \left(\frac{a}{ck_0^n} \right)^2\right)$ (see Appendix \ref{A:IG}). Then, the probability density of $\widehat U^n_{i,i+1}$ is given by 
    $$
    x \mapsto \int_{\mathbb{R_+}}\, f_a(x)\, \mu_{\alpha^n_i}(da)\,, \quad x \geq 0\,.
    $$
    By Fubini's theorem, if 
    $$
    \int_{\mathbb{R}_+}\, \int_{\mathbb{R}_+}\,x^4\,f_a(x)\,dx\,\mu_{\alpha_i^n}(da) < +\infty\,,
    $$
    then $\mathbb{E}\left[\left(\widehat U^n_{i,i+1}\right)^4\right] < + \infty\,$. Using \eqref{eq:IGmean}, we obtain 
    $$
    \int_{\mathbb{R_+}}\, x^4\,f_a(x)\,dx = 15\, \frac{\left(ck_0^n\right)^6}{(1 - bk_0^n)^7}\, a + 15 \frac{\left(ck_0^n\right)^4}{\left(1 - bk_0^n\right)^6}\, a^2 + 6\, \frac{\left(ck_0^n\right)^2}{\left(1 - bk_0^n\right)^5}\,a^3 + \left(\frac{a}{1 - bk_0^n}\right)^4\,, \quad a \geq 0\,,
    $$
    and since $\mathbb{E}\left[\left(\alpha_i^n\right)^4\right] < + \infty\,$, we conclude that $\widehat U^n_{i,i+1}$ has a finite moment of order $4\,$.
    \iffalse
    Moreover, let $\xi \sim \mathcal{N}(0,1)$ and $\eta \sim U([0,1])$ independent of each other and of $\mathcal{F}^n_{t^n_i}\,$, then 
    $$
    \hat{U}^n_{i,i+1} \sim X\,\mathbbm{1}_{\eta \leq \frac{\mu}{\mu + X}} + \frac{\mu^2}{X} \mathbbm{1}_{\eta > \frac{\mu}{\mu + X}}
    $$
    where 
    $$
    X := \mu + \frac{\mu^2 \,\xi^2}{2\lambda} - \frac{\mu}{2\lambda}\, \sqrt{4\mu \lambda\,\xi^2+ \mu^2\,\xi^4}
    $$
    with
    $$
    \mu := \frac{\alpha_i^n}{1 - bk_0^n}\,, \quad \lambda := \left(\frac{\alpha_i^n}{ck_0^n} \right)^2 \,
    $$
    from Algorithm \ref{alg:IG_sampling}. We first remark that 
    $$
    0 \leq X \leq \frac{\alpha_i^n}{1 - bk_0^n} + \frac{1}{2}\, \left(\frac{ck_0^n\, \xi}{1 - bk_0^n}\right)^2
    $$
    so that $\mathbb{E}\left[\left|X\right|^4\right] < + \infty\,$. Furthermore, 
    \begin{align*}
        \frac{\mu^2}{X} &= \mu^2\, \frac{\mu + \frac{\mu^2\,\xi^2}{2\lambda} + \frac{\mu}{2\lambda}\, \sqrt{4\mu\lambda\,\xi^2 + \mu^2 \, \xi^4}}{\left(\mu + \frac{\mu^2\, \xi^2}{2\lambda}\right)^2 - \frac{\mu^3\,\xi^2}{\lambda} - \frac{\mu^4\, \xi^4}{4\lambda^2}} \\
        &= \mu^2 \, \frac{\mu + \frac{\mu^2\,\xi^2}{2\lambda} + \frac{\mu}{2\lambda}\, \sqrt{\left(2\lambda + \mu\, \xi^2 \right)^2 - 4\lambda^2}}{\mu^2} \\
        &\leq 2\, \left(\mu + \frac{\mu^2\, \xi^2}{2 \lambda} \right) = \frac{2\,\alpha_i^n}{1 - bk_0^n} + \left(\frac{ck_0^n\, \xi}{1 - bk_0^n} \right)^2 \,,
    \end{align*}
    and $\mathbb{E}\left[\left|\mu^2\,/\,X \right|^4\right] < + \infty\,$. We thus proved that $\widehat{U}^n_{i,i+1}$ is $L^4$, and so is \fi
    By linear combination, this is also the case for
    $$
    \widehat{Z}^n_{i,i+1} = \frac{1}{ck_0^n}\, \left((1 - bk_0^n)\,\widehat{U}^n_{i,i+1} - \alpha_i^n \right)\,.
    $$
    The proof by induction is thus complete.

    We can now easily prove (\textit{i}), (\textit{ii}) and (\textit{iii}) of Lemma \ref{lemma:properties_sequence} using the properties of the Inverse Gaussian distribution. In virtue of \eqref{eq:conditional_law_U} and since $\alpha^n_i$ is $\mathcal F^n_{t^n_i}$-measurable, we have $\mathbb{E}\left[\widehat U^n_{i,i+1}\,|\, \mathcal F^n_{t^n_i}\right] = \mathbb E \left[\widehat U^n_{i,i+1}\,|\, \alpha^n_i\right] = \frac{\alpha^n_i}{1 - bk_0^n}$ for $i = 0\,, ...\,, n-1\,$. Thus, we obtain
    $$
\mathbb{E}\left[\widehat{Z}^n_{i,i+1}\,|\,\mathcal{F}^n_{t^n_i}\right] = \frac{1 - bk_0^n}{ck_0^n}\, \mathbb{E}\left[\widehat{U}^n_{i,i+1} - \mathbb{E}\left[\widehat{U}^n_{i,i+1}\,|\,\mathcal{F}^n_{t^n_i}\right]\,|\,\mathcal{F}^n_{t^n_i} \right] = 0 \,, \quad i \leq n -1\,,
    $$
    so that by the tower property of conditional expectation,
    $$
    \mathbb{E}\left[\widehat{Z}^n_{i,i+1}\right] = \mathbb{E}\left[\mathbb{E}\left[\widehat{Z}^n_{i,i+1}\,|\,\mathcal{F}^n_{t^n_i}\right]\right] = 0\,, \quad i \leq n-1\,,
    $$
    which proves (\textit{i}). Similarly, if $Y \sim IG(\mu, \lambda)$ for $\mu\,,\lambda > 0\,$, we have
    $$
    \mathbb{E}\left[\left(Y - \mu\right)^2\right] = \frac{\mu^3}{\lambda}\,,
    $$
    thanks to \eqref{eq:IGmean},
    which gives
    \begin{align*}
    \mathbb{E}\left[\left( \widehat{Z}^n_{i,i+1} \right)^2\, |\, \mathcal{F}^n_{t^n_i}\right] &= \left(\frac{1 - bk_0^n}{ck_0^n}\right)^2\,\mathbb{E}\left[\left(\widehat{U}^n_{i,i+1} - \mathbb{E}\left[\widehat{U}^n_{i,i+1}\,|\,\mathcal{F}^n_{t^n_i}\right]\right)^2\,|\,\mathcal{F}^n_{t^n_i}\right] \\
    &= \left(\frac{1 - bk_0^n}{ck_0^n}\right)^2\,\left(\frac{ck_0^n}{1 - bk_0^n}\right)^2\,\frac{\alpha_i^n}{1 - b k_0^n}\\
    &= \frac{\alpha_i^n}{1 - bk_0^n}\\
    &= \mathbb{E}\left[ \widehat{U}^n_{i,i+1}\,|\,\mathcal{F}^n_{t^n_i}\right]\,, \quad i \leq n-1\,,
    \end{align*}
    and proves (\textit{ii}).
    Finally, using the same notation $Y \sim IG(\mu,\lambda)\,$, \eqref{eq:IGmean} also gives
    $$
    \mathbb{E}\left[\left(Y - \mu\right)^2\, Y\right] = 3 \, \frac{\mu^5}{\lambda^2} + \frac{\mu^4}{\lambda}\,,
    $$
    and
    $$
    \mathbb{E}\left[(Y - \mu)^4\right] = 15\, \frac{\mu^7}{\lambda^3} + 3\, \frac{\mu^6}{\lambda^2} \leq 5\, \frac{\mu^2}{\lambda}\,\left(3\,\frac{\mu^5}{\lambda^2} + \frac{\mu^4}{\lambda}\right) = 5 \, \frac{\mu^2}{\lambda}\, \mathbb{E}\left[(Y- \mu)^2\,Y\right]\,. 
    $$
    Thus,
    \begin{align*}   \mathbb{E}\left[\left(\widehat{Z}^n_{i,i+1}\right)^4\,|\, \mathcal{F}^n_{t^n_i}\right] &= \left(\frac{1 - bk_0^n}{ck_0^n}\right)^4\, \mathbb{E}\left[\left(\widehat{U}^n_{i,i+1} - \mathbb{E}\left[\widehat{U}^n_{i,i+1}\,|\,\mathcal{F}^n_{t^n_i}\right]\right)^4\, |\, \mathcal{F}^n_{t^n_i}\right] \\
        &\leq 5\,\left(\frac{1 - bk_0^n}{ck_0^n} \right)^2\,\mathbb{E}\left[\left(\widehat{U}^n_{i,i+1} - \mathbb{E}\left[\widehat{U}^n_{i,i+1}\,|\,\mathcal{F}^n_{t^n_i} \right]\right)^2\, \widehat{U}^n_{i,i+1}\,|\,\mathcal{F}^n_{t^n_i}\right]\\
        &\leq 5\,\mathbb{E}\left[\left(\widehat{Z}^n_{i,i+1}\right)^2\, \widehat{U}^n_{i,i+1}\,|\,\mathcal{F}^n_{t^n_i}\right]\,,\quad i \leq n-1\,,
    \end{align*}
    which concludes the proof of (\textit{iii}) using the tower property of conditional expectation.
\end{proof}

\subsection{Estimates on the sequences of processes}
We now aim to characterize the processes $\left(U^n\,, Z^n\right)_{n \geq 1}$ defined in \eqref{eq:def_process_U}-\eqref{eq:def_process_Z}. Firstly, we prove the martingality of $\left(Z^n\right)_{t \leq T}$ for any $n \geq 1\,$, along with a control in $U^n$ over its quadratic variation, which is obtained using the estimates given by Lemma \ref{lemma:properties_sequence}. Additionally, we show that $\left(Z^n\right)^2 - U^n$ is a also martingale, which will play an important role in identifying the limit.
\begin{lemma}
    \label{lemma:martingality}
    For any $n \geq 1\,$, we have:
    \begin{enumerate}[label = (\roman*)]
        \item $\left(Z^n_t\right)_{t \leq T}$ is a square-integrable martingale, such that 
        $$
        \left[Z^n\right]_t = \sum_{i = 0}^{\lfloor nt/T\rfloor - 1}\,\left(\widehat{Z}^n_{i,i+1}\right)^2\,, \quad t \leq T\,,
        $$
        as well as
        \begin{equation}
        \label{eq:exp_quad_Z_equals_exp_U}
        \mathbb{E}\left[\left[Z^n\right]_t - \left[Z^n\right]_s \,|\, \mathcal{F}^n_s \right] = \mathbb{E}\left[U^n_t - U^n_s \,|\, \mathcal{F}^n_s\right] \,, \quad s \leq t \leq T\,,
        \end{equation}
        and 
        \begin{equation}
        \label{eq:control_quad_Z}
        \mathbb{E}\left[\left(\left[Z^n\right]_t - \left[Z^n\right]_s\right)^2\right] \leq 25\, \mathbb{E}\left[\left(U^n_t - U^n_s\right)^2\right]\,, \quad s \leq t \leq T\,.
        \end{equation}
        \item $\left(\left(Z^n_t\right)^2 - U^n_t\right)_{t \leq T}$ is a martingale.
    \end{enumerate}
\end{lemma}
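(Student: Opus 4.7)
The plan is to prove both items by combining Lemma \ref{lemma:properties_sequence} with the observation that $\Delta_k:=(\widehat Z^n_{k,k+1})^2-\widehat U^n_{k,k+1}$ is a martingale difference, which yields that $[Z^n]-U^n$ is itself a martingale. This converts the sharp estimate \eqref{eq:control_quad_Z} into an $L^2$ bound on an orthogonal sum, and the passage from the conditional moment control in Lemma \ref{lemma:properties_sequence}(iii) to the global constant $25$ is where most of the work lies.

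I would first observe that $Z^n$ is a càdlàg $(\mathcal F^n_t)$-adapted process, piecewise constant on each $[t^n_i,t^n_{i+1})$ with a jump of size $\widehat Z^n_{i,i+1}$ at $t^n_{i+1}$. Square-integrability on $[0,T]$ follows from the finite fourth moments established in Lemma \ref{lemma:properties_sequence}, and the martingale property is immediate from part (i) of that lemma together with the tower property. Since $Z^n$ has only finitely many jumps and no continuous part, its semimartingale quadratic variation equals the sum of squared jumps, giving the claimed formula for $[Z^n]_t$, and $(Z^n)^2-[Z^n]$ is a martingale (either by the standard result for square-integrable pure-jump martingales or by a direct one-step conditional computation using part (i)). For \eqref{eq:exp_quad_Z_equals_exp_U} I would set $M^n_t:=[Z^n]_t-U^n_t=\sum_{k<\lfloor nt/T\rfloor}\Delta_k$; Lemma \ref{lemma:properties_sequence}(ii) gives $\mathbb E[\Delta_k\mid\mathcal F^n_{t^n_k}]=0$, so $M^n$ is an $(\mathcal F^n_t)$-martingale, which also proves (ii) since $(Z^n)^2-U^n=((Z^n)^2-[Z^n])+M^n$.

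The main obstacle is the sharp constant $25$ in \eqref{eq:control_quad_Z}. Writing $a_k:=\mathbb E[(\widehat Z^n_{k,k+1})^4]$, $b_k:=\mathbb E[(\widehat U^n_{k,k+1})^2]$ and $c_k:=\mathbb E[(\widehat Z^n_{k,k+1})^2\widehat U^n_{k,k+1}]$, Lemma \ref{lemma:properties_sequence}(iii) gives $a_k\le 5c_k$ and Cauchy--Schwarz gives $c_k\le\sqrt{a_k b_k}$; combining these yields $a_k\le 25 b_k$ and $c_k\le 5 b_k$, hence $\mathbb E[\Delta_k^2]=a_k-2c_k+b_k\le 3c_k+b_k\le 16\, b_k$. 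Since the $\Delta_k$ are orthogonal martingale differences, $\mathbb E[(M^n_t-M^n_s)^2]=\sum_k \mathbb E[\Delta_k^2]\le 16\sum_k b_k\le 16\,\mathbb E[(U^n_t-U^n_s)^2]$, where the last step uses $\widehat U^n_{k,k+1}\ge 0$ to get $\sum_k(\widehat U^n_{k,k+1})^2\le(\sum_k\widehat U^n_{k,k+1})^2=(U^n_t-U^n_s)^2$. Minkowski's inequality in $L^2$ applied to the decomposition $[Z^n]_t-[Z^n]_s=(U^n_t-U^n_s)+(M^n_t-M^n_s)$ then gives $\|[Z^n]_t-[Z^n]_s\|_2\le(1+\sqrt{16})\,\|U^n_t-U^n_s\|_2$, and squaring produces the desired factor $25$.
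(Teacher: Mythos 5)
Your proof is correct, and for the key estimate \eqref{eq:control_quad_Z} it takes a genuinely different route from the paper. The paper expands $\left(\left[Z^n\right]_t-\left[Z^n\right]_s\right)^2$ directly into diagonal and cross terms, bounds each using Lemma~\ref{lemma:properties_sequence}~(\textit{ii})--(\textit{iii}) and the monotonicity of $U^n$ to obtain $\mathbb{E}\left[\left(\left[Z^n\right]_t-\left[Z^n\right]_s\right)^2\right]\leq 5\,\mathbb{E}\left[\left(\left[Z^n\right]_t-\left[Z^n\right]_s\right)\left(U^n_t-U^n_s\right)\right]$, and then closes the estimate with Cauchy--Schwarz (dividing by the finite quantity $\sqrt{\mathbb{E}\left[\left(\left[Z^n\right]_t-\left[Z^n\right]_s\right)^2\right]}$, whose finiteness comes from the fourth-moment bounds). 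You instead isolate the martingale $M^n=\left[Z^n\right]-U^n$, convert Lemma~\ref{lemma:properties_sequence}~(\textit{iii}) together with Cauchy--Schwarz into the unconditional bounds $a_k\leq 25\,b_k$ and $c_k\leq 5\,b_k$, use orthogonality of the martingale differences $\Delta_k$ to get $\left\|M^n_t-M^n_s\right\|_2\leq 4\left\|U^n_t-U^n_s\right\|_2$ (where $\sum_k b_k\leq\mathbb{E}\left[\left(U^n_t-U^n_s\right)^2\right]$ again exploits nonnegativity of the increments), and finish with Minkowski; both routes land on the same constant $25$. Your argument buys a cleaner structure (no division by an a priori finite second moment, and it makes explicit that $\left[Z^n\right]-U^n$ is itself a martingale with controlled $L^2$ norm), while the paper's is marginally shorter. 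Your treatment of part (\textit{ii}) via the decomposition $\left(Z^n\right)^2-U^n=\left(\left(Z^n\right)^2-\left[Z^n\right]\right)+M^n$ is also a valid and slightly more direct alternative to the paper's local-martingale-plus-BDG argument; just make sure to note that $\left[Z^n\right]_T\in L^1$ (which follows from Lemma~\ref{lemma:properties_sequence}) so that $\left(Z^n\right)^2-\left[Z^n\right]$ is a true, not merely local, martingale.
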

\begin{proof}
    In this proof, we consider $s \leq t \leq T\,$, and define $i := \lfloor nt/T \rfloor$ and $j := \lfloor ns/T \rfloor$ for readability. We compute
    \begin{align*}
        \mathbb{E}\left[Z^n_t \,|\, \mathcal{F}^n_s\right] &= \mathbb{E}\left[Z^n_{t^n_i}\,|\,\mathcal{F}^n_{t^n_j}\right] \\
        &= Z^n_{t^n_j} + \sum_{l = j}^{i-1}\, \mathbb{E}\left[\widehat Z^n_{l,l+1}\,|\, \mathcal{F}^n_{t^n_j}\right] \\
        &= Z^n_{t^n_j} + \sum_{l = j}^{i-1}\, \mathbb{E}\left[\mathbb{E}\left[\widehat Z^n_{l,l+1}\,|\, \mathcal{F}^n_{t^n_l}\right]\,|\, \mathcal{F}^n_{t^n_j}\right]
    \end{align*}
    using the tower property of conditional expectation, since $\mathcal{F}^n_{t^n_j} \subset \mathcal{F}^n_{t^n_l}$ for $j \leq l\,$. Using (\textit{i}) of Lemma \ref{lemma:properties_sequence}, we obtain
    $$
    \mathbb{E}\left[Z^n_t\,|\, \mathcal{F}^n_s\right] = Z^n_{t^n_j} = Z^n_s\,,
    $$
    proving that $\left(Z^n_t\right)_{t \leq T}$ is a martingale. The fact that it is square-integrable is direct from Lemma \ref{lemma:properties_sequence}.
    Furthermore, notice that $Z^n$ can be rewritten as 
    $$
    Z^n_t = \sum_{i = 0}^{\lfloor nt/T\rfloor - 1}\, \left(\widehat Z^n_{i,i+1}\right)^+ - \sum_{i = 0}^{\lfloor nt/T\rfloor - 1}\, \left(\widehat Z^n_{i,i+1}\right)^-\,, \quad t \leq T\,,
    $$
    which is the difference of two non-decreasing processes. Therefore it has finite variation on $[0\,,T]$ and is a quadratic pure-jump martingale \citep[Theorem~II.26]{protter2005stochastic}. This proves that
    $$
    \left[Z^n\right]_{t} = \sum_{0 \leq s \leq t}\, \left(\Delta Z^n_s\right)^2 = \sum_{i = 0}^{\lfloor nt/T \rfloor -1}\, \left(\widehat Z^n_{i,i+1}\right)^2\,, \quad t \leq T\,,
    $$
    since $Z^n_0 = 0\,$.
    From (\textit{ii}) of Lemma \ref{lemma:properties_sequence},
    \begin{align*}
    \mathbb{E}\left[\left(\widehat{Z}^n_{l,l+1}\right)^2\,|\, \mathcal{F}^n_{t_j^n}\right] &= \mathbb{E}\left[\mathbb{E}\left[\left(\widehat{Z}^n_{l,l+1}\right)^2\,|\,\mathcal{F}^n_{t_l^n}\right]\,|\, \mathcal{F}^n_{t_j^n}\right] \\
    &= \mathbb{E}\left[\mathbb{E}\left[\widehat{U}^n_{l,l+1}\,|\,\mathcal{F}^n_{t_l^n}\right]\,|\, \mathcal{F}^n_{t_j^n}\right] \\
    &= \mathbb{E}\left[\widehat{U}^n_{l,l+1}\,|\, \mathcal{F}^n_{t^n_j}\right]\,, \quad j \leq l \leq n-1\,,
    \end{align*}
    and therefore taking any $s \leq t \leq T\,$, we obtain
    \begin{align*}
        \mathbb{E}\left[\left[Z^n\right]_t - \left[Z^n\right]_s\, |\, \mathcal{F}^n_s\right] &= \sum_{l = j}^{i-1} \, \mathbb{E}\left[ \left(\widehat{Z}^n_{l,l+1}\right)^2\,|\, \mathcal{F}^n_{t_j^n}\right] \\
        &= \sum_{l = j}^{i-1} \mathbb{E}\left[\widehat{U}^n_{l,l+1}\,|\, \mathcal{F}^n_{t_j^n}\right]\\
        &= \mathbb{E}\left[U^n_t - U^n_s \,|\, \mathcal{F}^n_s\right]\,,
    \end{align*}
    where $i = \lfloor nt/T \rfloor$ and $j = \lfloor ns/T \rfloor\,$,
    which proves \eqref{eq:exp_quad_Z_equals_exp_U}. Finally, we have 
    \begin{align*}
        \mathbb{E}\left[\left(\left[Z^n\right]_t - \left[Z^n\right]_s\right)^2\right] &= \mathbb{E}\left[\left(\sum_{l = j}^{i-1}\,\left(\widehat{Z}^n_{l,l+1}\right)^2 \right)^2 \right] \\
        &= \sum_{l = j}^{i-1}\, \mathbb{E}\left[\left(\widehat{Z}^n_{l,l+1}\right)^4\right] + 2\,\sum_{l = j}^{i-1}\,\sum_{k = l+1}^{i-1}\, \mathbb{E}\left[\left(\widehat{Z}^n_{l,l+1}\right)^2\, \left(\widehat{Z}^n_{k,k+1}\right)^2\right]\,.
    \end{align*}
    For $l < k \leq n-1\,$, (\textit{ii}) of Lemma \ref{lemma:properties_sequence} gives 
    $$
\mathbb{E}\left[\left(\widehat{Z}^n_{l,l+1} \right)^2\, \left(\widehat{Z}^n_{k,k+1} \right)^2\right] = \mathbb{E}\left[\left(\widehat{Z}^n_{l,l+1}\right)^2\, \mathbb{E}\left[ \left(\widehat{Z}^n_{k,k+1}\right)^2\,|\,\mathcal{F}^n_{t_k^n}\right] \right] = \mathbb{E}\left[\left(\widehat{Z}^n_{l,l+1}\right)^2\, \widehat{U}^n_{k,k+1}\right] \,.
    $$
 Combining this with (\textit{iii}) of the same lemma, we obtain 
\begin{align*}
    \mathbb{E}\left[\left(\left[Z^n\right]_t - \left[Z^n\right]_s\right)^2\right] &\leq \, \mathbb{E}\left[5\,\sum_{l = j}^{i-1}\, \left(\widehat{Z}^n_{l,l+1}\right)^2\, \widehat{U}^n_{l,l+1} + 2\,\sum_{l = j}^{i-1}\,\sum_{k = l+1}^{i-1}\, \left(\widehat{Z}^n_{l,l+1}\right)^2\, \widehat{U}^n_{k,k+1}  \right] \\
    &\leq 5\, \mathbb{E}\left[\sum_{l = j}^{i-1}\, \left(\widehat{Z}^n_{l,l+1}\right)^2\, \left(\widehat{U}^n_{0,i} - \widehat{U}^n_{0,l}\right) \right] \\
    &\leq 5\, \mathbb{E}\,\left[\left(\left[Z^n\right]_t - \left[Z^n\right]_s\right)\,\left(U^n_t - U^n_s\right) \right]\,,
\end{align*}
using the fact that $U^n$ is non-decreasing. Applying Cauchy-Schwarz's inequality, we get 
$$
\mathbb{E}\left[\left(\left[Z^n\right]_t - \left[Z^n\right]_s\right)^2\right] \leq 5\, \sqrt{\mathbb{E}\left[\left(\left[Z^n\right]_t - \left[Z^n\right]_s\right)^2 \right]\, \mathbb{E}\left[\left(U^n_t - U^n_s\right)^2\right]}\,,
$$
which proves \eqref{eq:control_quad_Z}.

In order to prove (\textit{ii}) of Lemma \ref{lemma:martingality}, we notice that \eqref{eq:exp_quad_Z_equals_exp_U} shows that $\left(\left[Z^n\right]_t - U^n_t\right)_{t \leq T}$ is a martingale. Thus, $\left(\left(Z^n_t\right)^2 - U^n_t\right)_{t \leq T}$ is a local martingale. Moreover,
$$
\mathbb{E}\left[\sup_{0 \leq t \leq T}\, \left|Z^n_t\right|^2\right] \leq C\, \mathbb{E}\left[\left[Z^n\right]_T\right] = C \, \mathbb{E}\left[U^n_T\right] < + \infty\,,
$$
for some constant $C \geq 0\,$, using the BDG inequality, along with \eqref{eq:exp_quad_Z_equals_exp_U} with $t = T$ and $s = 0\,$, and the fact that the $\left(\widehat U^n_{i,i+1}\right)_{i \leq n-1}$ all have finite expectation in virtue of Lemma \ref{lemma:properties_sequence}. Adding the fact that 
$$
\mathbb{E}\left[\sup_{0 \leq t \leq T}\, U^n_t\right] = \mathbb{E}\left[U^n_T\right] < + \infty\,,
$$
since $U^n$ is non-decreasing. We conclude that $\mathbb E \left[\sup\limits_{0 \leq t \leq T}\, \left|\left(Z^n_t\right)^2 - U^n_t \right| \right] < + \infty\,$, and $\left(Z^n\right)^2 - U^n$ is therefore a true martingale.
\end{proof}
Before deriving estimates on our processes, we show that they satisfy a Volterra equation with a measure-valued kernel, which will be pivotal for the proofs.
\begin{lemma}
    \label{lemma:convolution_equation}
    For any $n \geq 1\,$, define the nonnegative finite measure $K^n := \sum_{i = 0}^{n-1}\,k^n_i\, \delta_{t^n_i}$ on $\left([0\,,T]\,, \mathcal B_{[0\,,T]}\right)\,$, where the $\left(k^n_i\right)_{i \leq n-1}$ are defined in \eqref{eq:kij}. Then we have 
    $$
    U^n_t = G_0^n(t) + \int_{[0,t]}\,\widetilde{Z}^n_{t-s}\, K^n(ds)\,, \quad t \leq T\,, \quad n \geq 1\,,
    $$
    where $\widetilde{Z}^n := b\,U^n + c\,Z^n$ as in \eqref{eq:tildeZ} and $G_0^n(t) := \int_0^{\lfloor \frac{nt}{T} \rfloor \frac{T}{n}}\,g_0(s)\,ds\,$.
\end{lemma}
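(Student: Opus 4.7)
The proof is essentially algebraic: one unpacks the recursion defining the scheme and rearranges sums until it matches an integral against the discrete measure $K^n$. Fix $n\geq 1$ and, to lighten notation, write $\widetilde z^n_j := b\widehat U^n_{j,j+1} + c\widehat Z^n_{j,j+1}$, so that the process $\widetilde Z^n = bU^n + cZ^n$ satisfies $\widetilde Z^n_u = \sum_{j=0}^{\lfloor nu/T\rfloor -1} \widetilde z^n_j$ for $u\geq 0$.

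The first step is to extract a one-step local Volterra identity at the level of increments. Equation \eqref{eq:Zii} rewrites as $\widehat U^n_{i,i+1} = \alpha^n_i + k^n_0\,\widetilde z^n_i$, and plugging in the definition of $\alpha^n_i$ from \eqref{eq:alphai} yields
\begin{equation}\label{eq:plan-local}
\widehat U^n_{i,i+1} = \int_{t^n_i}^{t^n_{i+1}} g_0(s)\,ds + \sum_{j=0}^{i} k^n_{i-j}\,\widetilde z^n_j,\qquad i=0,\ldots,n-1.
\end{equation}
Both sides of the identity to be proved are piecewise constant on the intervals $[t^n_m, t^n_{m+1})$ for $m=0,\ldots,n-1$ (and constant at $t=T$): for the left-hand side and for $G_0^n$ this is immediate; for the integral term, note that $\{i:t^n_i\leq t\}$ only changes across grid points, and on $[t^n_m,t^n_{m+1})$ one has $\lfloor n(t-t^n_i)/T\rfloor = m-i$, so $\widetilde Z^n_{t-t^n_i}$ is constant there. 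It therefore suffices to verify the identity at $t=t^n_m$ for $m=0,\ldots,n$.

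Summing \eqref{eq:plan-local} over $i=0,\ldots,m-1$ and recalling the definition of $U^n$ gives
\begin{equation}\label{eq:plan-lhs}
U^n_{t^n_m} = G_0^n(t^n_m) + \sum_{i=0}^{m-1}\sum_{j=0}^{i} k^n_{i-j}\,\widetilde z^n_j = G_0^n(t^n_m) + \sum_{j=0}^{m-1}\widetilde z^n_j\sum_{\ell=0}^{m-1-j} k^n_\ell,
\end{equation}
after swapping the order of summation ($\ell=i-j$). On the other side, the measure $K^n$ is purely atomic, so
\begin{equation}\label{eq:plan-rhs}
\int_{[0,t^n_m]} \widetilde Z^n_{t^n_m - s}\,K^n(ds) = \sum_{i=0}^{m} k^n_i\,\widetilde Z^n_{t^n_m-t^n_i} = \sum_{i=0}^{m-1} k^n_i\sum_{j=0}^{m-1-i}\widetilde z^n_j,
\end{equation}
using $\widetilde Z^n_0=0$ to discard the $i=m$ term and the piecewise constant representation of $\widetilde Z^n$ above. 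Swapping the order of summation in \eqref{eq:plan-rhs} produces exactly the second term of \eqref{eq:plan-lhs}, which closes the proof.

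No analytic difficulty is expected: the only traps are index bookkeeping, particularly (i) identifying $\lfloor n(t^n_m-t^n_i)/T\rfloor = m-i$ so that $\widetilde Z^n_{t^n_m-t^n_i}$ correctly picks up $m-i$ increments, and (ii) confirming that the boundary case $i=m$ (where $\widetilde Z^n_0=0$) and the case $m=n$ (where $t^n_m = T$) contribute nothing beyond what is already captured. Once these are handled, the identity follows by a straightforward discrete Fubini.
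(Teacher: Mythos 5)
Your proof is correct and follows essentially the same route as the paper's: both unpack the recursion \eqref{eq:Zii}--\eqref{eq:alphai} into the one-step identity $\widehat U^n_{i,i+1}=\int_{t^n_i}^{t^n_{i+1}}g_0(s)\,ds+\sum_{j=0}^{i}k^n_{i-j}\widetilde z^n_j$, exploit the piecewise-constant structure to reduce the convolution integral to a sum over grid points, and then conclude by a discrete Fubini (you swap double sums over increments where the paper performs the equivalent Abel summation on the cumulative process $\widetilde Z^n_{t^n_l}$). The index bookkeeping, including the boundary cases $i=m$ via $\widetilde Z^n_0=0$ and $t=T$, checks out.
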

\begin{proof}
    Let $n \geq 1\,$, $t < T$ and define $i := \lfloor nt/T \rfloor\,$, then $t^n_i \leq t < t^n_{i+1}\,$. Then,
    \begin{equation}
        \int_{[0,t]}\,\widetilde{Z}^n_{t-s}\,K^n(ds) = \sum_{j = 0}^{i}\,k^n_j\, \tilde{Z}^n_{t-t^n_j} \,.
    \end{equation}
    Since the partition $\left(t^n_j\right)_{j \leq n}$ is uniform,
    $$
    t^n_{i-j} = t^n_i - t^n_j \leq t - t^n_j < t^n_{i+1} - t^n_j = t^n_{i-j +1}\,, \quad j \leq i\,,
    $$
    and we can rewrite the integral as 
    \begin{align*}
        \int_{[0,t]}\, \widetilde{Z}^n_s\,K^n(ds) = \sum_{j = 0}^i\,k^n_j\, \widetilde{Z}^n_{t^n_{i-j}} = \sum_{j = 1}^i\,k^n_{i-j}\, \widetilde{Z}^n_{t^n_j} \,,
    \end{align*}
    using $\widetilde{Z}^n_0 = 0\,$. In the case $t = T\,$, we have
    $$
    \int_{[0,T]}\, \widetilde{Z}^n_{T-s}\, K^n(ds) = \sum_{j = 0}^{n-1}\, k^n_j\, \widetilde{Z}^n_{T - t^n_j} = \sum_{j = 0}^{n-1}\, k^n_j\, \widetilde{Z}^n_{t^n_{n-j}} = \sum_{j = 1}^n\, k^n_{n-j}\, \widetilde{Z}^n_{t^n_j}\,.
    $$
    Therefore, we always have
    \begin{equation}
    \label{eq:convolution_equality_general}
    \int_{[0,t]}\, \widetilde{Z}^n_{t-s}\, K^n(ds) = \sum_{j = 1}^{i }\, k^n_{i-j}\,\widetilde{Z}^n_{t^n_j}\,, \quad t \leq T\,,
    \end{equation}
    with $i = \lfloor nt/T \rfloor \,$.

    Taking $t \leq T\,$, setting $i := \lfloor nt/T \rfloor\,$, and using \eqref{eq:alphai}-\eqref{eq:Zii}, we can compute
    \begin{align*}
        U^n_t &= \sum_{j = 0}^{i-1}\, \widehat U^n_{i,i+1} \\
        &= \sum_{j = 0}^{i-1}\, \left(\alpha^n_j + k_0^n\, \left(b\,\widehat U^n_{j,j+1} + c\, \widehat Z^n_{j,j+1}\right) \right) \\
        &= \int_0^{t^n_i}\, g_0(s)\,ds + \sum_{j  = 0}^{i-1}\, \sum_{l = 0}^{j}\, k^n_{j-l}\, \left(b\, \widehat{U}^n_{l,l+1} + c \,\widehat{Z}^n_{l,l+1}\right)\,.
    \end{align*}
    Using the fact that $\int_0^{t^n_i}\, g_0(s)\,ds = G_0^n(t)$ and $b\,\widehat U^n_{l,l+1} + c \,\widehat{Z}^n_{l,l+1} = \widetilde{Z}^n_{t^n_{l+1}} - \widetilde{Z}^n_{t^n_l}\,$, we get
    \begin{align*}
        U^n_t &= G_0^n(t) + \sum_{j = 0}^{i-1}\, \sum_{l = 0}^j\,k^n_{j -l}\, \left(\widetilde{Z}^n_{t^n_{l+1}} - \widetilde{Z}^n_{t^n_l}\right) \\
        &= G_0^n(t) + \sum_{l = 1}^i \, \widetilde{Z}^n_{t^n_l}\, \sum_{j = l}^i\, k^n_{j-l} - \sum_{l = 1}^{i-1}\, \widetilde{Z}^n_{t^n_l}\, \sum_{j = l}^{i-1}\,k^n_{j-l}\\
        &= G_0^n(t) + \sum_{l = 1}^i \, \widetilde{Z}^n_{t^n_l}\, k^n_{i-l}\,.
    \end{align*}
    Therefore, applying \eqref{eq:convolution_equality_general}, we finally obtain
    $$
    U^n_t = G_0^n(t) + \int_{[0,t]}\, \widetilde{Z}^n_{t-s}\,K^n(ds)\,.
    $$
\end{proof}
We are now ready to state our estimates on the processes $\left(U^n\,, Z^n\right)_{n \geq 1}\,$. We introduce the notation
\begin{equation}
    \label{eq:def_G_0}
    G_0(t) := \int_0^t\, g_0(s)\,ds\,, \quad t \leq T\,.
\end{equation}
\begin{lemma}
    We have the upperbound 
    \begin{equation}
        \label{eq:bound_exp_U}
        \mathbb{E}\left[U^n_t\right] \leq G_0(T)\,, \quad t \leq T\,, \quad n \geq 1\,,
    \end{equation}
    and there exists constants $C_1\,, C_2 \geq 0\,$, such that
    \begin{equation}
        \label{eq:bound_exp_U_2}
        \mathbb{E}\left[\left(U^n_t\right)^2\right] \leq C_1\,G_0(T) + C_2\, G_0(T)^2\,, \quad t \leq T\,, \quad n \geq 1\,.
    \end{equation}
    Moreover, 
    \begin{equation}
        \label{eq:bound_variation_U}
        U^n_{t^n_i} - U^n_{t^n_j} \leq G_0(t^n_i) - G_0(t^n_j) + 2 \, \left\|\widetilde{Z}^n\right\|_{\infty}\,\int_0^{t^n_i - t^n_j}\, K(s)\,ds\,, \quad j \leq i \leq n\,, \quad n \geq 1\,.
    \end{equation}
\end{lemma}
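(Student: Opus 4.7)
For the first bound \eqref{eq:bound_exp_U}, I would take expectations in the convolution representation provided by Lemma \ref{lemma:convolution_equation}, which reads $U^n_t = G_0^n(t) + b\int_{[0,t]} U^n_{t-s}\, K^n(ds) + c\int_{[0,t]} Z^n_{t-s}\, K^n(ds)$. Since $Z^n$ is a martingale starting from $0$ by Lemma \ref{lemma:martingality}, the last integral vanishes in expectation (Fubini is fine since $K^n$ is a finite measure and $Z^n$ is integrable). Using $b\leq 0$ together with $U^n\geq 0$, the remaining drift term is nonpositive, giving $\mathbb E[U^n_t]\leq G_0^n(t)\leq G_0(T)$, where the last inequality uses $g_0\geq 0$.

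For the second bound \eqref{eq:bound_exp_U_2}, the same sign argument applied pathwise yields $0\leq U^n_t \leq G_0^n(t) + c\bigl|\int_{[0,t]} Z^n_{t-s}\, K^n(ds)\bigr|$, hence $(U^n_t)^2 \leq 2\, G_0^n(t)^2 + 2c^2\bigl(\int_{[0,t]} Z^n_{t-s}\, K^n(ds)\bigr)^2$. I would then apply Cauchy--Schwarz with respect to $K^n$, use the uniform total-mass bound $K^n([0,T])=\sum_{i=0}^{n-1}k^n_i=\int_0^T K(s)\,ds=\|K\|_{L^1}$, and combine with the identity $\mathbb E[(Z^n_u)^2]=\mathbb E[[Z^n]_u]=\mathbb E[U^n_u]$ (from the martingality of $(Z^n)^2-U^n$ in Lemma \ref{lemma:martingality}) and \eqref{eq:bound_exp_U}. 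This produces the estimate with $C_1:=2c^2\|K\|_{L^1}^2$ and $C_2:=2$.

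The pointwise bound \eqref{eq:bound_variation_U} is the main effort. The plan is to expand, using $\widehat U^n_{l,l+1}=\int_{t^n_l}^{t^n_{l+1}} g_0(s)\,ds + \sum_{m=0}^l k^n_{l-m}\,\widetilde Z^n_{m,m+1}$, and sum $l$ from $j$ to $i-1$. Interchanging the order of summation gives
\begin{equation}
U^n_{t^n_i}-U^n_{t^n_j} = G_0(t^n_i)-G_0(t^n_j) + \sum_{m=0}^{i-1} c_m\,\widetilde Z^n_{m,m+1},
\end{equation}
where $c_m:=\sum_{p=j-m}^{i-m-1}k^n_p$ for $m<j$ and $c_m:=\sum_{p=0}^{i-m-1}k^n_p$ for $m\geq j$. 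I would then apply Abel summation, using $\widetilde Z^n_{m,m+1}=\widetilde Z^n_{t^n_{m+1}}-\widetilde Z^n_{t^n_m}$ and $\widetilde Z^n_0=0$, to rewrite the inner sum as $\widetilde Z^n_{t^n_i}\,c_{i-1}+\sum_{m=1}^{i-1}\widetilde Z^n_{t^n_m}(c_{m-1}-c_m)$.

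The crux is a monotonicity/unimodality observation for the coefficients $(c_m)$. Under Assumption \ref{assumption:preserve_nonnegativity}, $\bar K^n$ is non-increasing, so the sequence $k^n_\ell=\bar K^n(t^n_\ell)$ is non-increasing in $\ell$. A direct comparison of $c_{m+1}-c_m$ (a telescoping of a length-$(i-j)$ window of $k^n$-values shifted by one) then shows that $(c_m)$ is non-decreasing for $m\leq j$ and non-increasing for $m\geq j$, with maximum $c_j=\int_0^{t^n_{i-j}}K(s)\,ds$. The total variation of $(c_m)$ is therefore $2c_j-c_0-c_{i-1}$. Bounding each $|\widetilde Z^n_{t^n_m}|$ by $\|\widetilde Z^n\|_\infty$ and combining the boundary term with the total-variation contribution yields $\bigl|\sum_m c_m\,\widetilde Z^n_{m,m+1}\bigr|\leq \|\widetilde Z^n\|_\infty(c_{i-1}+2c_j-c_0-c_{i-1})=\|\widetilde Z^n\|_\infty(2c_j-c_0)\leq 2\|\widetilde Z^n\|_\infty\int_0^{t^n_i-t^n_j}K(s)\,ds$, which implies \eqref{eq:bound_variation_U} (in fact with absolute value). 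The main obstacle is the careful bookkeeping of the Abel summation and verifying the unimodality of $(c_m)$; everything else is routine once the convolution expansion and Lemmas \ref{lemma:martingality}--\ref{lemma:convolution_equation} are in hand.
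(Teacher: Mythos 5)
Your proposal is correct and follows essentially the same route as the paper: both rest on the discrete convolution representation $U^n_{t}=G^n_0(t)+\sum_l k^n_{i-l}\widetilde Z^n_{t^n_l}$ (which you re-derive by resumming increments and Abel summation, arriving at exactly the paper's coefficients $k^n_{i-l}-k^n_{j-l}$ and $k^n_{i-l}$), the sign conditions $b\le 0$, $U^n\ge 0$, and the monotonicity of $(k^n_\ell)_\ell$ coming from $\bar K^n$ non-increasing. The only cosmetic differences are that for the second-moment bound you use Cauchy--Schwarz with respect to the finite measure $K^n$ together with $\mathbb E[(Z^n_u)^2]=\mathbb E[U^n_u]$, where the paper bounds by $\sup_s|Z^n_s|$ and invokes BDG; both yield the stated $C_1 G_0(T)+C_2G_0(T)^2$ form.
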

\begin{remark}
    Note that 
    \begin{equation}
    \label{eq:bound_exp_Z_2}
    \mathbb{E}\left[\sup_{0 \leq t \leq T}\,\left|Z^n_t\right|^2\right] \leq C_3\, \mathbb{E}\left[\left[Z^n\right]_T\right] = C_3 \, \mathbb{E}\left[U^n_T\right] \leq C_3\,G_0(T) \,, \quad n \geq 1\,,
    \end{equation}
    for a certain $C_3 \geq 0\,$, using the BDG inequality, along with \eqref{eq:exp_quad_Z_equals_exp_U} with $t = T$ and $s = 0\,$, and \eqref{eq:bound_exp_U}. Additionally, 
    \begin{equation}
    \label{eq:bound_exp_Z_tilde}
\mathbb{E}\left[\left\|\tilde{Z}^n\right\|_{\infty}\right] \leq |b|\, \mathbb{E}\left[U^n_T\right] + c \, \left(1 + \mathbb{E}\left[\sup_{0 \leq t \leq T}\, \left|Z^n_t\right|^2\right]\right) \leq C_4\, \left(1 + G_0(T)\right)\,, \quad n \geq 1\,,
    \end{equation}
    for a certain constant $C_4 \geq 0\,$, using \eqref{eq:bound_exp_Z_2} and \eqref{eq:bound_exp_U}.
\end{remark}
\begin{proof}
    We make use of the Volterra equation given by Lemma \ref{lemma:convolution_equation}. Since $b \leq 0$ and $U^n$ is a nonnegative process for any $n \geq 1\,$, we have 
    $$
    U^n_t \leq G_0^n(t) + c\,\int_{[0,t]}\,Z^n_{t-s}\,K^n(ds)\,, \quad t \leq T\,, \quad n \geq 1\,.
    $$
    Since $Z^n$ is a square-integrable martingale for any $n \geq 1\,$, taking the expectation in the previous relation proves \eqref{eq:bound_exp_U}, using the fact that
    $$
    G_0^n(t) \leq G_0^n(T) = G_0(T) \,, \quad t \leq T \,, \quad n \geq 1\,.
    $$
    Moreover, we have 
    $$
    U^n_t \leq G_0(T) + c\,\sup_{0 \leq s \leq T}\, \left|Z^n_s\right|\, \int_{[0,T]}\, K^n(ds)\, \quad t \leq T\,, \quad n \geq 1\,,
    $$
    so that
    $$   \mathbb{E}\left[\left(U^n_t\right)^2\right] \leq 2\,G_0(T)^2 + 2\,\left(c\,\int_0^T\,K(s)\,ds\right)^2 \, \mathbb{E}\left[\sup_{0 \leq s \leq T}\, \left|Z^n_s\right|^2\right]\,, \quad t \leq T\,, \quad n \geq 1\,,
    $$
    using the fact that $\int_{[0,T]}\,K^n(ds) = \int_0^T\,K(s)\,ds$ for any $n \geq 1\,$. Applying \eqref{eq:bound_exp_Z_2} (which only uses \eqref{eq:bound_exp_U}) leads to 
    $$
    \mathbb{E}\left[\left(U^n_t\right)^2\right] \leq 2\,G_0(T)^2 + 2\,C_3\,G_0(T)\, \left(c\,\int_0^T\, K(s)\,ds\right)^2\,, \quad t \leq T\,, \quad n \geq 1\,,
    $$
    and proves \eqref{eq:bound_exp_U_2}.

    In order to prove \eqref{eq:bound_variation_U}, we fix $n \geq 1$ and $j \leq i \leq n\,$. Lemma \ref{lemma:convolution_equation} gives
    \begin{align*}
        U^n_{t^n_i} - U^n_{t^n_j} &= G_0^n(t^n_i) - G_0^n(t^n_j) + \sum_{l = 1}^{i}\, k^n_{i-l}\, \widetilde{Z}^n_{t^n_l} - \sum_{l = 1}^j \,k^n_{j-l}\, \widetilde{Z}^n_{t^n_l} \\
        &= G_0^n(t^n_i) - G_0^n(t^n_j) + \sum_{l = 1}^j\, \widetilde{Z}^n_{t^n_l}\, \left(k^n_{i-l} - k^n_{j-l} \right) + \sum_{l = j+1}^i\, k^n_{i-l}\, \widetilde{Z}^n_{t^n_l}\,.
    \end{align*}
    In virtue of Assumption \ref{assumption:preserve_nonnegativity}, $\bar K^n$ is non-decreasing and therefore $\left(k^n_i\right)_{i \leq n-1}$ is a non-decreasing sequence. We deduce that 
    \begin{align*}
        U^n_{t^n_i} - U^n_{t^n_j} \leq G_0^n(t^n_i) - G_0^n(t^n_j) + \left\| \widetilde{Z}^n \right\|_{\infty}\, \left(\sum_{l = 1}^j\, k^n_{j-l} - k^n_{i-l} + \sum_{l = j+1}^{i}\,k^n_{i-l} \right) \,.
    \end{align*}
    We conclude by noticing that $G_0^n(t^n_l) = G_0(t^n_l)$ for any $l \leq n\,,$ and by upperbounding
    \begin{align*}
        \sum_{l = 1}^j\, k^n_{j-l} - k^n_{i-l} + \sum_{l = j+1}^{i}\,k^n_{i-l} &= \int_0^{t^n_j}\, K(s)\,ds - \int_{t^n_i - t^n_j}^{t^n_i}\,K(s)\,ds + \int_0^{t^n_i - t^n_j}\, K(s)\,ds \\
        &\leq \int_0^{t^n_i}\, K(s)\,ds - \int_{t^n_i - t^n_j}^{t^n_i}\,K(s)\,ds + \int_0^{t^n_i - t^n_j}\, K(s)\,ds \\
        &\leq 2\, \int_{0}^{t^n_i - t^n_j}\, K(s)\,ds\,.
    \end{align*}
\end{proof}

\subsection{Tightness}
We want to prove C-tightness of our sequences of processes, i.e.~tightness for the Skorokhod $J_1$ topology, with continuous limits. We use the following criterion.
\begin{proposition}[\cite{jacod2013limit}, Proposition~VI.3.26]
    \label{proposition:C_tight_criterion}
    A sequence of càdlàg stochastic processes $\left(X^n\right)_n$ defined on $[0\,,T]$ is C-tight if the following two conditions are satisfied:
    \begin{itemize}
        \item[$\bullet$] $\lim\limits_{R \to + \infty}\, \limsup\limits_n\, \mathbb{P}\left(\left\|X^n\right\|_{\infty} > R \right) = 0\,$.
        \item[$\bullet$] For any $\varepsilon > 0\,$,
        $$\lim\limits_{\delta \to 0^+}\, \limsup\limits_n \, \mathbb{P}\left(w\left(X^n\,,\delta\right) \geq \varepsilon\right) = 0\,,$$
        where $w$ is the modulus of continuity
        $$
        w(f, \delta) := \sup_{\substack{0 \leq s \leq t \leq T \\ |t - s| \leq \delta}}\, |f(t) - f(s)|\,, \quad f \in D([0\,,T])\,, \quad \delta > 0\,.
        $$
    \end{itemize}
\end{proposition}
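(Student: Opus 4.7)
The plan is to give a direct proof via Prokhorov's theorem by constructing, for each $\varepsilon > 0\,$, a compact subset of $D([0\,,T])$ that lives entirely inside $C([0\,,T])$ and carries mass at least $1 - \varepsilon$ under every $X^n\,$. This single construction will immediately yield both $J_1$-tightness and continuity of any accumulation point, i.e. C-tightness.

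Fix $\varepsilon > 0\,$. By condition (i), I choose $R > 0$ so that $\limsup_n\, \mathbb{P}\left(\|X^n\|_\infty > R\right) < \varepsilon/2\,$; the finitely many remaining $n$'s are absorbed by enlarging $R$ (each $X^n$ is a fixed càdlàg process, hence $\mathbb{P}\left(\|X^n\|_\infty = \infty\right) = 0$), so that $\mathbb{P}\left(\|X^n\|_\infty > R\right) < \varepsilon/2$ for every $n\,$. Analogously, by condition (ii), for each $k \geq 1$ I choose $\delta_k > 0$ with $\mathbb{P}\left(w(X^n, \delta_k) \geq 1/k\right) < \varepsilon / 2^{k+1}$ uniformly in $n\,$. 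I then set
\[
K_\varepsilon := \left\{f \in D([0\,,T]) : \|f\|_\infty \leq R \text{ and } w(f, \delta_k) \leq 1/k \text{ for all } k \geq 1 \right\}\,,
\]
and a union bound yields $\mathbb{P}\left(X^n \notin K_\varepsilon\right) < \varepsilon$ for every $n\,$.

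It remains to verify that $K_\varepsilon$ is compact in $D([0\,,T])$ under $J_1\,$. By monotonicity of $w(f, \cdot)$ in $\delta\,$, every $f \in K_\varepsilon$ satisfies $\lim_{\delta \to 0^+}\, w(f, \delta) = 0\,$, so $K_\varepsilon \subset C([0\,,T])\,$. On $C([0\,,T])\,$, the $J_1$ topology coincides with the topology induced by the uniform norm, and $K_\varepsilon$ is uniformly bounded and equicontinuous, hence relatively compact by Arzel\`a--Ascoli; closedness in $\left(C([0\,,T]), \|\cdot\|_\infty\right)$ follows from the uniform-norm continuity of the maps $f \mapsto \|f\|_\infty$ and $f \mapsto w(f, \delta)\,$. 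By Prokhorov's theorem, $\left(X^n\right)_n$ is $J_1$-tight on $D([0\,,T])\,$, and since every accumulation point $\mu$ assigns mass at least $1 - \varepsilon$ to $K_\varepsilon \subset C([0\,,T])\,$ for every $\varepsilon > 0\,$, it is supported on $C([0\,,T])\,$, which is the definition of C-tightness.

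The main obstacle is showing compactness of $K_\varepsilon\,$: one must pass from a countable family of modulus-of-continuity bounds $w(f, \delta_k) \leq 1/k$ to genuine equicontinuity, which is ultimately enabled by the monotonicity of $\delta \mapsto w(f, \delta)\,$, and one must also argue that compactness in $\left(C([0\,,T]), \|\cdot\|_\infty\right)$ transfers to compactness in $(D([0\,,T]), J_1)\,$, which rests on the classical fact that these two topologies coincide on $C([0\,,T])\,$. Beyond this, the argument is a routine application of Prokhorov's theorem combined with a standard union-bound truncation.
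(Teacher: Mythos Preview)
The paper does not prove this proposition; it is quoted verbatim from \cite{jacod2013limit} as a standard tool and used without proof. Your argument is therefore not competing with anything in the paper---it is a correct self-contained proof of a cited result, modulo one small slip.

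The slip is in your handling of condition (ii). You write that ``analogously'' you can choose $\delta_k$ with $\mathbb{P}\!\left(w(X^n,\delta_k)\ge 1/k\right)<\varepsilon/2^{k+1}$ \emph{uniformly in $n$}, absorbing the finitely many remaining indices by shrinking $\delta_k$. Unlike the analogous step for condition (i), this absorption can fail: if a fixed c\`adl\`ag process $X^n$ has, with positive probability, a jump of size at least $1/k$, then $w(X^n,\delta)\ge 1/k$ for every $\delta>0$, and $\mathbb{P}\!\left(w(X^n,\delta)\ge 1/k\right)$ does not vanish as $\delta\to 0$. So your claimed bound $\mathbb{P}(X^n\notin K_\varepsilon)<\varepsilon$ for \emph{every} $n$ is in general false.

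This does no real damage. Tightness on a Polish space is insensitive to finitely many terms (each single law is tight), and accumulation points see only the tail of the sequence, so the Portmanteau step $\mu(K_\varepsilon)\ge 1-\varepsilon$ still goes through along any convergent subsequence. Replace ``for every $n$'' by ``for all $n\ge N_\varepsilon$'' and your argument is complete.
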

We prove C-tightness of both sequences of processes.
\begin{lemma}
    \label{lemma:C_tightness}
    Both $\left(U^n\right)_{n \geq 1}$ and $\left(Z^n\right)_{n \geq 1}$ are C-tight.
\end{lemma}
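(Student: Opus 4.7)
The plan is to verify Proposition \ref{proposition:C_tight_criterion} directly for $(U^n)$, but to handle $(Z^n)$ via the equivalent characterization of C-tightness as $J_1$-tightness (Aldous's criterion) combined with negligibility of jumps; a direct control of the modulus $w(Z^n,\delta)$ appears out of reach for singular kernels, which is the main obstacle explained below.

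For $(U^n)_{n \geq 1}$, since $U^n$ is non-decreasing with $U^n_0 = 0$, $\|U^n\|_\infty = U^n_T$, and \eqref{eq:bound_exp_U} together with Markov's inequality gives the first condition of Proposition \ref{proposition:C_tight_criterion}. For the modulus, any $0 \leq s \leq t \leq T$ with $t-s \leq \delta$ satisfies $U^n_t - U^n_s = U^n_{t^n_i} - U^n_{t^n_j}$ for grid indices with $t^n_i - t^n_j \leq \delta + T/n$, so \eqref{eq:bound_variation_U} yields the pathwise estimate
$$
w(U^n,\delta) \leq \omega_{G_0}(\delta + T/n) + 2\,\|\widetilde Z^n\|_\infty \int_0^{\delta + T/n} K(s)\,ds,
$$
where $\omega_{G_0}$ is the uniform modulus of continuity of the continuous function $G_0$ defined in \eqref{eq:def_G_0}. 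The $L^1$ bound \eqref{eq:bound_exp_Z_tilde} on $\|\widetilde Z^n\|_\infty$, together with $\omega_{G_0}(\eta) \to 0$ and $\int_0^\eta K(s)\,ds \to 0$ as $\eta \to 0^+$, concludes by Markov's inequality.

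For $(Z^n)_{n \geq 1}$, the first condition follows from Doob's $L^2$ inequality and \eqref{eq:bound_exp_Z_2}. To handle the modulus I verify Aldous's criterion: for $\mathcal F^n$-stopping times $\tau_n \leq T$ and $\theta_n \in [0,\delta]$ with $\tau_n + \theta_n \leq T$, optional sampling on the uniformly integrable martingale $(Z^n)^2 - U^n$ from Lemma \ref{lemma:martingality}(ii) gives
$$
\mathbb E\bigl[(Z^n_{\tau_n+\theta_n}-Z^n_{\tau_n})^2\bigr] = \mathbb E\bigl[U^n_{\tau_n+\theta_n} - U^n_{\tau_n}\bigr],
$$
and the right-hand side is bounded uniformly in $\tau_n,\theta_n$ by the same estimate as in the previous paragraph. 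This yields $J_1$-tightness of $(Z^n)$. To upgrade to C-tightness, it suffices to check $\sup_t |\Delta Z^n_t| = \max_i |\widehat Z^n_{i,i+1}| \to 0$ in probability. Markov's inequality at order four combined with Lemma \ref{lemma:properties_sequence}(iii) gives
$$
\mathbb P\Bigl(\max_i |\widehat Z^n_{i,i+1}| > \varepsilon\Bigr) \leq \frac{1}{\varepsilon^4}\sum_i \mathbb E\bigl[(\widehat Z^n_{i,i+1})^4\bigr] \leq \frac{5}{\varepsilon^4}\, \mathbb E\Bigl[\max_i \widehat U^n_{i,i+1}\cdot [Z^n]_T\Bigr],
$$
and Cauchy--Schwarz reduces the problem to bounding $\mathbb E[([Z^n]_T)^2]$ (uniformly finite thanks to \eqref{eq:control_quad_Z} and \eqref{eq:bound_exp_U_2}) and showing $\mathbb E[(\max_i \widehat U^n_{i,i+1})^2] \to 0$. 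The latter follows from applying \eqref{eq:bound_variation_U} to consecutive grid points, using continuity of $G_0$, $\int_0^{T/n} K \to 0$, and the uniform $L^2$ bound on $\|\widetilde Z^n\|_\infty$ deduced from \eqref{eq:bound_exp_U_2}--\eqref{eq:bound_exp_Z_2}.

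The main obstacle is that one cannot directly control $w(Z^n,\delta)$ in the sense of Proposition \ref{proposition:C_tight_criterion}: partitioning $[0,T]$ into $T/\delta$ slots and applying Doob or BDG on each produces a divergent prefactor $T/\delta$ against $\mathbb E[(U^n_{(k+1)\delta} - U^n_{k\delta})^2]$, and for singular kernels (e.g.~$K_H$ with $H<0$) the term $(\int_0^\delta K)^2$ does not compensate. Passing to stopping-time increments via Aldous's criterion bypasses the partition altogether and is the key workaround; the remaining jump bound relies crucially on the $L^4$ estimate of Lemma \ref{lemma:properties_sequence}(iii), which is precisely why that estimate was set up.
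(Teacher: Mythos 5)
Your treatment of $\left(U^n\right)_{n\geq 1}$ matches the paper's (pathwise modulus bound from \eqref{eq:bound_variation_U}, then Markov), but for $\left(Z^n\right)_{n\geq 1}$ you take a genuinely different and, as far as I can check, correct route. The paper controls $w\left(Z^n,\delta\right)$ directly: it builds a sub-partition $\left(s_i\right)$ of mesh in $[\delta,2\delta]$, applies \citet[Theorem~7.4]{billingsley2013convergence} and a union bound with fourth moments and BDG, reaches $\sum_i \mathbb E\left[\left(U^n_{s_i}-U^n_{s_{i-1}}\right)^2\right]$, and then kills the would-be divergent prefactor with the elementary bound $\sum_i\left(U^n_{s_i}-U^n_{s_{i-1}}\right)^2\leq U^n_T\,\max_i\left(U^n_{s_i}-U^n_{s_{i-1}}\right)$, which exploits monotonicity of $U^n$ and is then controlled by \eqref{eq:bound_variation_U}. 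So the ``main obstacle'' you describe is not actually an obstacle — the $T/\delta$ prefactor never appears — and your diagnosis there is mistaken even though your workaround is sound. Your route instead verifies Aldous's criterion via optional sampling on the uniformly integrable martingale $\left(Z^n\right)^2-U^n$ (Lemma \ref{lemma:martingality}(\textit{ii})) together with the pathwise increment bound on $U^n$, which gives $J_1$-tightness cleanly, and then upgrades to C-tightness by showing $\max_i\left|\widehat Z^n_{i,i+1}\right|\to 0$ in probability via Lemma \ref{lemma:properties_sequence}(\textit{iii}), the max-times-sum bound $\sum_i\left(\widehat Z^n_{i,i+1}\right)^2\widehat U^n_{i,i+1}\leq \left(\max_i\widehat U^n_{i,i+1}\right)\left[Z^n\right]_T$, Cauchy--Schwarz, \eqref{eq:control_quad_Z}, \eqref{eq:bound_exp_U_2}, and the vanishing of $\max_i\widehat U^n_{i,i+1}$ in $L^2$. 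Both arguments ultimately lean on the same two ingredients — the fourth-moment estimate of Lemma \ref{lemma:properties_sequence}(\textit{iii}) and the monotonicity of $U^n$ through a max-times-sum bound — but yours trades the explicit sub-partition and the Billingsley modulus lemma for the optional sampling theorem, Aldous's criterion, and the characterization of C-tightness as tightness plus asymptotically negligible jumps (\citet[Proposition~VI.3.26]{jacod2013limit}); you should cite these explicitly, and note that optional sampling is justified because $\left(Z^n\right)^2-U^n$ is a true martingale closed at $T$ with integrable running supremum, as established in the proof of Lemma \ref{lemma:martingality}.
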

\begin{proof}
    We first prove tightness of $\left(U^n\right)_{n \geq 1}\,$. \eqref{eq:bound_exp_U} gives us 
    $$
    \sup_n \, \mathbb{E}\left[\left\|U^n\right\|_{\infty}\right] = \sup_n \, \mathbb{E}\left[U^n_T\right] \leq G_0(T) < + \infty\,,
    $$
    so that the first condition of Proposition \ref{proposition:C_tight_criterion} is satisfied by Markov's inequality.
    
    Then, by definition, 
    $$
    w\left(U^n\,,\delta\right) = \max_{\substack{0 \leq j \leq i \leq n \\ |t^n_i - t^n_j| \leq \delta}} \, U^n_{t^n_i} - U^n_{t^n_j} \,, \quad n \geq 1\,,\quad \delta > 0\,.
    $$
    Thus, for any $\delta > 0$ and $n \geq 1\,$, \eqref{eq:bound_variation_U} gives 
    $$
    w\left(U^n\,, \delta\right) \leq w\left(G_0\,,\delta\right) + 2\,\left\|\tilde Z^n \right\|_{\infty}\, \int_0^{\delta}\,K(s)\,ds\,,
    $$
    so that
    $$
\mathbb{E}\left[w\left(U^n\,,\delta\right)\right] \leq w(G_0\,, \delta) + 2\,C_4\, \left(1 + G_0(T)\right)\, \int_0^{\delta}\,K(s)\,ds\,,
    $$
    using \eqref{eq:bound_exp_Z_tilde}. Since $g_0 \in L^1_{\rm loc}\left(\mathbb{R}_+, \mathbb R_+\right)\,$, $G_0$ defined in \eqref{eq:def_G_0} is continuous on $[0\,,T]\,$, by the dominated convergence theorem. With the same argument, $\int_0^{\delta}\,K(s)\,ds$ vanishes as $\delta$ goes to $0$. This proves C-tightness of $\left(U^n\right)_{n \geq 1}$ by Markov's inequality.

    For the sequence of martingales $\left(Z^n\right)_{n \geq 1}\,$, we first have
    $$
    \mathbb{E}\left[\sup_{0 \leq t \leq T}\, \left|Z^n_t\right|\right] \leq 1 + C_3\,G_0(T) < + \infty\,,
    $$
    from \eqref{eq:bound_exp_Z_2}, thus the first condition for C-tightness is satisfied. For the second criterion, let us fix $\delta > 0$ and $n > T / \delta\,$, we define $m := \lceil n\delta / T\rceil$ and $v := \lceil n / m \rceil\,$, as well as 
    $$
    s_i := i \, \frac{mT}{n} = t^n_{im}\,, \,\, 0 \leq i \leq v-1\,, \quad \text{and} \quad s_v := T \,.
    $$
    One can check that the two following inequalities are satisfied:
    \begin{equation}
        \label{eq:partition_above_delta}
        s_i - s_{i-1} \geq \delta \,, \quad 1 \leq i \leq v-1\,,
    \end{equation}
    \begin{equation}
        \label{eq:partition_upper_bound}
        s_i - s_{i-1} \leq \delta + T / n \leq 2\,\delta\,, \quad 1 \leq i \leq v\,.
    \end{equation}
Under the condition \eqref{eq:partition_above_delta}, \citet[Theorem~7.4]{billingsley2013convergence} gives 
$$
w\left(Z^n\,, \delta\right) \leq 3\, \max_{1 \leq i \leq v}\, \sup_{s_{i-1} \leq s \leq s_i}\, \left|Z^n_s - Z^n_{s_{i-1}}\right|\,.
$$
Therefore, for any $\varepsilon > 0\,$, recalling that our random variables admit finite moments of order $4$ thanks to Lemma \ref{lemma:properties_sequence},
\begin{align}
\label{eq:bdg_Z4}
    \mathbb{P}\left( w\left(Z^n\,, \delta\right) \geq 3\,\varepsilon\right) &\leq \sum_{i = 1}^v \, \mathbb{P}\left( \sup_{s_{i-1} \leq s \leq s_i}\, \left|Z^n_s - Z^n_{s_{i-1}}\right| \geq \varepsilon\right) \nonumber\\
    &\leq \frac{1}{\varepsilon^4}\, \sum_{i = 1}^v \, \mathbb{E}\left[\sup_{s_{i-1} \leq s \leq s_i}\, \left|Z^n_s - Z^n_{s_{i-1}}\right|^4 \right] \nonumber\\
    &\leq \frac{C_5}{\varepsilon^4}\, \sum_{i = 1}^v \, \mathbb{E}\left[\left(\left[Z^n\right]_s - \left[Z^n\right]_{s_{i-1}}\right)^2\right]\,,
\end{align}
for a certain constant $C_5 \geq 0\,$, thanks to the BDG inequality. We can apply \eqref{eq:control_quad_Z} to obtain
\begin{align*}
    \mathbb{P}\left( w\left(Z^n\,, \delta\right) \geq 3\,\varepsilon\right) &\leq \frac{25\,C_5}{\varepsilon^4}\, \sum_{i = 1}^v \, \mathbb{E}\left[\left(U^n_{s_i} - U^n_{s_{i-1}}\right)^2\right] \\
    &\leq \frac{25\, C_5}{\varepsilon^4}\, \mathbb{E}\left[U^n_T\, \max_{1 \leq i \leq v}\, U^n_{s_i} - U^n_{s_{i-1}}\right]\,.
\end{align*}
Finally, using \eqref{eq:bound_variation_U}, since $\left(s_i^n\right)_{i \leq v}$ is a sub-partition of $\left(t^n_i\right)_{i \leq n}$ with step less than $2\delta$ (see \eqref{eq:partition_upper_bound}), we obtain
\begin{align*}
    \mathbb{P}\left(w\left(Z^n\,, \delta\right) \geq 3\, \varepsilon \right) \leq \frac{25\, C_5}{\varepsilon^4}\, \left(w\left(G_0\,, 2\delta\right) + 2 \, \int_0^{2\delta}\, K(s)\,ds\, \mathbb{E}\left[U^n_T\, \left\| \tilde Z^n \right\|_{\infty}\right] \right)\,.
\end{align*}
We conclude by noticing that 
$$
\sup_n \, \mathbb{E}\left[U^n_T\, \left\| \tilde Z^n \right\|_{\infty}\right] \leq \sup_n \, \left(|b| \,\mathbb{E}\left[\left(U^n_T\right)^2\right] + c\, \sqrt{\mathbb{E}\left[\left(U^n_T\right)^2\right]\, \mathbb{E}\left[\sup_{0 \leq t \leq T}\, \left|Z^n_t\right|^2\right]}\right) < + \infty\,,
$$
combining \eqref{eq:bound_exp_U_2} and \eqref{eq:bound_exp_Z_2}. Thus, $\left(Z^n\right)_{n \geq 1}$ is C-tight.
\end{proof}
\begin{remark}
\label{remark:joint_tightness}
    In particular, the two sequences are $J_1(\mathbb{R})$-tight, which implies joint $J_1\left(\mathbb{R}^2\right)$-tightness of $\left(U^n\,, Z^n\right)_{n \geq 1}\,$, see \citet[Lemma~3.2]{whitt2007proofs}.
\end{remark}

\subsection{Characterization of the limiting processes}
Thanks to Prokhorov's theorem, joint tightness given by Lemma \ref{lemma:C_tightness} and Remark \ref{remark:joint_tightness} gives the existence of a weakly convergent subsequence of $\left(U^n\,,Z^n\right)_{n \geq 1}$ in $\left(D([0\,,T])^2\,, J_1\left(\mathbb{R}^2\right)\right)\,$. We therefore aim to characterize the law of those limits. This is achieved using the following stability lemma for convolutions.
\begin{lemma}
    \label{leamm:stability_convolution}
Consider a sequence $\left(f_n\right)_{n \geq 1}$ of càdlàg functions on $[0\,,T]\,$, converging in $L^1([0\,,T])$ to $f$ continuous. If the sequence $\left(F_n\right)_{n \geq 1}$ defined by 
$$
F_n(t) := \int_{[0,t]} \, f_n(t-s)\, K^n(ds)\,, \quad t \leq T\,, \quad n \geq 1\,,
$$
where $K^n$ is defined in Lemma \ref{lemma:convolution_equation}, converges in $L^1([0\,,T])$ to $F$ continuous, then 
$$
F(t) = \int_0^t \, f(t-s)\, K(s)\,ds \,, \quad t \leq T\,.
$$
\end{lemma}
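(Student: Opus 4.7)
The plan is to identify $F$ by duality: for every continuous test function $\varphi$ on $[0\,,T]\,$, I will show that $\int_0^T F_n(t) \varphi(t)\,dt$ converges both to $\int_0^T F(t) \varphi(t)\,dt$ and to $\int_0^T \varphi(t) \int_0^t f(t-s) K(s)\,ds\, dt\,$. Since this holds for all continuous $\varphi\,$, it will force $F(t) = \int_0^t f(t-s) K(s)\,ds$ almost everywhere, and continuity of both sides upgrades the equality to every $t \in [0\,,T]\,$.

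The first convergence is immediate from the $L^1$ hypothesis on $F_n$ and the boundedness of $\varphi\,$. For the second, I apply Fubini (legal since $K^n\left([0\,,T]\right) = \int_0^T K(u)\,du < \infty$ and the integrands are bounded on $[0\,,T]\,$, as càdlàg functions on a compact interval are bounded) to rewrite
$$
\int_0^T F_n(t) \varphi(t)\,dt \;=\; \int_{[0,T]} g_n(s)\, K^n(ds), \qquad g_n(s) := \int_0^{T-s} f_n(u)\, \varphi(u+s)\,du.
$$
The sequence $g_n$ converges uniformly on $[0\,,T]$ to $g(s) := \int_0^{T-s} f(u)\, \varphi(u+s)\,du\,$, since $|g_n(s) - g(s)| \leq \|\varphi\|_\infty\, \|f_n - f\|_{L^1}\,$, and $g$ is continuous by dominated convergence (using that $f$ is continuous, hence bounded, on the compact $[0\,,T]$ and $\varphi$ is continuous). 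Independently, for any continuous $h$ on $[0\,,T]\,$,
$$
\int_{[0,T]} h(s)\, K^n(ds) \;=\; \sum_{i = 0}^{n-1} h(t_i^n)\, \int_{t_i^n}^{t_{i+1}^n} K(u)\,du,
$$
which differs from $\int_0^T h(u) K(u)\,du$ by at most $\omega_h(T/n)\, \int_0^T K(s)\,ds\,$, where $\omega_h$ is the modulus of continuity of $h\,$. This quantity vanishes as $n \to \infty\,$. Splitting $g_n = (g_n - g) + g$ and using uniform convergence of the first part together with the uniform bound $K^n([0\,,T]) = \int_0^T K(s)\,ds\,$, one concludes $\int_{[0,T]} g_n\, K^n(ds) \to \int_0^T g(s) K(s)\,ds\,$. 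A final application of Fubini rewrites this limit as $\int_0^T \varphi(t) \int_0^t f(t-s) K(s)\,ds\, dt\,$.

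Matching the two limits yields $F = K * f$ a.e., and since $F$ is continuous by assumption and $t \mapsto \int_0^t f(t-s) K(s)\,ds$ is continuous by a standard dominated convergence argument (splitting into $\int_0^t [f(t+h-s) - f(t-s)] K(s)\,ds$ and $\int_t^{t+h} f(t+h-s) K(s)\,ds\,$, using $f$ continuous and bounded, $K \in L^1$), equality holds pointwise. The only delicate point is combining the $L^1$-type convergence of $f_n$ with the weak/Riemann-type convergence of $K^n$ to $K(s)\,ds\,$; the Fubini rewriting above cleanly decouples them by packaging the $f_n$-dependence into the uniformly convergent continuous sequence $g_n\,$, and the $K^n$-dependence into the integration of continuous functions against a uniformly bounded family of finite measures.
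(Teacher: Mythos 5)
Your proof is correct, but it takes a genuinely different route from the paper's. The paper argues directly in $L^1(dt)$: it bounds $\int_0^T \bigl|F(t)-\int_0^t f(t-s)K(s)\,ds\bigr|\,dt$ by a three-term triangle inequality (the $L^1$ error $F-F_n$, the substitution error $f_n\mapsto f$ inside the measure integral, and the kernel-discretization error $K^n\mapsto K(s)\,ds$), handling the last term by showing the pointwise convergence $\int_{[0,t]}f(t-s)\,K^n(ds)\to\int_0^t f(t-s)K(s)\,ds$ via the modulus of continuity of $f$, followed by dominated convergence in $t$. You instead test against continuous $\varphi$ and use Fubini to transfer the whole problem onto the measure side, reducing it to the weak convergence $\int h\,dK^n\to\int h(s)K(s)\,ds$ for a \emph{single} continuous function $h=g$, with the $f_n$-dependence absorbed into the uniformly convergent sequence $g_n$. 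The essential ingredients are the same in both arguments (the total-mass identity $K^n([0,T])=\int_0^TK(s)\,ds$ and a modulus-of-continuity estimate against the atoms of $K^n$), but your duality formulation decouples the two limits more cleanly and avoids the pointwise-plus-domination step; the price is the extra Fubini bookkeeping and the final du Bois--Reymond step to pass from the weak identity to equality a.e. You also make explicit the continuity of $t\mapsto\int_0^t f(t-s)K(s)\,ds$, which the paper uses but leaves implicit. All the technical points you rely on (boundedness of c\`adl\`ag functions on compacts justifying Fubini, uniform convergence $g_n\to g$, the Riemann-sum estimate with $\omega_h(T/n)$) check out.
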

\begin{proof}
    For any $n \geq 1\,$, we can write 
    \begin{align*}
        \int_0^T\, \left|F(t) - \int_0^t \, f(t-s)\, K(s)\, ds\right| \, dt &\leq \int_0^T \, \left|F(t) - F_n(t)\right| \, dt \\
        &\quad + \int_0^T\, \left| \int_{[0,t]} \, \left(f_n(t-s) - f(t-s)\right)\, K^n(ds) \right|\, dt \\
        &\quad + \int_0^T \, \left| \int_{[0,t]} \, f(t-s) \, K^n(ds) - \int_0^t \, f(t-s) \, K(s) \, ds\right| \, dt\,.
    \end{align*}
The first term on the right-hand side vanishes as $n$ goes to infinity by definition. The second is upper-bounded by 
$$
\int_0^T\, \int_s^T \, \left|f_n(t-s) - f(t-s)\right|\, dt \, K^n(ds) \leq \int_0^T \, K(s)\,ds \, \int_0^T \, \left|f_n(t) - f(t)\right|\, dt\,,
$$
which also vanishes by convergence in $L^1([0\,,T])\,$. Finally, we introduce the notations
$$
I^n_t := \int_{[0,t]}\,f(t-s)\,K^n(ds)\,, \quad I_t := \int_0^t\,f(t-s)\,K(s)\,ds\,, \quad t \leq T\,, \quad n \geq 1\,.
$$
For any $n \geq 1\,$, $0 < t < T\,$, setting $i := \lfloor nt/T \rfloor\,$, we have
$$
I^n_t = \sum_{j = 0}^{i} \,f(t- t^n_j)\,k^n_j = \sum_{j = 0}^{i}\,\int_{t^n_j}^{t^n_{j+1}}\,f(t - t^n_j)\, K(s)\,ds\,,
$$
and 
$$
I_t = \sum_{j = 0}^{i}\, \int_{t^n_j}^{t^n_{j+1}}\, f(t-s)\,K(s)\,ds - \int_t^{t^n_{i+1}}\,f(t-s)\,K(s)\,ds\,.
$$
Therefore, 
\begin{align*}
    \left|I^n_t - I_t\right| &\leq \sum_{j = 0}^i\, \int_{t^n_j}^{t^n_{j+1}}\, \left|f(t-t^n_j) - f(t-s)\right|\,K(s)\,ds + \int_{t}^{t^n_{i+1}}\, |f(t-s)|\,K(s)\,ds \\
    &\leq w\left(f\,,\frac{T}{n}\right)\, \sum_{j = 0}^i\, \int_{t^n_j}^{t^n_{j+1}}\, K(s)\,ds + \left\|f\right\|_{\infty}\, \int_t^{t^n_{i+1}}\,K(s)\,ds \\
    &\leq w\left(f\,,\frac{T}{n}\right)\, \int_0^T\,K(s)\,ds + \left\|f\right\|_{\infty}\, \int_t^{t^n_{i+1}}\,K(s)\,ds\,.
\end{align*}
Since $f$ is continuous, the modulus vanishes in the limit $n \to \infty\,$. Furthermore, $t^n_{i+1} = \frac{T}{n}\left(1 + \lfloor \frac{nt}{T}\rfloor \right)$ converges to $t\,$. By the dominated convergence theorem $\int_t^{t^n_{\lfloor nt/T\rfloor +1}}\,K(s)\,ds$ goes to $0\,$, as a consequence of $K$ being in $L^1([0\,,T])\,$. We deduce that for any $0 < t <T\,$, $I_t^n$ converges to $I_t$ as $n$ goes to infinity. Using the domination
$$
\left| I_t^n - I_t \right| \leq 2\, \|f\|_{\infty}\, \int_0^T \, K(s)\,ds < + \infty\,, \quad t \leq T\,, \quad n \geq 1\,,
$$
the dominated convergence theorem shows that the third integral on the right-hand side of our first inequality also vanishes in the limit. We conclude that 
$$
\int_0^T\, \left|F(t) - \int_0^t\, f(t-s)\,K(s)\,ds\right|\,dt = 0\,,
$$
and $F(t) = \int_0^{t}\,f(t - s)\,K(s)\,ds$ for $t \leq T$ in virtue of the continuity assumption on $F$ and $f\,$.
\end{proof}
We can now characterize the accumulation points of our sequence of processes.
\begin{lemma}
    \label{lemma:characterization_limit}
    Let $\left(U\,, Z\right)$ be a weak limit of a convergent subsequence $\left(U^{n_k}\,, Z^{n_k}\right)_{k \geq 1}\,$. Then:
    \begin{itemize}
        \item[$\bullet$] $U$ is a continuous, non-decreasing, nonnegative process, starting from 0.
        \item[$\bullet$] $Z$ is a continuous square-integrable martingale with respect to the filtration generated by $\left(U\,,Z\right)\,$, starting from 0, such that 
        $$
        \langle Z \rangle_t = U_t\,, \quad t \leq T. 
        $$
        \item[$\bullet$] The following Volterra equation is satisfied:
        $$
        U_t = \int_0^t\,g_0(s)\,ds + \int_0^t\, K(t-s)\, \left(b\,U_s + c\,Z_s\right)\,ds\,, \quad t \leq T\,, \quad \text{a.s.}
        $$
    \end{itemize}
\end{lemma}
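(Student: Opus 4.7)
The plan is to combine Skorokhod's representation theorem with the estimates on $(U^n, Z^n)$ established earlier, and then identify the limit by preservation of monotonicity under uniform convergence, passage to the limit in the martingale identities for $Z^n$ and $(Z^n)^2 - U^n$ from Lemma~\ref{lemma:martingality}, and application of the stability Lemma~\ref{leamm:stability_convolution} to the measure-kernel reformulation of Lemma~\ref{lemma:convolution_equation}. Since $(U^{n_k}, Z^{n_k})$ converges weakly in $J_1(\mathbb{R}^2)$ to a continuous limit (by the C-tightness of Lemma~\ref{lemma:C_tightness}), Skorokhod's theorem allows me to assume, on a possibly enlarged probability space, that convergence holds almost surely; since $J_1$-convergence to a continuous function is equivalent to uniform convergence on $[0,T]$, I then have $\|U^{n_k} - U\|_\infty + \|Z^{n_k} - Z\|_\infty \to 0$ a.s. As each $U^{n_k}$ is nonnegative, non-decreasing and starts from $0$, the uniform limit $U$ inherits these three properties, yielding~(\textit{i}).

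For~(\textit{ii}), Lemma~\ref{lemma:martingality} shows that $Z^n$ and $(Z^n)^2 - U^n$ are martingales with respect to $(\mathcal{F}^n_t)_{t \leq T}$, hence also with respect to the smaller natural filtration of $(U^n, Z^n)$ by tower conditioning. For $s \leq t$, any finite grid $0 \leq t_1 \leq \cdots \leq t_m \leq s$ and any bounded continuous $\varphi : \mathbb{R}^{2m} \to \mathbb{R}$, I pass to the limit in
$$
\mathbb{E}\bigl[\varphi(U^{n_k}_{t_1}, Z^{n_k}_{t_1}, \ldots, U^{n_k}_{t_m}, Z^{n_k}_{t_m})\, (Z^{n_k}_t - Z^{n_k}_s)\bigr] = 0
$$
and in the analogous identity for $(Z^n)^2 - U^n$. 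Almost sure convergence of all coordinates handles the pointwise limit; the uniform $L^2$-bound~\eqref{eq:bound_exp_Z_2} on $\sup_{t \leq T} |Z^n_t|$ supplies the uniform integrability needed for the first identity, while the second additionally requires a uniform $L^4$-bound on $\sup_{t \leq T} |Z^n_t|$, which I obtain from the BDG inequality combined with~\eqref{eq:control_quad_Z} and~\eqref{eq:bound_exp_U_2}. This yields that $Z$ and $Z^2 - U$ are both martingales in the filtration generated by $(U, Z)$, and the continuity of $Z$ then identifies $\langle Z \rangle = U$.

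For~(\textit{iii}), I apply Lemma~\ref{leamm:stability_convolution} pathwise to the identity from Lemma~\ref{lemma:convolution_equation},
$$
U^{n_k}_t - G_0^{n_k}(t) = \int_{[0,t]} \widetilde Z^{n_k}_{t-s}\, K^{n_k}(ds), \quad t \leq T.
$$
Along the Skorokhod representation, $\widetilde Z^{n_k} = bU^{n_k} + cZ^{n_k}$ converges uniformly, hence in $L^1([0,T])$, to the continuous function $bU + cZ$. The primitive $G_0$ defined in~\eqref{eq:def_G_0} is continuous on $[0,T]$ by absolute continuity of the integral, and $G_0^{n_k}$ converges uniformly to $G_0$ by uniform continuity, so $U^{n_k} - G_0^{n_k}$ converges in $L^1([0,T])$ to the continuous function $U - G_0$. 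Lemma~\ref{leamm:stability_convolution} then delivers $U_t - G_0(t) = \int_0^t K(t-s)\,(bU_s + cZ_s)\,ds$ almost surely, completing the proof. The main technical hurdle is securing enough uniform integrability to pass to the limit in the identity for $(Z^n)^2 - U^n$, where the uniform $L^4$-control on $\sup_{t \leq T} |Z^n_t|$ is essential.
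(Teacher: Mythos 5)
Your proposal is correct and follows essentially the same route as the paper: Skorokhod representation plus C-tightness to get a.s.\ uniform convergence, inheritance of monotonicity, uniform $L^2$ and $L^4$ integrability to pass the martingale properties of $Z^{n}$ and $(Z^{n})^2 - U^{n}$ to the limit, and the stability Lemma~\ref{leamm:stability_convolution} applied pathwise to the measure-kernel identity of Lemma~\ref{lemma:convolution_equation}. The only (cosmetic) difference is that you spell out the finite-dimensional test-function argument for martingale preservation under weak limits, whereas the paper invokes \citet[Theorem~5.3]{whitt2007proofs} for the same conclusion.
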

\begin{proof}
    By Skorokhod's representation theorem \citep[Theorem~1.6.7]{billingsley2013convergence}, there exists càdlàg processes $\left(\mathfrak{U}^k\,, \mathfrak Z^k\right)_{k \geq 1}$ and $\left(\mathfrak U\,, \mathfrak Z\right)$ all defined on a common probability space $\left(\Omega'\,, \mathcal F'\,, \mathbb P'\right)\,$, such that:
    \begin{itemize}
        \item[$\bullet$] For any $k\geq 1\,$, $\left(U^{n_{k}}\,, Z^{n_{k}}\right) \sim \left( \mathfrak U^k \,, \mathfrak Z^k\right)\,$, and $\left(U\,, Z\right) \sim \left(\mathfrak U\,, \mathfrak Z\right)\,$.
        \item[$\bullet$] For any $\omega \in \Omega'\,$, $\left(\mathfrak U^k\,, \mathfrak Z^k\right)(\omega) \overset{J_1\left(\mathbb{R}^2\right)}{\longrightarrow} \left(\mathfrak U\,, \mathfrak Z\right)(\omega)\,$.
    \end{itemize}
    Since $\left(U^n\right)_{n \geq 1}$ and $\left(Z^n\right)_{n \geq 1}$ are both C-tight, $U$ and $Z$ are almost surely continuous, and therefore this is also the case for $\mathfrak U$ and $\mathfrak Z\,$.~Additionally, for any $n \geq 1\,$, $U^n$ is almost surely non-decreasing, nonnegative, and starts from $0\,$, so that these properties are also satisfied by $\mathfrak U^k$ for $k \geq 1\,$. Furthermore, $J_1$-convergence to a continuous function is equivalent to local uniform convergence, see \citet[Proposition~VI.1.17.b]{jacod2013limit}.~We deduce that $\mathfrak U$ is almost surely non-decreasing, nonnegative and starting from $0\,$. This is also the case for $U\,$, since $U \sim \mathfrak U\,$. The same reasoning shows that $Z$ almost surely starts from $0\,$. Additionally, we have,
    $$
    \mathfrak U^k_t(\omega) = G_0^k(t) + \int_{[0,t]}\, \left(b\, \mathfrak U^k_{t-s}(\omega) + c\, \mathfrak Z^k_{t-s}(\omega)\right)\, K^n(ds)\,, \quad t \leq T\,, \quad k \geq 1\,,
    $$
thanks to Lemma \ref{lemma:convolution_equation}, along with the càdlàg property for $\left(\mathfrak U^k(\omega)\,, \mathfrak Z^k(\omega)\right)_{k \geq 1}\,$, for any $\omega$ in $\mathcal A \in \mathcal F'$ of unit measure.~Therefore, taking $\left(F_k\right)_{k \geq 1} := \left(\mathfrak U^k(\omega) - G_0^k\right)_{k \geq 1}$\footnote{By the dominated convergence theorem, $G_0^k(t)$ converges to $G_0(t)$ for any $t \leq T\,$, and is dominated by $G_0(T)\,$. Thus, $G_0^k$ converges to $G_0$ in $L^1([0\,,T])\,$.} and $\left(f_k\right)_{k \geq 1} := \left(b\, \mathfrak U^k(\omega) + c\, \mathfrak Z^k(\omega)\right)_{k \geq 1}\,$, in Lemma \ref{leamm:stability_convolution}, since the $J_1$ topology is stronger than the $L^1$ topology \citep[Theorem~11.5.1]{whitt2002stochastic}, we obtain 
$$
\mathfrak U_t(\omega) = \int_0^t\, g_0(s)\,ds + \int_0^t\, K(t-s)\, \left(b\, \mathfrak U_s(\omega) + c\, \mathfrak Z_s(\omega) \right)\, ds\,, \quad t \leq T\,,
$$
for any $ \omega \in \mathcal A\,\cap\,\left\{\omega\,:\, \text{$\mathfrak U(\omega), \mathfrak Z(\omega)$ continuous} \right\}.$
Since $\left(U\,,Z\right) \sim \left(\mathfrak U\,, \mathfrak Z\right)\,$,
$$
U_t = \int_0^t \, g_0(s)\,ds + \int_{0}^t\, K(t-s)\, \left(b\,U_s + c\, Z_s\right)\, ds \,, \quad t \leq T\,,
$$
almost surely.

It remains to characterize the limiting process $Z\,$. For any $n \geq 1\,$, $Z^n$ is a martingale with respect to $\left(\mathcal F^n_t\right)_{t \leq T}\,$ generated by $U^n$ (see \eqref{eq:def_continuous_filtration}). Since for $i \leq n-1\,$, Algorithm \ref{alg:simulation} defines $\widehat Z^n_{i,i+1}$ as a linear combination of the $\left(\widehat U^n_{j,j+1}\right)_{j \leq i}\,$, in our case the filtration generated by $U^n$ coincides with the one generated by $\left(U^n\,, Z^n\right)\,$. Thus, $Z^n$ is also a martingale with respect to the latter filtration. According to \citet[Theorem~5.3]{whitt2007proofs}\footnote{The martingale property is proved for continuity points of the limit. Since our limits are almost surely continuous, it shows full martingality in our case.}, if we prove that $\left(Z^n_t\right)_{n \geq 1}$ is uniformly integrable for any $t \leq T\,$, then the weak limit $Z$ is a martingale with respect to the filtration generated by $\left(U,Z\right)\,$. Using \eqref{eq:bound_exp_Z_2}, we have
    $$
    \sup_n \,\sup_{0 \leq t \leq T}\, \mathbb{E}\left[ \left|Z^n_t\right|^2\right] \leq C_3\,G_0(T) < + \infty\,,
    $$
    which shows that $\left(Z^n_t\right)_{ t \leq T, n \geq 1}$ is uniformly integrable. Therefore $Z$ is a martingale with respect to the filtration generated by $\left(U\,, Z\right)\,$. Furthermore, 
    $$
    \mathbb{E}\left[\sup_{0 \leq t \leq T}\left|Z^n_t\right|^4\right] \leq C_5 \, \mathbb{E}\left[\left[Z^n\right]_T^2\right] \leq 25\, C_5\, \mathbb{E}\left[\left(U^n_T\right)^2\right]\,, \quad n \geq 1\,,
    $$
    using respectively the BDG inequality on $\left|Z^n\right|^4$ (see \eqref{eq:bdg_Z4})  and \eqref{eq:control_quad_Z}. Thus,
    $$
    \sup_n\, \sup_{0 \leq t \leq T}\,\mathbb{E}\left[ \left|\left(Z^n_t\right)^2 - U^n_t\right|^2\right] \leq 2\, \left(1 + 25\, C_5\right)\, \sup_n \,\mathbb{E}\left[\left(U^n_T\right)^2\right] < + \infty\,,
    $$
    thanks to \eqref{eq:bound_exp_U_2}. This proves that $\left(\left(Z^n_t\right)^2 - U^n_t\right)_{t \leq T, n \geq 1}$ is uniformly integrable, and since $\left(Z^n\right)^2 - U^n$ is a martingale for each $n \geq 1\,$, with respect to the filtration generated by $\left(U^n\,,Z^n\right)$ using the same reasoning, we deduce that $Z^2 - U$ is also a martingale with respect to the filtration generated by $\left(U\,,Z\right)$. This first proves that $Z$ is square-integrable, and by continuity and therefore predictablity of $U$, we deduce that 
    $$
    \langle Z \rangle_t = U_t \,, \quad t \leq T\,.
    $$
\end{proof}

We can now conclude with the proof of Theorem \ref{theorem:weak_convergence}.
\begin{proof}[Proof of Theorem \ref{theorem:weak_convergence}]
The result is direct from the joint tightness given by Remark \ref{remark:joint_tightness} and the characterization of the accumulation points of Lemma \ref{lemma:characterization_limit}.
\end{proof}

\appendix
\section{Inverse Gaussian and its sampling}\label{A:IG}

The Inverse Gaussian distribution, also known as the Wald distribution, is a continuous probability distribution on $\R_+$ with two parameters:    $ \mu > 0$ (mean parameter) and $\lambda > 0$ (shape parameter). The probability density function  of the Inverse Gaussian distribution is given by:
\[
f(x ; \mu, \lambda) = \sqrt{\frac{\lambda}{2\pi x^3}} \exp \left( -\frac{\lambda (x - \mu)^2}{2\mu^2 x} \right), \quad x > 0.
\]
We  denote  $X\sim IG(\mu, \lambda)$ to indicate that $X$ is a random variable with an inverse Gaussian distribution with mean parameter $\mu$ and shape parameter $\lambda$. Its characteristic function is given by 
\begin{align}\label{eq:IGchar}
\mathbb E \left [ \exp\left( wX\right) \right]= \exp\left(\frac{\lambda}{\mu} \left(1 - \sqrt{1 - \frac{2w\mu^2 }{\lambda}}\right)\right), 
\end{align}
for all $w\in \mathbb C$ such that  $\Re(w)\leq 0$.  In addition, it admits moments of any order, satisfying
\begin{equation}
\label{eq:IGmean}
    \mathbb{E}\left[X\right] = \mu\,, \quad \text{and} \quad \mathbb{E}\left[X^n\right] = (2n - 3)\, \frac{\mu^2}{\lambda}\,\mathbb{E}\left[X^{n-1}\right] + \mu^2\, \mathbb{E}\left[X^{n-2}\right]\,, \quad n \geq 2\,.
\end{equation}

Inverse Gaussian random variables can be simulated easily using one Gaussian random variable and one Uniform random variable using  an acceptance-rejection step as shown in \citet*{michael1976generating}, we recall the algorithm here.

\begin{algorithm}[H]
\caption{Sampling from the Inverse Gaussian Distribution}\label{alg:IG_sampling}
\begin{algorithmic}[1]
\State \textbf{Input:} Parameters \( \mu > 0 \), \( \lambda > 0 \).
\State \textbf{Output:} Sample \( IG \) from the Inverse Gaussian distribution.

\State   
Generate \( \xi \sim \mathcal{N}(0, 1) \) and compute \( Y = \xi^2 \).

\State Compute the candidate value \( X \):  
$$
X = \mu + \frac{\mu^2 Y}{2\lambda} - \frac{\mu}{2\lambda} \sqrt{4\mu\lambda Y + \mu^2 Y^2}.
$$

\State Generate a uniform random variable:  
Sample \( \eta \sim \text{Uniform}(0, 1) \).

\State Select the output:  
\If{$ \eta \leq \frac{\mu}{\mu + X} $}
    \State Set the output \( IG = X \).
\Else
    \State Set \( IG = \frac{\mu^2}{X} \).
\EndIf
\end{algorithmic}
\end{algorithm}

\section{Volterra Heston's characteristic function}\label{S:Heston}

We recall the expression of the joint characteristic function of the `integrated variance' $U$ and $\log S$ in the Volterra Heston model in \eqref{eq:HestonS} with $U$ given by \eqref{eq:U0}.  Let $v,w \in  \mathbb{C}$ be such that
\begin{equation}\label{eq:hestonchar}
    \Re w \leq 0 \quad \text{and}\quad  0 \leq \Re v\leq 1.
\end{equation}
The joint conditional characteristic function of $(\log S, U)$ is given by 
\begin{equation} \label{eq:hestonchar}
 \mathbb{E} \left[ \left. \exp\left(  v \log \frac{S_T}{ S_t} + w U_{t,T}\right) \right| \mathcal F_t \right] = \exp\left(  \int_t^T\!  \! F(\psi(T-s))dG_t(s) \right), \quad t \leq T,
\end{equation}
 where $G_t$ is given by \eqref{eq:Gs} and  $\psi$ is the solution of the following Riccati Volterra equation 
\begin{align}
	\psi(t)&=\int_0^t K(t-s)  F(\psi(s))ds, \quad t \geq 0, \label{eq:Ric1} \\
	F(u)&= w + \frac 12 (v^2 -v) +  (\rho c v  +b)u + \frac {c^2} 2 u^2,  \label{eq:Ric2}
	\end{align}
    and satisfies $      \Re \psi \leq 0. $  
    
We refer to \cite[Theorem 7.3]{abi2021weak} for the case of $L^1_{\rm loc}$ kernels, and \cite[Theorem 7.1]{abijaber2019affine}  for the equivalent formulations in the specific (and more standard) case of $L^2_{\rm loc}$ kernels.

\bibliographystyle{plainnat}
\bibliography{bibl}

\end{document}